\documentclass{article}
\usepackage[utf8]{inputenc}
\oddsidemargin -0.2in \textwidth 6.8in \topmargin -0.5in
\textheight9.3in

\usepackage{amssymb}
\usepackage[T1]{fontenc}
\usepackage{amsmath}
\usepackage{amsthm}
\usepackage{xcolor}
\usepackage{graphicx}

\usepackage[caption=false]{subfig} 

\usepackage{enumerate}
\usepackage[shortlabels]{enumitem}
\usepackage{hyperref} 
\newtheorem{definition}{Definition}[section]
\newtheorem{remark}{Remark}[section]
\newtheorem{lemma}{Lemma}
\newtheorem{theorem}[lemma]{Theorem}
\newtheorem*{theoremA}{Theorem A}
\newtheorem*{theoremB}{Theorem B}
\newtheorem{proposition}[lemma]{Proposition}

\def\R{\mathbb{R}}

\DeclareMathOperator{\tr}{tr}

\title{The impact of high frequency-based stability on the onset of action potentials in neuron models\footnotemark[1]}
\author{Eduardo Cerpa\footnotemark[3] , Nathaly Corrales\footnotemark[7], \,Mat\'ias Courdurier\footnotemark[4],\\
\, Leonel E. Medina\footnotemark[6], \, Esteban Paduro\footnotemark[3]\,\,\footnotemark[2]}
\date{February 2024}

\begin{document}
\maketitle

\footnotetext[1]{This work has been partially supported by ANID Millennium Science Initiative
Program through Millennium Nucleus for Applied Control and Inverse Problems NCN19-161. 
}
\footnotetext[2]{Corresponding author: Esteban Paduro. Email: {\tt esteban.paduro@mat.uc.cl}}

\footnotetext[3]{Instituto de Ingenier\'ia Matem\'atica y Computacional, Facultad de Matem\'aticas, Pontificia Universidad Cat\'olica de Chile,  Avda. Vicu\~na Mackenna 4860, Santiago, Chile.
}
\footnotetext[4]{Departamento de Matem\'atica, Facultad de Matem\'aticas, Pontificia Universidad Cat\'olica de Chile. Avda. Vicu\~na Mackenna 4860, Santiago, Chile.
}
\footnotetext[6]{Departamento de Ingenier\'ia Inform\'atica, Universidad de Santiago de Chile, Avda. V\'ictor Jara 3659, Santiago, Chile. 
}
\footnotetext[7]{Departamento de Matem\'aticas, Universidad T\'ecnica Federico Santa Mar\'ia, Avda. Vicu\~na Mackenna 3939, Santiago, Chile.
}

\begin{abstract}
This paper studies the phenomenon of conduction block in model neurons using high-frequency biphasic stimulation (HFBS). The focus is investigating the triggering of undesired onset action potentials when the HFBS is turned on. The approach analyzes the transient behavior of an averaged system corresponding to the FitzHugh-Nagumo neuron model using Lyapunov and quasi-static methods. The first result provides a more comprehensive understanding of the onset activation through a mathematical proof of how to avoid it using a ramp in the amplitude of the oscillatory source. The second result tests the response of the blocked system to a piecewise linear stimulus, providing a quantitative description of how the HFBS strength translates into conduction block robustness. The results of this work can provide insights for the design of electrical neurostimulation therapies.
\end{abstract}

\vspace{0.2 cm}

{\bf Keywords:} FitzHugh-Nagumo equation, averaging, neurostimulation, quasi-static steering, conduction block

\vspace{0.2 cm} {\bf AMS subject classifications 2020:} 37N25, 92-10.

\section{Introduction}
In neuroscience, an action potential is a rapid rise and fall of the membrane voltage that travels along a neuron. Action potentials are considered the fundamental unit of communication of the nervous system, but in some cases, their presence is undesirable, \textit{e.g.}, abnormal activity resulting in pain \cite{neudorferKilohertzfrequencyStimulationNervous2021a}. Conduction block is a phenomenon in which action potentials are prevented from traveling along a nerve fiber, thereby canceling such unwanted activity. This can be achieved via the application of a direct current (DC) signal \cite{bhadraDirectCurrentElectrical2004} or high-frequency biphasic stimulation (HFBS), typically in the kilohertz range \cite{tanner_reversible_1962}.  A better understanding of the conduction block phenomenon is not only relevant from the theoretical point of view but also for medical applications \cite{taiSimulationAnalysisConduction2005a, kilgore_nerve_2004, Bhadra2007}. In particular, a collateral effect of this technique that is not completely understood is a finite burst of action potentials that appear immediately after the HFBS is turned on, called ``onset response''. In this paper, we advance the understanding of the conduction block phenomenon based on its relation with notions of stability for a corresponding averaged system. This interpretation adds to the existing biophysical explanations known in the literature \cite{ackermannDynamicsSensitivityAnalysis2011}. 

The conduction block phenomenon has been studied using varied neuron models \cite{Weinberg2013, Ratas2012,taiSimulationAnalysisConduction2005a,milesEffectsRampedAmplitude2007a}. In this study, we seek explicit calculations; hence, we choose a tractable model known as the FitzHugh-Nagumo (FHN) system \cite{FitzHugh1961}
\begin{equation}\label{FHN_hf_intro}
\left\{
\begin{array}{rl}
\dot{v}&= v -v^3/3 - w +I(t),\\
\dot{w}&= \varepsilon(v- \gamma w + \beta),\\
v(0) &= v_0,\quad w(0) = w_0,
\end{array}\right.
\end{equation}
with an input current 
\begin{equation}\label{current_no_slope}
I(t) = I_0 + \rho \omega \cos(\omega t),
\end{equation}
where $v$ is the membrane voltage and $w$ is a recovery variable. 

From a mathematical perspective, it is natural to study HFBS conduction block for the FHN system via the method of averaging \cite{khalil2013}.  Recently, it was established that there is a threshold for the DC component, $I_0$, that depends on the amplitude $\rho$ of the oscillatory input, above which persistent excitation is observed, whereas, for sub-threshold stimuli, only a finite number of action potentials appear  \cite{Weinberg2013}. A limitation of this analysis is that only long-term behavior can be predicted by looking at the stability of critical points for the averaged system. This work extends these ideas to analyze the transient behavior of the system, \textit{i.e.}, when the onset response is observed. In medical applications, the onset response to HFBS  is undesirable because it may cause pain or discomfort \cite{neudorferKilohertzfrequencyStimulationNervous2021a}. To avoid the onset response, several authors have considered the use of a gradual increase of the HFBS amplitude \cite{taiSimulationAnalysisConduction2005a, milesEffectsRampedAmplitude2007a, vrabecReductionOnsetResponse2019, gergesFrequencyAmplitudetransitionedWaveforms2010, ackermannConductionBlockWhole2010,vrabecNovelWaveformNoOnset2013}. To understand the potential benefits of such a strategy, we study the transient response of system \eqref{FHN_hf_intro} to ramped HFBS. We consider waveforms as illustrated in Figure \ref{figure_input_current}:
A ramped HFBS is used to avoid action potentials, and a ramped DC signal is used to test the conduction block of the system. The main results of this work can be roughly stated as follows.

\begin{theoremA}[Onset effects can be avoided using ramped HFBS]
Under suitable stability conditions, it is possible to avoid the onset response by slowly increasing the HFBS amplitude.
\end{theoremA}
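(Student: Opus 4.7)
The plan is to decouple the three time scales present in the problem: the fast oscillation at frequency $\omega$, the intrinsic FHN dynamics, and the slow ramp of the HFBS amplitude $\rho$. First I would perform a high-frequency averaging via the change of variables $v = \bar v + \rho\sin(\omega t)$, which absorbs the $\rho\omega\cos(\omega t)$ forcing into the fast component of $v$. Substituting into \eqref{FHN_hf_intro} produces a non-autonomous planar system whose right-hand side is $2\pi/\omega$-periodic in $t$; averaging over the fast period gives an autonomous system for $(\bar v, w)$ with $\rho$-dependent nonlinearity (the cubic term generates an effective damping proportional to $-\rho^{2}\bar v/2$, which is the standard high-frequency stabilization effect). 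Classical averaging theorems then control the error between the original trajectory and the averaged one by $O(1/\omega)$ on bounded sets.

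Next I would study the equilibrium $X^\ast(\rho) = (\bar v^\ast(\rho), w^\ast(\rho))$ of the frozen averaged system as a function of $\rho$, and phrase the ``suitable stability condition'' as exponential stability of $X^\ast(\rho)$ for every $\rho$ in the ramp interval $[0,\rho_{\max}]$. For each such $\rho$ I would construct a quadratic Lyapunov function $V(X;\rho) = \tfrac12\langle X - X^\ast(\rho),\, P(\rho)(X - X^\ast(\rho))\rangle$, with $P(\rho)$ solving the Lyapunov equation for the linearization at $X^\ast(\rho)$. Continuity in $\rho$ and compactness of $[0,\rho_{\max}]$ yield uniform constants $c_{1},c_{2},c_{3}>0$ such that $c_{1}\|X - X^\ast\|^{2}\le V\le c_{2}\|X - X^\ast\|^{2}$ and $\dot V\le -c_{3}V$ along the frozen averaged dynamics.

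With the Lyapunov function in hand, for a ramped amplitude $\rho = \rho(t)$ I would compute $\dot V$ along trajectories of the non-autonomous averaged system. The derivative splits into three pieces: the frozen part, bounded by $-c_{3}V$; an $O(|\dot\rho|)$ term coming from the $t$-dependence of $X^\ast(\rho)$ and $P(\rho)$; and an $O(1/\omega)$ term carrying the averaging remainder. Choosing the ramp slow enough and the frequency large enough gives an inequality of the form $\dot V \le -\tfrac{c_{3}}{2}V + K(|\dot\rho|+1/\omega)$, and a Gr\"onwall argument forces $V(t)$ to remain arbitrarily small uniformly in $t$. Translating back through the averaging substitution, the original trajectory stays in a tube of radius $O(\rho + |\dot\rho|^{1/2}+\omega^{-1/2})$ around the slowly moving equilibrium and therefore never performs the large excursion that constitutes an action potential; this is the absence of the onset response.

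The main obstacle I foresee is that the two asymptotic regimes $\omega\to\infty$ and $|\dot\rho|\to 0$ must cooperate uniformly over the \emph{long} horizon $[0,\rho_{\max}/|\dot\rho|]$ on which the ramp takes place, since the averaging error accumulates precisely over this same window; the two small parameters therefore need to be coupled (one should expect a hypothesis along the lines of $\omega|\dot\rho|\to\infty$ in a quantitative sense). A secondary difficulty is verifying that the ramp path avoids any Hopf bifurcation of the averaged system, because a loss of stability anywhere along the way would immediately reintroduce spikes; this is the reason the stability condition must be imposed along the whole ramp, not just at its endpoints.
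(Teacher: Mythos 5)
Your proposal follows essentially the same route as the paper: high-frequency (partial) averaging to extract the effective $-\rho^{2}v/2$ stabilization, a quasi-static Lyapunov function $x^{T}P(\alpha)x$ built from the frozen linearizations along the ramp and made uniform by compactness (this is exactly Lemma \ref{lemma_lyapunov_nonlinear} and Proposition \ref{prop_slope_freq}), and a Gr\"onwall-type closing in which the ramp slope and the frequency must be coupled --- the paper's $\omega_{0}(\lambda)$ is precisely the quantitative coupling you anticipate, and Condition \ref{conditionC1} is your ``stability along the whole ramp path.'' The one place where your sketch, taken literally, would need repair is the claim that $\dot V$ picks up a pointwise $O(1/\omega)$ term from the averaging remainder: the remainder in the equation for the averaged variables is oscillatory of amplitude $O(1)$ and only becomes $O(1/\omega)$ after integration, which is why the paper controls it not through the Lyapunov derivative but through a separate fixed-point argument on the Duhamel formulation of the error equation (Lemma \ref{lemma_linear_error_abs}), gaining the factor $1/\omega$ by integration by parts uniformly over the long horizon $[0,1/\lambda]$.
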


\begin{theoremB}[Transient response of the blocked system to DC stimulation]
Consider an FHN neuron where HFBS has been applied for a sufficiently long time. Suppose a DC stimulus is applied to this FHN neuron and does not generate \textit{persistent} action potentials because the HFBS is strong enough. Then, if the DC term is applied with its amplitude increasing as a ramp, such a ramp has a maximum slope that results in no action potential in response to the stimulus.
\end{theoremB}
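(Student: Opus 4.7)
The plan is to decompose the argument into two nested approximations: first, a high-frequency averaging step that replaces \eqref{FHN_hf_intro} with an autonomous (in the fast variable) averaged system depending parametrically on the slowly varying DC amplitude, and second, a quasi-static / slow-variation analysis applied to the resulting slow system. By hypothesis, once HFBS has been turned on for sufficiently long the trajectory sits close to a stable equilibrium of the averaged system at parameter value $(I_0^{\text{init}},\rho)$. The question then becomes whether, as $I_0$ is ramped up from $I_0^{\text{init}}$ to some terminal value $I_0^{\text{fin}}$, the state can be made to follow the corresponding curve of quasi-static equilibria without leaving the basin of attraction that prevents excitation.

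First I would write $I(t) = I_0(t) + \rho\omega\cos(\omega t)$ where $I_0(t)$ is the ramp, and perform the same averaging change of variables used in the analysis of \eqref{FHN_hf_intro}. This produces an averaged slow system of the form $\dot V = F(V,W,I_0(t),\rho)$, $\dot W = \varepsilon(V - \gamma W + \beta)$, whose right-hand side depends on time only through the slow input $I_0(t)$. The hypothesis that HFBS prevents persistent action potentials translates into the averaged system having, for every $I_0 \in [I_0^{\text{init}}, I_0^{\text{fin}}]$, a unique stable equilibrium $(V^*(I_0),W^*(I_0))$ (the sub-threshold branch) that depends smoothly on $I_0$ and stays away from the boundary of its basin of attraction, uniformly in $I_0$. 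This uniformity is where I expect to use the ``HFBS is strong enough'' hypothesis, since large $\rho$ flattens the averaged nullcline and moves the equilibrium away from the knee where action potentials are triggered.

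Second, I would construct a common Lyapunov function $\mathcal{L}(V,W;I_0)$ for the frozen averaged system at each $I_0$, for instance a quadratic form around $(V^*(I_0),W^*(I_0))$ adapted to the linearization there. Differentiating $\mathcal{L}$ along trajectories of the time-varying averaged system yields a bound of the form
\begin{equation*}
\dot{\mathcal{L}} \le -\alpha\, \mathcal{L} + C \,|\dot I_0(t)| \sqrt{\mathcal{L}},
\end{equation*}
where the $\alpha$ comes from frozen-parameter stability and $C\,|\dot I_0|$ comes from the drift of the equilibrium. A standard Gr\"onwall-type estimate then shows that $\mathcal{L}(t)$ remains below a prescribed threshold $\mathcal{L}_{\max}$ (chosen small enough that the sub-level set lies strictly inside the basin of attraction, uniformly in $I_0$) provided $\|\dot I_0\|_\infty \le s_{\max}$ for some explicit $s_{\max}>0$. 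This $s_{\max}$ is the maximum slope claimed in the theorem.

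Finally, I would translate the result back to the original system \eqref{FHN_hf_intro}. Here the classical averaging theorem gives an $O(1/\omega)$ error between the true and averaged trajectories on any bounded time interval; by choosing $\omega$ large (consistent with the HFBS regime already required in Theorem A) this error can be absorbed into the margin between $\mathcal{L}_{\max}$ and the basin boundary, so that the true trajectory also avoids the firing threshold and no action potential is produced. The main obstacle I anticipate is the second step: establishing uniform stability of the averaged equilibrium along the entire ramp with quantitative control of the basin, because the FHN nullclines deform with $I_0$ and one must ensure the quasi-static path never approaches a fold of the $V$-nullcline. I would handle this by tying $s_{\max}$ explicitly to the spectral gap of the linearization and to the distance from the equilibrium curve to the fold, both of which can be bounded below under the ``HFBS strong enough'' assumption.
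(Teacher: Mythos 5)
Your plan is essentially the paper's proof: partial averaging to obtain the slow system \eqref{FHN_pas}, a parameter-dependent quadratic Lyapunov function $x^TP(\alpha)x$ around the quasi-static equilibrium branch combined with a Gr\"onwall-type bound to extract a maximum slope (Lemma \ref{lemma_lyapunov_nonlinear} and Proposition \ref{lemma_slow_steer}), and an $O(1/\omega)$ averaging error estimate to return to \eqref{FHN_hf_intro}, with the ``HFBS strong enough'' hypothesis entering exactly as you anticipate via the uniform stability condition $v_0(\sigma)^2-1+\rho^2/2>c$. The only place the paper works harder than your sketch is the last step: since the ramp interval $[0,1/\delta]$ grows as $\delta\to 0$, the naive finite-time averaging constant degenerates, so the paper proves the error bound by a fixed-point argument using integral (rather than uniform) bounds on the oscillatory linear term, and then handles $t>1/\delta$ separately with the classical averaging theorem for the now-autonomous, exponentially stable system.
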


The precise statements and hypotheses in these results will be given in Theorem \ref{thm_slow_steering1} and Theorem \ref{thm_slow_steering2}, respectively. The importance of Theorem A is that it supports existing experimental results and provides additional tools to understand this phenomenon. Theorem B deals with the robustness of the HFBS conduction block by exploring how fast we can change the DC term and still avoid action potentials.

\subsection{Main Results}

We proceed to a more precise description of our results.
As will be detailed later, we can recast our problem as two steering problems:

\begin{figure}
\centering
\includegraphics[width=0.8\linewidth]{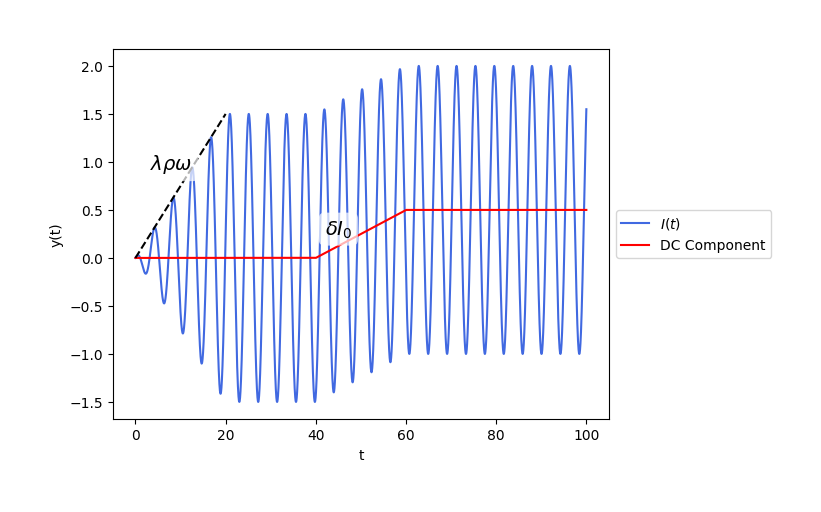}
\caption{Example of input current described in \eqref{input_current_slope_general1} and \eqref{input_current_slope_general2} during $T= 100$. In this case,  $\omega = 1.5$, $\rho = 1$, $I_0 = 0.5$,  $\lambda = 0.05$, $\delta = 0.05$. Current $I_1(t)$ is applied until $t =$ 40, followed by the application of current $I_2(t)$ until $t =$ 100.}\label{figure_input_current}
\end{figure}

\begin{itemize}
\item {\bf Steering Problem 1:} Find a value of $\lambda >0$ so that by applying a current
\begin{equation}\label{input_current_slope_general1}
I_1(t) = 
\begin{cases}
t \lambda \omega \rho \cos(\omega t) &,\quad  0\leq t \leq 1 /\lambda,\\
 \omega \rho \cos(\omega t) &,\quad  t >1/\lambda ,
\end{cases}
\end{equation}
we can steer the solutions of system \eqref{FHN_hf_intro} from a neighborhood of $P_0=(v_0,w_0)$ given by
\begin{equation}\label{definition_P0}
\left\{
\begin{array}{rl}
0&= v_0- v_0^3/3-w_0,\\
0&= v_0 - \gamma w_0 + \beta,
\end{array}
\right.
\end{equation}
to an oscillating trajectory centered at the point $P_1=(v_1,w_1)$ given by
\begin{equation}\label{definition_P1}
\left\{
\begin{array}{rl}
0&= (1-\rho^2/2)v_1- v_1^3/3-w_1, \\
0&= v_1- \gamma w_1 + \beta,
\end{array}
\right.
\end{equation}
without generating action potentials. 

\item {\bf Steering problem 2:} Find a value of $\delta >0$ so that by applying a current
\begin{equation}\label{input_current_slope_general2}
I_2(t) = 
\begin{cases}
\delta t I_0 + \rho \omega \cos(\omega t)&,\quad  t \leq 1/\delta,\\
I_0 + \rho \omega \cos(\omega t)&, \quad t >  1/\delta,
\end{cases}
\end{equation}
we can steer the solutions of system \eqref{FHN_hf_intro} from an oscillating trajectory centered at $P_1$ given by \eqref{definition_P1} to an oscillating trajectory centered at  $P_2=(v_2,w_2)$ given by
\begin{equation}\label{definition_P2}
\left\{
\begin{array}{rl}
0&= (1-\rho^2/2)v_2- v_2^3/3-w_2 + I_0,\\
0&= v_2 - \gamma w_2 + \beta,
\end{array}
\right.
\end{equation}
without generating action potentials.

\end{itemize}

Our work presents two main contributions. First, we propose a new explanation for the onset response in the conduction block phenomenon and how to avoid it using ramped HFBS.
Second, we give a better understanding of the transient effects of DC stimuli when the input currents \eqref{input_current_slope_general1} and \eqref{input_current_slope_general2} are considered.

Our results are based on averaging techniques to separate the problem into fast and slow timescales. 
The averaging method used here builds upon our recent work \cite{cerpa_partially_2022}, albeit more subtle since a global-in-time argument is needed. In the slow timescale, the partially averaged system gives a non-autonomous system, 
which is critical to understanding the effects of the DC component of the input signal in steering problem 2. Further, to study the steering part of the problem, we argue that for sufficiently small slopes, the problem can be analyzed using quasi-static steering arguments \cite{MR2086173, khalil2013}, which require some assumptions on the parameters so that the time-frozen systems satisfy some stability conditions. 

The method of quasi-static steering indicates that whenever the ramp parameters $\lambda$ and  $\delta$ are suitably small, we can compare the solutions of system \eqref{FHN_hf_intro} with the solution of a more straightforward algebraic system. Since we work with functions that include a ramp in one of the factors, it is helpful to introduce the following notation.

\begin{definition}
We denote by $S(t)$ the ramp-type function given by 
\begin{equation}\label{cutoff}
S(t) = \begin{cases}
0 &, t<0,\\
t & , 0\leq t \leq 1,\\
1 &, t >1.
\end{cases}
\end{equation}
\end{definition}

\begin{definition}[Approximate solution]\label{defi_algebraic_eqn}
Suppose the parameters $\beta$, $\gamma$, $\rho$, $I_0$ are adequate (such that the  systems \eqref{PAS_slope_freq_instantaneous_alpha} and \eqref{instantaneous_system_alpha} have a unique solution). We define  the real-valued functions
\[
X_1,Y_1,X_2,Y_2: [0,\infty) \to \R ,
\]
given by the solution of the following algebraic equations.
\begin{itemize}
\item For steering problem 1, given $\alpha\geq 0$, the reference trajectory is given by
\begin{equation}\label{PAS_slope_freq_instantaneous_alpha}
\left\{
\begin{array}{rl}
0&= (1 - S(\alpha)^2 \rho^2 /2 )\tilde{X}_1(\alpha)-\tilde{X}_1(\alpha)^3/3  -\tilde{Y}_1(\alpha), \\
0&= \tilde{X}_1(\alpha)- \gamma \tilde{Y}_1(\alpha) + \beta, 
\end{array}\right.
\end{equation}
and the trajectory scaled via a slope parameter $\lambda >0$ as
\begin{equation}\label{PAS_slope_freq_instantaneous}
X_1(t) =\tilde{X}_1(\lambda t),\quad Y_1(t) = \tilde{Y}_1(\lambda t), \quad t \geq 0.
\end{equation}
Note that $(X_1(0),Y_1(0))=(v_0,w_0)$, $(X_1(1/\lambda) , Y_1(1/\lambda) )= (v_1, w_1)$ as defined by \eqref{definition_P0}, \eqref{definition_P1}.
\item For steering problem 2, given $\alpha \geq0$, the reference trajectory is given by
\begin{equation}\label{instantaneous_system_alpha}
\left\{
\begin{array}{rl}
0&= (1-\rho^2/2)\tilde{X}_2(\alpha) -\tilde{X}_2(\alpha)^3/3 - \tilde{Y}_2(\alpha) + S(\alpha)I_0,\\
0&= \tilde{X}_2(\alpha)- \gamma \tilde{Y}_2(\alpha) + \beta, 
\end{array}\right.
\end{equation}
and the trajectory scaled via a slope parameter $\delta >0$ as
\begin{equation}\label{instantaneous_system}
X_2(t) =\tilde{X}_2(\delta t),\quad Y_2(t) = \tilde{Y}_2(\delta t), \quad t \geq 0.
\end{equation}
Note that $(X_2(0),Y_2(0))=(v_1,w_1)$, $(X_2(1/\delta) , Y_2(1/\delta) )= (v_2, w_2)$ as defined by \eqref{definition_P1}, \eqref{definition_P2}.
\end{itemize}
Additionally, we can guarantee that the functions $X_1(t)$, $Y_1(t)$, $X_2(t)$, and $Y_2(t)$ are well-defined and are continuous because of the inverse function theorem.
\end{definition}

Theorem \ref{thm_slow_steering1} establishes precisely that by taking a slight slope $\lambda >0$ we can always reach the maximum amplitude of the HFBS without generating onset action potentials.

\begin{theorem}[Steering problem 1:  slope in the envelope of the HFBS]\label{thm_slow_steering1}
Let $(\varepsilon, \beta, \gamma, \rho)$  satisfy Condition \ref{conditionC1} and Condition \ref{conditionC3} in Definition \ref{definition_condition_parameters}. Then given $\eta > 0$, there exists $\mu^*,\lambda^* > 0$ such that for all $0<\lambda < \lambda^*$, and all $\omega \geq \omega_0(\lambda)$, for some $\omega_0(\lambda) >0$,  the corresponding solution $(v,w)\in C^1((0,\infty);\R^2)\cap C([0,\infty);\R^2)$ of system \eqref{FHN_hf_intro} with an input current of the form \eqref{input_current_slope_general1} and initial data such that
\[
\|(v(0)-v_0, w(0)-w_0)\|_2 \leq \mu^*,
\]
will satisfy the estimate
\[
\|(v(t) -X_1(t) - \rho~ S(\lambda t) \sin(\omega t), w(t) - Y_1(t)\|_2 \leq \eta,\quad t \geq 0,
\]
where $X_1(t)$ and $Y_1(t)$ are given by Definition \ref{defi_algebraic_eqn}. In other words, for small values of the slope of the envelope of the HFBS, $\lambda>0$, we can guarantee that the solution $(v,w)$ stays sufficiently close to the reference trajectory $(X_1(t) + \rho~ S(\lambda t) \sin(\omega t), Y_1(t))$ and therefore no action potential are generated for any $t \geq 0$. 

\end{theorem}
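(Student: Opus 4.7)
My plan is to combine a high-frequency averaging reduction with a quasi-static tracking argument, so that the two small parameters---the oscillation period $1/\omega$ and the ramp slope $\lambda$---can each be taken small while still yielding a bound valid for all $t\geq 0$.

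\textbf{Step 1 (Absorbing the oscillation).} I would introduce the near-identity change of variables $u(t) = v(t) - \rho\, S(\lambda t)\sin(\omega t)$. Under this substitution the dominant high-frequency forcing $\rho\omega S(\lambda t)\cos(\omega t)$ cancels out of the $\dot v$ equation, leaving
\[
\dot u = u - u^3/3 - w + R_{\omega,\lambda}(t,u),\qquad \dot w = \varepsilon(u-\gamma w+\beta) + \varepsilon\rho\, S(\lambda t)\sin(\omega t),
\]
where $R_{\omega,\lambda}$ collects mean-zero $\omega$-oscillatory terms produced by the cubic expansion (via $\sin^2(\omega t)=(1-\cos(2\omega t))/2$) together with one explicit term of size $O(\lambda)$ coming from $\dot S(\lambda\cdot)\sin(\omega\cdot)$. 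Extracting the non-oscillatory contribution $-\rho^2 S(\lambda t)^2 u/2$ from the cubic yields the partially averaged system
\[
\dot u_a = (1-\rho^2 S(\lambda t)^2/2)\,u_a - u_a^3/3 - w_a,\qquad \dot w_a = \varepsilon(u_a - \gamma w_a + \beta),
\]
which is precisely the differential form of \eqref{PAS_slope_freq_instantaneous_alpha} with slow parameter $\alpha=\lambda t$.

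\textbf{Step 2 (Quasi-static tracking in the slow system).} The trajectory $(X_1(t),Y_1(t))$ is by construction the equilibrium of the partially averaged system frozen at $\alpha=\lambda t$, so $(\dot X_1,\dot Y_1)=O(\lambda)$. Condition \ref{conditionC1} asserts exponential stability of each frozen equilibrium uniformly in $\alpha\in[0,1]$, providing a common Lyapunov function $V$ for $(u_a-X_1,w_a-Y_1)$ whose derivative along the partially averaged flow satisfies $\dot V \leq -cV + C\lambda$. On $[0,1/\lambda]$ this traps $V$ inside the sublevel set $\{V\leq C\lambda/c\}$; once $t\geq 1/\lambda$ the ramp saturates at $\alpha=1$ and Condition \ref{conditionC3} guarantees attractiveness of $P_1$, so the same $V$ continues to decrease and the quasi-static comparison extends to all $t\geq 0$.

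\textbf{Step 3 (Combining averaging and tracking).} To compare the genuine solution with the partially averaged one I would apply a perturbation lemma of averaging type to the difference system for $(u-u_a,w-w_a)$: its forcing is of size $O(1/\omega)+O(\lambda)$ and the unperturbed dynamics are again controlled by $V$, so choosing first $\lambda\leq\lambda^*$ and then $\omega\geq\omega_0(\lambda)$ produces
\[
\|(u(t)-X_1(t),\,w(t)-Y_1(t))\|_2\leq\eta \quad \text{for all } t\geq 0.
\]
The initial-data threshold $\mu^*$ is fixed so that $V$ starts inside a sublevel set contained in the basin of attraction common to every $\alpha\in[0,1]$. Undoing the change of variables yields exactly the claimed estimate.

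\textbf{Main obstacle.} The genuine difficulty is the \emph{global-in-time} nature of the conclusion. Classical quasi-static and averaging theorems (see \cite{khalil2013, MR2086173}) deliver bounds only on intervals of length $O(1/\lambda)$, whereas here the ramp phase $[0,1/\lambda]$ must be joined seamlessly to the autonomous phase $[1/\lambda,\infty)$ without any re-inflation of the error. This forces the selection of a \emph{single} Lyapunov function whose sublevel sets remain strictly inside the basin of attraction of every frozen equilibrium simultaneously, and whose decay rate $c$ dominates both the $O(\lambda)$ drift of the target and the $O(1/\omega)$ oscillatory residue uniformly throughout the ramp. Controlling the handoff at $t=1/\lambda$, where $\dot S$ is discontinuous and the perturbation estimates change character, is where I expect the most delicate accounting.
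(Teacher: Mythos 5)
Your proposal follows essentially the same route as the paper: absorb the oscillation with the near-identity change of variables to obtain the partially averaged system, track the slowly moving frozen equilibrium $(X_1,Y_1)$ on $[0,1/\lambda]$ with a $\lambda t$-dependent quadratic Lyapunov function satisfying $\dot V\leq -cV+C\lambda$, bound the FHN-versus-PAS error by $O(1/\omega)$ uniformly on the ramp with $\omega_0$ depending on $\lambda$, and hand off at $t=1/\lambda$ to the stability of $P_1$ via classical averaging. The only bookkeeping slip is that the attractiveness of $P_1$ after the ramp follows from Condition \ref{conditionC1} at $\alpha=1$ together with the classical averaging theorem, not from Condition \ref{conditionC3}, which is the hypothesis that the FHN--PAS approximation error vanishes as $\omega\to\infty$ on $[0,1/\lambda]$.
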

\begin{remark}
While Conditions \ref{conditionC1} and \ref{conditionC3} are very explicit on what is needed for the result, they are not easy to verify. Proposition \ref{proposition_parameters}  provides a concrete way of ascertaining the conditions. The same is true for the assumptions of Theorem \ref{thm_slow_steering2}.
\end{remark}

\begin{remark}
    While the dependence of the maximum slope in the different parameters of the system is complicated, we will see in Proposition \ref{prop_slope_freq} that we can connect it with quantities related to the system's stability.
\end{remark}

\begin{remark}
Due to the oscillatory nature of the input current, $(v_1, w_1)$ is not an equilibrium point of the system, and therefore, Theorem  \ref{thm_slow_steering1} must be understood as if the system has an attractor near the periodic trajectory
\[
(v_1 + \rho \sin(\omega t), w_1).
\]
\end{remark}

Theorem \ref{thm_slow_steering1} tells us that in addition to having long-time stability (known because of \cite{Weinberg2013}), we can modify the HFBS using a ramp to avoid the onset response.  Moreover, small changes in the proof allow the slope in Theorem \ref{thm_slow_steering1} to be chosen piecewise constant, allowing it to be adjusted dynamically if needed, with the requirement of not being too large (concerning certain stability conditions). 

\begin{theorem}[Steering problem 2: slope in the DC component]\label{thm_slow_steering2}
Let $(\varepsilon$, $\beta$, $\gamma$, $\rho$, $I_0$) satisfy Condition \ref{conditionC2} and Condition \ref{conditionC4} in Definition \ref{definition_condition_parameters}. Then given $\eta > 0$, there exists  $\mu^*, \delta^*, \omega_0 >0$ such that for all $0<\delta < \delta^*$, $\omega \geq \omega_0$
the corresponding solution $(v,w)\in C^1((0,\infty);\R^2) \cap C([0,\infty);\R^2)$ of system \eqref{FHN_hf_intro} with an input current of the form \eqref{input_current_slope_general2}  and initial condition such that
\[
\|(v(0) - v_1, w(0) - w_1)\|_2 \leq  \mu^*,
\]
will satisfy the estimate
\[
\|(v(t) -X_2(t) - \rho \sin(\omega t), w(t) - Y_2(t)\|_2 \leq\eta ,\quad t \geq 0,
\]
where $X_2(t)$ and $Y_2(t)$ are given by Definition \ref{defi_algebraic_eqn}.  In other words, for small values of the slope of the ramp, $\delta >0$, we can guarantee that the solution $(v,w)$, stays sufficiently close to the reference trajectory $(X_2(t) + \rho \sin(\omega t) , Y_2(t))$ and therefore no action potential is generated for any $t \geq 0$.
\end{theorem}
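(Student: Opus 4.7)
The strategy mirrors Theorem~\ref{thm_slow_steering1}: combine high-frequency averaging with a quasi-static steering argument, the only (but crucial) difference being that the ramp now sits in the DC component. As emphasized in the introduction, the main novelty over \cite{cerpa_partially_2022} is that the estimate must hold \emph{globally in time}, so we cannot afford a classical finite-horizon averaging bound whose constant grows with the length of the interval.

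First, I would eliminate the fast forcing via the change of variables $\tilde v(t) = v(t) - \rho\sin(\omega t)$, $\tilde w(t) = w(t)$. Since $\rho\omega\cos(\omega t) = \frac{d}{dt}[\rho\sin(\omega t)]$, the HFBS term disappears from the $v$-equation; expanding $(\tilde v+\rho\sin(\omega t))^3$ and using that $\sin^2(\omega t)$ averages to $1/2$ while $\sin(\omega t)$ and $\sin^3(\omega t)$ have zero mean yields, up to a highly oscillatory remainder $R(t,\tilde v)$ with primitive $O(1/\omega)$, the partially averaged system
\begin{equation*}
\dot{\tilde v} = (1-\rho^2/2)\tilde v - \tilde v^3/3 - \tilde w + S(\delta t) I_0 + R(t,\tilde v),\qquad
\dot{\tilde w} = \varepsilon(\tilde v - \gamma \tilde w + \beta).
\end{equation*}
The averaging lemma from \cite{cerpa_partially_2022} controls the difference between $(\tilde v,\tilde w)$ and the solution of the $R\equiv 0$ system by $O(1/\omega)$.

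Second, I would handle the quasi-static problem for the non-autonomous partially averaged system by means of a time-varying Lyapunov function. The reference $(X_2(t),Y_2(t))$ satisfies \eqref{instantaneous_system_alpha} pointwise, so the error $(e_1,e_2) = (\tilde v - X_2, \tilde w - Y_2)$ solves a perturbation of the frozen dynamics at slow time $\alpha=\delta t$ with an external drift $-(\dot X_2,\dot Y_2)$ of size $O(\delta)$ (inherited from the implicit-function regularity of the reference under Condition~\ref{conditionC4}). Condition~\ref{conditionC2} provides a family of Lyapunov functions $V_\alpha$ with uniform quadratic bounds and uniform decay rate along the frozen flow, so
\[
\frac{d}{dt}V_{\delta t}(e(t)) \leq -c\|e\|^2 + O(\delta)\|e\| + O(1/\omega),
\]
and a standard invariant-set argument confines $e$ to an $O(\delta)+O(1/\omega)$ neighborhood of the origin. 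On the autonomous tail $t>1/\delta$, where $(X_2,Y_2)\equiv (v_2,w_2)$, the same Lyapunov function yields exponential contraction onto the attractor.

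To close the argument, given $\eta>0$ I would choose $\delta^*$ so that the quasi-static contribution is below $\eta/2$, pick $\mu^*$ so that $(e_1(0),e_2(0))$ lies inside the initial sublevel set of $V_0$, and finally take $\omega_0 = \omega_0(\delta)$ large enough to push the averaging remainder below $\eta/2$. The main obstacle is precisely this coupling: on the ramp interval $[0,1/\delta]$, whose length tends to infinity as $\delta\to 0$, a naive averaging bound would blow up. The remedy is to run the averaging step and the Lyapunov step simultaneously, absorbing the high-frequency remainder into the negative-definite term of $\dot V_{\delta t}$; this is what forces $\omega_0$ to depend on $\delta$ rather than be chosen absolutely.
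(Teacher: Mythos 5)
Your overall architecture matches the paper's: a partial-averaging change of variables, a quasi-static Lyapunov argument against the algebraic reference $(X_2,Y_2)$ on $[0,1/\delta]$, a stability argument on the autonomous tail $t>1/\delta$, and a frequency threshold $\omega_0$ depending on $\delta$. The gap is in how you dispose of the high-frequency remainder. After the substitution $\tilde v = v-\rho\sin(\omega t)$ the remainder is \emph{not} an $O(1/\omega)$ forcing: expanding $(\tilde v+\rho\sin(\omega t))^3$ leaves, besides zero-mean source terms, an $O(1)$ oscillatory coefficient multiplying the solution (and hence the error) \emph{linearly} --- this is exactly the term $\varphi_{\textit{fast}}=\frac{\rho^2}{2}\cos(2\omega t)-2\rho V\sin(\omega t)$ in \eqref{defi_varphi12}. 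In your Lyapunov computation this contributes $\varphi_{\textit{fast}}(t)\,e^{T}(PB+B^{T}P)e=O(1)\|e\|^2$ with a sign-indefinite, rapidly oscillating coefficient whose amplitude ($\approx 2|\rho|\sup|V|+\rho^2/2$) can exceed the frozen decay rate $c$, so it cannot be absorbed into $-c\|e\|^2$ pointwise and the claimed inequality $\frac{d}{dt}V_{\delta t}(e)\le -c\|e\|^2+O(\delta)\|e\|+O(1/\omega)$ fails as stated. Exploiting the cancellation requires integrating in time, which in turn requires the error itself to be slowly varying --- a bootstrap your sketch does not set up. The same issue reappears on the tail $t>1/\delta$, where the oscillatory forcing persists and ``the same Lyapunov function yields exponential contraction'' is not immediate; the paper instead invokes the classical averaging theorem there.

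The paper circumvents this by splitting the error through the partially averaged solution $(V,W)$ of \eqref{FHN_pas} via a triangle inequality. The piece $(V-X_2,W-Y_2)$ is non-oscillatory, and that is where the quasi-static Lyapunov lemma (Lemma \ref{lemma_lyapunov_nonlinear}, applied in Proposition \ref{lemma_slow_steer}) is used, exactly as you describe. The piece $(v-\rho\sin(\omega t)-V,\,w-W)$ starts at zero and is estimated by a contraction-mapping argument (Lemma \ref{lemma_linear_error_abs}, applied in Proposition \ref{proposition_error_estimates_slope_DC}) in which the oscillatory linear coefficient only ever appears inside the exponential $e^{\int_\tau^t(\varphi_{\textit{slow}}+\varphi_{\textit{fast}})ds}$; its integral over the whole interval $[0,1/\delta]$ is kept uniformly bounded by the strict negativity of $\varphi_{\textit{slow}}$ (condition (i) of that lemma, verified via Condition \ref{conditionC4}), and the oscillatory sources are integrated by parts against slowly varying amplitudes (Lemma \ref{integral_estimate_slow_varying}). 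This is precisely the global-in-time refinement the paper flags as the point where the earlier averaging lemma does not directly apply. To salvage your one-shot scheme you would need either a second, near-identity change of variables removing the oscillatory linear coefficient before the Lyapunov step, or to reproduce the paper's integral formulation; as written, the central differential inequality is unjustified.
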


We derive a simpler approximate model via a partial averaging argument to prove the result above. Next, using a quasi-static steering argument on the approximated model, we show that introducing a slow ramp-up on the input current can avoid undesired action potentials and that increasing the amplitude of the HFBS allows increasing the maximum slope of this ramp-up. Finally, we prove that the approximation is valid for all time, and consequently, the original model also does not generate undesired action potentials.

\begin{remark}
   Similar to Theorem \ref{thm_slow_steering1}, it is possible to follow the size of the ramp slope along the proof and see in Proposition \ref{lemma_slow_steer} in a more precise way how it is connected to the stability of the system.
\end{remark}

\begin{remark}\label{remark_consideration_phase}
It is important to notice that there is a consideration about the phase at time $t=0$. If, instead, we consider the input current to be
\[
I(t) = \rho \omega \cos(\omega (t+\phi)) + I_0
\]
such difference of phase is translated directly to the output by instead comparing the solution with 
\[
(X_2(t) + \rho\sin(\omega(t+\phi)) , Y_2(t))
\]
at both the initial time and for $t\geq 0 $. The proof only requires a small modification in Subsection \ref{subsec_approximation_dc_term}. The assumption in the initial condition is not very restrictive because Theorem \ref{thm_slow_steering1} shows that the solution follows the oscillatory part for all time $t\geq 0$.
\end{remark}

A brief comment about well-posedness: in all the problems considered, we always considered ODE where the right-hand side is locally Lipschitz, which guarantees that the solution for the initial value problem is unique in $C([0,T]) \cap C^1((0,T))$ for any $T>0$. Because of this, unless we need to be specific about the interval of validity of the equation, it will be assumed that the solution belongs to $C([0,T]) \cap C^1((0,T))$ for any $T>0$.

\subsection{Known results}
The onset response to HFBS has been identified in {\it in vivo} preparations~\cite{Kilgore2014a}, and different strategies have been proposed to mitigate or eliminate it. Conduction block using HFBS requires higher amplitudes for higher frequencies, and this idea was used to minimize the onset response by initiating the HFBS with higher frequency (30~kHz) and then decreasing it (10~kHz), maintaining conduction block~\cite{gergesFrequencyAmplitudetransitionedWaveforms2010}. Similarly, the amplitude of the HFBS signal has been manipulated so that it started from a non-zero value and then slowly increased in a ramp~\cite{vrabecReductionOnsetResponse2019}. However, it appears that this approach can only effectively eliminate subsequent onset responses once a fiber has been entrained in a partial conduction block state following an initial block state that, nonetheless, exhibited an onset response. Thus, further examination is needed to determine if it is possible to eliminate the first onset response and under what conditions.  In addition, the combination of HFBS on top of a DC component showed great promise in eliminating the onset response~\cite{ackermannConductionBlockWhole2010,vrabecNovelWaveformNoOnset2013}, but there remain challenges related to possible tissue damage due to charge accumulation concomitant with charge unbalanced signals. 

Modeling studies have also provided some information about the onset response. Theoretical findings using a modified HH-based fiber model suggested that the onset response may be efficiently mitigated by adapting the slope of a ramped HFBS signal after monitoring sodium channels in an open loop design~\cite{yi_kilohertz_2020}. Further, increasing the amplitude of the HFBS signal in steps can avoid onset responses in an unmyelinated HH fiber model\cite{zhongHighfrequencyStimulationInduces2022}. This effect was also studied in a morphologically detailed model of a mammalian peripheral nerve fiber~\cite{milesEffectsRampedAmplitude2007a}, and tiny ramps are required to avoid the onset response in this model. 

From the point of view of differential equations, averaging techniques have been used to analyze the FHN system with highly oscillatory sources by exploiting certain stability properties of the averaged system \cite{Weinberg2013,cerpa_partially_2022,Ratas2012,cerpaApproximationStabilityResults2023a}. In particular, for the conduction block problem \cite{Ratas2012,Weinberg2013}, we can discriminate between a finite number of action potentials and persistent action potentials, but such techniques are unsuitable for studying transient behavior. To address this issue, we employ a more refined version of averaging in conjunction with a more general input current  \eqref{input_current_slope_general1} and \eqref{input_current_slope_general2}, which allows us to provide more precise analysis for transient responses using Lyapunov-type arguments \cite{rauch-smoller,Kostova2004, MR2086173}.

There are two key concepts to interpret in our results. First, we know that we can study the conduction block on system \eqref{FHN_hf_intro} using stability properties of the averaged system \cite{Weinberg2013}. The stability notion that we get for time $t > \frac{1}{\lambda}$ is exponential stability since, on that regime, classical averaging theory applies, and we have an exponential equilibrium for the averaged system. Next, we can use Theorems \ref{thm_slow_steering1} and  \ref{thm_slow_steering2} to explain some results for the conduction block phenomenon.

In addition, even though we use the less complex FHN system in this work, we expect the qualitative analysis to be similar when applied to other models. So, our conclusions may be compared with, for example, prior work that involved ramped HFBS signals to HH-based models.

\begin{enumerate}
\item  The transient effects can be completely eliminated using a ramp in a modified HH-type fiber but not in a pure HH fiber \cite{yi_kilohertz_2020}. This suggests that the averaged system of the modified HH is more stable.
\item In a morphologically-detailed HH-based model fiber, standard ramps in HFBS signals are insufficient to eliminate the onset response   \cite{milesEffectsRampedAmplitude2007a}. However, later work suggested that a simply much smaller ramp may eliminate the onset response \cite{zhongHighfrequencyStimulationInduces2022}, which implies that not having enough stability for the averaged system is very restrictive on the use of a ramp.
\item  The onset response may also be avoided via the combined action of HFBS and a DC term~\cite{ackermannConductionBlockWhole2010,vrabecNovelWaveformNoOnset2013}. This can be understood as if the DC component provides additional stability around the equilibrium, enabling a steeper ramp on the oscillatory term after reaching the average system's new equilibrium. At this point, the DC component is no longer needed. 
\end{enumerate}

\subsection{Organization of the paper}

The strategy of our approach can be separated into three steps: 
\begin{itemize}

    \item Step 1. In Section \ref{section_abstract_steering_problem}, we study the abstract steering problem and show that a Lyapunov function for a system with coefficients that vary slowly over time can be constructed under certain conditions related to stability. 
    \item Step 2. In Section \ref{section_quasi_static_steer}, we introduce the partial averaging for the problem under consideration and show that they satisfy the requirements of the results of Section \ref{section_abstract_steering_problem}.
    \item Step 3. In Section \ref{section_approximation_estimates}, we show that the averaged system accurately describes the original system by establishing error estimates on the approximation. This is done by providing precise error estimates that allow us to extend the results obtained in Section \ref{section_quasi_static_steer} to the full FHN system and prove Theorem \ref{thm_slow_steering1} and Theorem \ref{thm_slow_steering2}. 
\end{itemize}

Finally, in Section \ref{section_numerical_simulations}, we compare numerical results obtained for the FHN model to verify that both systems behave similarly concerning the studied properties, and we illustrate the dependence of the maximum permissible slopes and their relation to the stability of the systems.

\section{Systems with slowly varying coefficients}\label{section_abstract_steering_problem}

This section aims to show how to construct Lyapunov functions for specific non-autonomous nonlinear systems where the coefficients vary slowly over time. Such systems 
arise from averaging when the slope parameter is suitably small. The proofs in this section are adapted from the construction in \cite[Chapter 9]{khalil2013}, and we direct the reader to this reference for classic definitions and results mentioned below.

We will be working with matrix-valued functions, so it is helpful to identify the norms we will use.

\begin{definition}
For vectors $x\in\R^n$ we consider the Euclidean norm $\|x\|_2 = \left(\sum_{i=1}^n x_i^2\right)^{1/2}.$
 For matrices $A=\{a_{i,j}\}_{i,j} \in M_{m\times n}$ we consider the Frobenious norm $\|A\|_2 = \left(\sum_{i=1}^m \sum_{j=1}^{n} a_{i,j}^2\right)^{1/2}.$
\end{definition}

The following lemma tells us how the stability of a time-frozen linear system can be used to obtain stability for certain non-autonomous systems, non-linear systems, and systems with time-dependent source terms.

\begin{lemma}[Quasi-static steering]\label{lemma_lyapunov_nonlinear}
Let us consider the linear system
\begin{equation}\label{linear_system_lemma_quasistatic_steer}
\dot{x} = A(\alpha) x,
\end{equation}
where $A(\alpha): [0,1] \to M_{n\times n}(\R)$ is a matrix-valued function which is uniformly Hurwitz for $\alpha \in [0,1]$. Thus, it is known that the system \eqref{linear_system_lemma_quasistatic_steer} admits $L(\alpha,x) = x^T P(\alpha) x$ as a Lyapunov function where $P(\alpha)$ is given by the unique solution of the equation
\begin{equation}\label{riccati_slope_freq_lemma}
P(\alpha) A(\alpha) + A(\alpha)^T P(\alpha) = -I,
\end{equation}
and that there exist positive constants $c_1$ and $c_2$, independent of $\alpha \in [0,1]$, such that the function $L(\alpha,x)$ satisfies
\begin{equation*}
c_1 \|x\|_2^2\leq L(\alpha,x)\leq c_2 \|x\|_2^2.
\end{equation*}
Suppose that  $B: \R^n \times \R^+ \to \R^n$ and $C: \R^+ \to \R^n$ 
are two vector-valued functions
satisfying 
\begin{itemize}
\item[(i)] $\sup_{\alpha\in [0,1]}\|B(W,\alpha)\|_2 \leq f(\|W\|_2) \|W\|_2^2$, for some continuous monotonous increasing function $f$, and
\item[(ii)] $\sup_{\alpha\in[0,1]}\|C(\alpha)\|_2 \leq c_3$, for some positive constant $c_3$.
\end{itemize}
Then, there exists $\lambda_0   = (\sup_{\alpha\in[0,1]} \|P'(\alpha)\|_2 )^{-1}>0$ such that for all $\lambda < \lambda_0$ the function $L(\lambda t, x)$ is a Lyapunov function for the system
\begin{equation}\label{system_Z_lemma}
\dot{Z} = A(\lambda t) Z + B(Z, \lambda t),\qquad (t\in[0,1/\lambda]).
\end{equation}
Additionally, if $W$ is the unique solution of the system
\begin{equation}\label{system_W_lemma}
\dot{W} = A(\lambda t) W + B(W,\lambda t) +C(\lambda t)\lambda,\quad  W(0) = W_0, \quad t \in [0,1/\lambda],
\end{equation}
and the constants $\hat{M} >0$ and $\theta >0$ satisfy
\[
-\left(1-\lambda \sup_{\alpha\in[0,1]} \|P'(\alpha)\|_2\right)  + 2f(\hat{M}) \sup_{\alpha\in [0,1]} \|P(\alpha)\|_2 \hat{M}+ \theta\sup_{\alpha\in [0,1]}\left\|P(\alpha)\right\|_2^2 \leq 0,
\]
then we have for all $M\leq  \hat{M}$ 
\[
\frac{c_2}{c_1} \|W_0\|_2^2 + \frac{c_3^2}{\theta c_1}\lambda \leq M \quad\implies \|W(t)\|_2^2 \leq \frac{c_2}{c_1} \|W_0\|_2^2 + \frac{c_3^2}{\theta c_1}\lambda, \quad \forall \,t \in [0,1/\lambda].
\]
\end{lemma}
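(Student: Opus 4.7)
The plan is to use $V(t,x) := L(\lambda t, x) = x^{T} P(\lambda t) x$ as a time-dependent Lyapunov function and control its derivative along solutions of \eqref{system_W_lemma}. By hypothesis, $c_{1}\|x\|_{2}^{2} \leq V(t,x) \leq c_{2}\|x\|_{2}^{2}$ uniformly in $t$, so sandwich bounds on $V$ immediately translate into bounds on $\|W(t)\|_{2}^{2}$. The computation I would carry out first is the chain rule
\[
\frac{d}{dt}V(t,W(t)) = \lambda W^{T} P'(\lambda t) W + 2 W^{T} P(\lambda t)\dot{W},
\]
and then substitute $\dot{W} = A(\lambda t)W + B(W,\lambda t) + C(\lambda t)\lambda$. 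The Lyapunov equation \eqref{riccati_slope_freq_lemma} collapses the cross term with $A$ into $-\|W\|_{2}^{2}$, leaving
\[
\dot{V} = \lambda W^{T} P'(\lambda t) W - \|W\|_{2}^{2} + 2 W^{T} P(\lambda t) B(W,\lambda t) + 2\lambda W^{T} P(\lambda t) C(\lambda t).
\]
The Lyapunov property for the homogeneous system \eqref{system_Z_lemma} is then obtained by dropping the $C$-term, using Cauchy--Schwarz together with (i), and noting that for $\lambda < \lambda_{0} = (\sup_{\alpha}\|P'(\alpha)\|_{2})^{-1}$ the bracket $[-1 + \lambda\sup\|P'\|_{2} + 2 f(\|Z\|_{2})\sup\|P\|_{2}\|Z\|_{2}]$ is negative for $\|Z\|_{2}$ small, because $f$ is continuous and increasing.

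For the estimate on $W$, I would bound each remaining term separately: the $P'$-term by $\lambda\sup_{\alpha}\|P'(\alpha)\|_{2}\|W\|_{2}^{2}$, the $B$-term by $2\sup_{\alpha}\|P(\alpha)\|_{2} f(\|W\|_{2})\|W\|_{2}^{3}$ using assumption~(i), and the $C$-term by Young's inequality in the form $2ab \leq (\theta\sup\|P\|_{2}^{2})\,a^{2} + b^{2}/(\theta\sup\|P\|_{2}^{2})$, applied to $a = \|W\|_{2}$ and $b = \lambda\|P(\lambda t) C(\lambda t)\|_{2} \leq \lambda\sup\|P\|_{2}\,c_{3}$ by (ii). This yields
\[
2\lambda |W^{T} P(\lambda t) C(\lambda t)| \leq \theta\sup_{\alpha}\|P(\alpha)\|_{2}^{2}\,\|W\|_{2}^{2} + \frac{\lambda^{2} c_{3}^{2}}{\theta}.
\]
Under the bootstrap assumption $\|W(t)\|_{2} \leq \hat{M}$, monotonicity of $f$ gives $f(\|W\|_{2})\|W\|_{2} \leq f(\hat{M})\hat{M}$, and the bracketed coefficient of $\|W\|_{2}^{2}$ in the resulting estimate is exactly the quantity assumed to be $\leq 0$ in the statement. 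Thus $\dot{V} \leq \lambda^{2} c_{3}^{2}/\theta$.

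I would then integrate over $[0,t]$ with $t \leq 1/\lambda$, obtaining $V(t) \leq V(0) + \lambda c_{3}^{2}/\theta \leq c_{2}\|W_{0}\|_{2}^{2} + \lambda c_{3}^{2}/\theta$, and divide by $c_{1}$ to recover the stated bound
\[
\|W(t)\|_{2}^{2} \leq \frac{c_{2}}{c_{1}}\|W_{0}\|_{2}^{2} + \frac{c_{3}^{2}}{\theta c_{1}}\lambda.
\]
The main obstacle is justifying the bootstrap: the differential inequality only closes when $\|W(t)\|_{2} \leq \hat{M}$, while the $\|W\|_{2}$ bound is established only conditionally on this very assumption. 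I would handle this by a standard continuity argument, defining $T^{*} = \sup\{t \in [0,1/\lambda] : \|W(s)\|_{2}^{2} \leq M \text{ for all } s \leq t\}$. The set is nonempty since the hypothesis forces $\|W_{0}\|_{2}^{2} \leq M$ at $t=0$, and the Lyapunov computation shows that on $[0,T^{*}]$ the bound $\|W(t)\|_{2}^{2} \leq (c_{2}/c_{1})\|W_{0}\|_{2}^{2} + c_{3}^{2}\lambda/(\theta c_{1}) \leq M$ holds, which together with continuity of $W$ and local Lipschitz well-posedness prevents $T^{*}$ from being strictly less than $1/\lambda$. Tracking constants carefully along the way, $\lambda_{0} = (\sup_{\alpha}\|P'(\alpha)\|_{2})^{-1}$ emerges precisely as the threshold making the leading coefficient negative, completing both claims of the lemma.
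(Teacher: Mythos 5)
Your proposal is correct and follows essentially the same route as the paper's proof: differentiate $L(\lambda t, W)$ along trajectories, use the Riccati identity to produce $-\|W\|_2^2$, bound the nonlinear term via (i), absorb the source term with Young's inequality using the parameter $\theta$, invoke the sign condition on $\hat{M}$ and $\theta$ inside the ball $\|W\|_2\le M$, and integrate over $[0,1/\lambda]$. Your explicit continuity/bootstrap argument via $T^{*}$ is in fact slightly more careful than the paper's terse closing step, but it is the same underlying mechanism.
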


\begin{proof}
We look at the derivatives along  trajectories. Let $W$ be the solution of \eqref{system_W_lemma} and consider the time derivative of $L(\lambda t,W) = W^T P(\lambda t) W$. Take $\lambda>0$  so that $\lambda\sup_{\alpha\in [0,1]}\|P'(\alpha)\|_2 < 1 $, then we get for $t\leq \frac{1}{\lambda}$ 
\begin{align*}
\frac{d}{dt} L\left(\lambda t,W\right) &= \lambda W^T P'(\lambda t)W + \left(A(\lambda t)W + B(W,\lambda t) + \lambda C(\lambda t) \right)^T P(\lambda t) W\\
&\hspace{5mm}+ W^T P(\lambda t) \left(A(\lambda t)W + B(W,\lambda t) + \lambda C(\lambda t) \right)\\
&= \lambda W^T P'(\lambda t)W 
+W^T A^T(\lambda t)P(\lambda t) W +W^T P(\lambda t) A(\lambda t) W\\
&\hspace{5mm}+B(W,\lambda t)^T P(\lambda t) W +W^T P(t) B(W,\lambda t)\\
&\hspace{5mm}+\lambda  C(\lambda t)^T P(\lambda t) W +\lambda W^T P(\lambda t) 
 C(\lambda  t) .
\end{align*}
Thanks to equation \eqref{riccati_slope_freq_lemma} we have
\begin{equation*}
\lambda W^T P'(\lambda t)W + W^T A(\lambda t)^T P(\lambda t)W+ W^T P(\lambda t) A(\lambda t) W
\leq -\left(1-\lambda \sup_{\alpha\in [0,1]}\left\|P'(\alpha)\right\|_2\right) \left\|W\right\|^2.
\end{equation*}
For the nonlinear term $B(W,t)$ we use assumption (i) to obtain
\begin{equation*}
\left|W^T P(\lambda t)B(W,\lambda t)\right|\leq f(\|W\|_2) \sup_{\alpha\in [0,1]}\|P(\alpha)\|_2 \|W\|_2^3.
\end{equation*}
Putting everything together we get
\begin{align*}
\frac{d}{dt} L\left(\lambda t,W\right) &\leq -\left(1-\lambda \sup_{\alpha\in [0,1]}\|P'(\alpha)\|_2\right) \left\|W\right\|^2 \\
&\hspace{1cm}+ \left|B(W,\lambda t)^T P(\lambda t) W +W^T P(\lambda t)B(W,\lambda t)\right|\\
&\hspace{1cm}+\lambda \left|C(\lambda t)^T P(\lambda t) W + W^T P(\lambda t)C(\lambda t)\right|\\
&\leq -\left(1-\lambda \sup_{\alpha\in [0,1]}\|P'(\alpha)\|_2\right)  \|W\|_2^2 + 2 f(\|W\|_2) \sup_{\alpha\in [0,1]}\|P(\alpha)\|_2 \|W\|_2^3\\
&\hspace{1cm}+ 2\lambda \|C(\lambda t)\|_2 \left\|P(\lambda t) W\right\|_2\\
&\leq -\left(1-\lambda \sup_{\alpha\in [0,1]}\|P'(\alpha)\|_2\right) \|W\|_2^2 + 2 f(\|W\|_2)\sup_{\alpha\in [0,1]}\|P(\alpha)\|_2 \|W\|_2^3\\
&\hspace{1cm}+ \frac{1}{\theta} \lambda^2 \|C(\lambda t)\|^2  + \theta \sup_{\alpha\in[0,1]}\left\|P(\alpha)\right\|_2^2  \left\|W\right\|_2^2.
\end{align*}
Next, we choose $\hat{M}$ and $\theta$ such that
\begin{equation}\label{estimate_exponential_stability}
-\left(1-\lambda \sup_{\alpha\in[0,1]} \|P'(\alpha)\|_2 \right)  + 2f(\hat{M}) \sup_{\alpha\in [0,1]} \|P(\alpha)\|_2 \hat{M}+ \theta\sup_{\alpha\in [0,1]}\left\|P(\alpha)\right\|_2^2 \leq 0,
\end{equation}
which can always be done since the first term is strictly negative. We conclude that given any $M\leq \hat{M}$ in the ball $\|W \|_2 \leq M$ we have the estimate
\[
\frac{d}{dt} L(\lambda t , W ) \leq \frac{c_3^2 \lambda^2 }{\theta}.
\]
Integrating in $[0,t]$ we get
\[
L(\lambda t, W) \leq L(0,W(0)) + \frac{c_3^2 \lambda^2}{\theta} t,\]
and taking supremum in $t \in [0, 1/\lambda]$ we obtain
\[
\sup_{t\in [0,1/\lambda]}L(\lambda t, W) \leq L(0,W(0)) + \frac{c_3^2\lambda}{\theta}.
\]
We can clean up the estimate using the upper and lower bound in $L(\lambda t,W)$ given by \eqref{riccati_slope_freq_lemma} where the constants $c_1$ and $c_2$ are independent of $\alpha\in[0,1]$ via the same argument as in \cite[Lemma 9.9]{khalil2013}
\[
\sup_{t\in[0,1/\lambda]}\|W\|_2^2 \leq \frac{c_2}{c_1}\|W(0)\|_2^2 + \frac{c_3^2}{\theta c_1} \lambda.
\]
Finally, we can close this estimate by requiring
\[
\frac{c_2}{c_1}\|W(0)\|_2^2 + \frac{c_3^2}{\theta c_1} \lambda \leq M.
\]
Note that in the case $c_3 = 0$, it can be improved to exponential stability at the origin by requiring a strictly negative upper bound in \eqref{estimate_exponential_stability}, which implies that $L(\lambda t,W)$ is a Lyapunov function for \eqref{system_Z_lemma}. This concludes the proof of Lemma \ref{lemma_lyapunov_nonlinear}.
\end{proof}

For our application, it is desirable to understand how steep the ramp used on the time-varying term can be, which, thanks to the previous lemma, we know is related to the Lyapunov function via 
\[
\lambda_0 = \left(\sup_{\alpha\in[0,1]} \|P'(\alpha)\|_2\right)^{-1}.
\]
More specifically, we would like to conclude that under certain conditions, the derivative of the Lyapunov matrix is proportional to some quantity related to the Hurwitz condition. The following result provides the desired estimate by comparing the derivative of the matrix $P$, the solution of the Riccati equation \eqref{riccati_slope_freq_lemma} with the determinant and the trace of the matrix $A$ by assuming some structure on the coefficients of the linearization.

\begin{lemma}[Derivative estimates for the derivative of Lyapunov function]\label{derivative_lyapunov}
Consider the matrix-valued function $A(\alpha): [0,1] \to M_{2\times 2}(\R)$ of the form
\begin{equation*}
A(\alpha) = \left(\begin{array}{cc}
a_{11}(\alpha) & a_{12}\\
a_{21} & a_{22}
\end{array}\right),
\end{equation*}
for some real-valued function $a_{11}: [0,1] \to \R$ and constants $a_{12}$, $a_{21}$, $a_{22}$. Additionally, suppose the following
\begin{enumerate}[label=(\roman*)]
    \item $a_{22} <0$,\label{condition_negative_lemma_lyapunov}
    \item $\det A(\alpha) >0$, $\forall\alpha\in [0,1]$, \label{condition_det_lemma_lyapunov}
    \item $\tr A(\alpha)< 0$, $\forall\alpha\in [0,1]$, \label{condition_trace_lemma_lyapunov}
\end{enumerate}
and consider the matrix $P(\alpha)$ solution of the Riccatti equation
\begin{equation}\label{matrix_riccatti_lemma}
P(\alpha) A(\alpha) + A(\alpha)^T P(\alpha) = -I.
\end{equation}
Then, we have the following estimate on the derivative of $P(\alpha)$
\[
\frac{|a_{11}'|}{2 \tr(A)^2} \leq \|P'(\alpha)\|_2 \leq C_0 |a_{11}'| \frac{\|A\|_2^4}{(\tr(A) \det(A))^2},
\]
for some universal constant $C_0>0$ that does not depend on the coefficients of the matrix $A$.
\end{lemma}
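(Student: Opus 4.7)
The plan is to use the chain rule: since $P(\alpha)$ depends on $\alpha$ only through the scalar $a_{11}(\alpha)$, we have $P'(\alpha) = a_{11}'(\alpha)\,\partial_{a_{11}} P$, so $\|P'(\alpha)\|_2 = |a_{11}'(\alpha)|\,\|\partial_{a_{11}} P\|_2$, and it suffices to bound $\|\partial_{a_{11}} P\|_2$ from above and below in terms of $\tau=\tr(A)$, $\Delta=\det(A)$, and $\|A\|_2$.

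First, we solve the Lyapunov equation \eqref{matrix_riccatti_lemma} explicitly. Writing the symmetric matrix $P$ via the three unknowns $(p_{11},p_{12},p_{22})$ converts \eqref{matrix_riccatti_lemma} into a $3\times 3$ linear system whose coefficient determinant computes to $4\tau\Delta$, nonzero by hypotheses (ii) and (iii). Cramer's rule (or direct elimination) then yields
\[
p_{11}=-\frac{\Delta+a_{22}^2+a_{21}^2}{2\tau\Delta},\quad p_{12}=\frac{a_{11}a_{21}+a_{12}a_{22}}{2\tau\Delta},\quad p_{22}=-\frac{\Delta+a_{11}^2+a_{12}^2}{2\tau\Delta}.
\]
Differentiating with respect to $a_{11}$, using $\partial_{a_{11}}\tau=1$ and $\partial_{a_{11}}\Delta=a_{22}$, each entry $\partial p_{ij}/\partial a_{11}$ is a rational function with common denominator $(2\tau\Delta)^2$ and polynomial numerator of total degree at most four in the entries of $A$.

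For the upper bound, we bound $|a_{ij}|\le\|A\|_2$ entrywise in each numerator to obtain $|\partial p_{ij}/\partial a_{11}|\le C_0\|A\|_2^4/(\tau\Delta)^2$ for a universal constant $C_0$, hence $\|\partial_{a_{11}}P\|_2\le C_0\|A\|_2^4/(\tau\Delta)^2$, which yields the stated upper bound on $\|P'(\alpha)\|_2$.

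For the lower bound, the key identity—obtained by carrying out the differentiation of $p_{11}$ above and regrouping—is
\[
\frac{\partial p_{11}}{\partial a_{11}}=\frac{1}{2\tau^2}+\frac{(a_{21}^2+a_{22}^2)(\Delta+\tau a_{22})}{2\tau^2\Delta^2},
\]
whose leading constant is exactly the target $1/(2\tau^2)$. When $\Delta+\tau a_{22}\ge 0$ the correction is nonnegative and $|\partial p_{11}/\partial a_{11}|\ge 1/(2\tau^2)$ immediately. In the complementary regime we would derive the analogous decomposition $\partial p_{22}/\partial a_{11}=1/(2\tau^2)+(\text{correction})$ from the explicit formula for $p_{22}$, and argue by case analysis using the Hurwitz sign conditions $\tau<0$, $\Delta>0$, $a_{22}<0$ that the two corrections cannot simultaneously drive both diagonal derivatives of $P$ below $1/(2\tau^2)$ in magnitude, so that at least one of $|\partial p_{11}/\partial a_{11}|$, $|\partial p_{22}/\partial a_{11}|$ is $\ge 1/(2\tau^2)$. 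The main obstacle will be precisely this algebraic case analysis for the lower bound: tracking the numerator polynomials carefully enough that the Hurwitz sign conditions preclude simultaneous cancellation of the leading $1/(2\tau^2)$ terms.
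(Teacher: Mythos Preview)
Your approach is essentially the paper's own proof: solve the Lyapunov equation explicitly, differentiate, bound numerators by $\|A\|_2^4$ for the upper bound, and use the decomposition
\[
\frac{\partial p_{11}}{\partial a_{11}}=\frac{1}{2\tau^2}+\frac{(a_{21}^2+a_{22}^2)(\Delta+\tau a_{22})}{2\tau^2\Delta^2}
\]
for the lower bound. The one place you overcomplicate things is the ``complementary regime'' $\Delta+\tau a_{22}<0$: this case never occurs. Hypothesis (i) gives $a_{22}<0$ and hypothesis (iii) gives $\tau=\tr(A)<0$, so $\tau a_{22}>0$; combined with $\Delta>0$ from hypothesis (ii), you get $\Delta+\tau a_{22}>0$ for every $\alpha$. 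Thus the correction term is always nonnegative, and $|\partial p_{11}/\partial a_{11}|\ge 1/(2\tau^2)$ holds unconditionally. This is exactly the paper's argument (``because of our assumptions, all the terms in the numerator have the same sign''), and it is also the reason hypothesis (i) is included at all. No case analysis on $p_{22}$ is needed.
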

\begin{proof}
Conditions \ref{condition_det_lemma_lyapunov} and \ref{condition_trace_lemma_lyapunov} guarantee that the Ricatti equation \eqref{matrix_riccatti_lemma} has a unique solution. Let $P =\left(\begin{array}{cc} p & q\\ q & r\end{array}\right)$ then we can solve explicitly to find
\begin{equation}\label{explicit_solution_riccatti}
p = - \frac{(a_{22}^2+a_{21}^2 + \det(A)) }{2\tr(A) \det(A)} ,\quad 
q= \frac{(a_{12}a_{22}+ a_{21}a_{11})}{2\tr(A) \det(A)}, \quad 
r=  -\frac{(a_{12}^2 + a_{11}^2+\det(A))}{2\tr(A) \det(A)}.
\end{equation}

We only need to find an adequate lower bound for one of the coefficients to find a lower bound for the norm $\|P'(\alpha)\|_2$. Taking the derivative of the coefficient $p$ we get
\begin{align*}
p' &= \frac{(a_{22}^2+a_{21}^2)}{2} \frac{a_{11}' \det(A) + a_{22} a_{11}' \tr(A)}{(\tr(A) \det(A))^2} +\frac{a_{11}'}{2 (\tr(A))^2}\\
&= \frac{a_{11}'}{2} \frac{(a_{22}^2+a_{21}^2)\det(A) + (a_{22}^2+a_{21}^2) a_{22}  \tr(A) + (\det(A))^2}{(\tr(A) \det(A))^2}.
\end{align*}
Note that because of our assumptions, all the terms in the numerator have the same sign. Then we can bound
\begin{equation*}
\|P'(\alpha)\|_2 = \left((p')^2 + 2(q')^2 + (r')^2\right)^{1/2} \geq |p'| \geq \frac{|a_{11}'|}{2(\tr(A))^2},
\end{equation*}
which concludes the proof of the lower bound. To get the upper bound, we obtain an upper bound of the numerator in terms of the coefficients of the matrix $A$. The constant $C_0>0$ is related to the number of terms in the numerator of the derivative of the matrix $A$.
\end{proof}

The relevance of this lemma is to provide an explicit estimate for the small parameter given by Theorem \ref{lemma_lyapunov_nonlinear}; this will allow us to compare the slope with quantities related to the stability of the system.

\section{Partially Averaged Systems}\label{section_quasi_static_steer}

In this section, we introduce the partially averaged system (PAS) as a tool to study systems with highly oscillatory terms and explain how the steering result from Section \ref{section_abstract_steering_problem} can be applied to the averaged system.

\subsection{Partial averaging as a tool to understand HFBS}\label{subsection_partial_averaging}

Averaging is a natural approach to studying systems with highly oscillatory sources. We consider a version where the averaging window moves over time, which allows us to feel the effects of other time-varying terms that change slowly over time. This technique is sometimes called Partial Averaging \cite{cerpa_partially_2022, cerpaApproximationStabilityResults2023a}.

\begin{definition}[Partially averaged system]
The partial averaging of system \eqref{FHN_hf_intro} with solution $(v(t),w(t))$ and input current \eqref{input_current_slope_general1} is given by the following initial value problem
\begin{equation}\label{PAS_slope_freq}
\left\{
\begin{array}{rll}
\dot{V}&= (1 - S(\lambda t)^2 \rho^2 /2 )V-V^3/3  -W &, \quad t > 0,\\
\dot{W}&= \varepsilon(V- \gamma W + \beta)&, \quad t > 0,\\
V(0) &= v(0),\quad W(0) = w(0),
\end{array}\right.
\end{equation}
where $S(t)$ is a ramp function given by \eqref{cutoff}. Similarly, the corresponding partially averaging of system \eqref{FHN_hf_intro} with solution $(v(t),w(t))$ and with input current \eqref{input_current_slope_general2} is given by the following initial value problem
\begin{equation}\label{FHN_pas}
\left\{
\begin{array}{rll}
\dot{V}&= (1-\rho^2/2)V -V^3/3 - W + S(\delta t) I_0&,   \quad t >0, \\
\dot{W}&= \varepsilon(V- \gamma W + \beta)&,  \quad t >0,\\
V(0) &= v(0),\quad W(0) = w(0).
\end{array}\right.
\end{equation}
The derivation of both systems is presented in Appendix \ref{appendix_derivation_pas_slope_HF} and \ref{appendix_derivation_pas_slope_DC}. 
\end{definition}

The main advantage of this formulation is that it transforms the problem with a highly oscillatory input into another one where a single coefficient varies slowly over time. This fact will be used to obtain our results.

\begin{definition}
\label{definition_condition_parameters}
The following conditions summarize the framework used to establish the results in the paper.
\begin{enumerate}[label={\bf C\arabic*}.,ref=C\arabic*]

\item \label{conditionC1} For fixed values of $(\varepsilon, \beta, \gamma, \rho)$, with  $\varepsilon,\beta,\gamma >0$, and every $\alpha\in[0,1] $, consider the system 
\begin{equation}\label{parameters1}
\begin{cases}
\dot{v} &= (1-\alpha^2 \rho^2/2) v - v^3/3 - w,  \\
\dot{w} &= \varepsilon \left( v-\gamma w + \beta \right).
\end{cases}
\end{equation}
System \eqref{parameters1} has a unique stable equilibrium and if $(v_0,w_0) = (v_0(\alpha),w_0(\alpha) )$ is the unique equilibrium of \eqref{parameters1}, there exists a constant $c >0$ such that for all $\alpha \in [0,1]$ 
\begin{equation}\label{stronger_stability_1}
v_0(\alpha)^2 - 1 + \alpha^2\rho^2/2 > c.
\end{equation}
\item \label{conditionC2} For fixed values of $(\varepsilon, \beta, \gamma, \rho, I_0)$, with  $\varepsilon,\beta,\gamma >0$, $I_0 \leq \beta/\gamma$,  and every $\sigma\in[0,1] $ consider the system  
\begin{equation}\label{parameters2}
\begin{cases}
\dot{v}  &= (1-\rho^2/2) v - v^3/3 - w + \sigma I_0,  \\
\dot{w} &= \varepsilon\left( v-\gamma w + \beta \right).
\end{cases}
\end{equation}
System \eqref{parameters2} has a unique stable equilibrium and if $(v_0,w_0) = (v_0(\sigma),w_0(\sigma) )$ is the unique equilibrium of \eqref{parameters2}, there exists a constant $c >0$ such that for all $\sigma \in [0,1]$ 
\begin{equation}\label{stronger_stability_2}
v_0(\sigma)^2 - 1 + \rho^2/2 > c.
\end{equation}
\item \label{conditionC3} Fix values of $(\varepsilon, \beta,\gamma,\rho)$, with $\varepsilon, \beta >0,  \gamma >1$. For $\lambda>0, \omega>0$, let $(v,w)$ be the unique solution of \eqref{FHN_hf_intro} with input current \eqref{input_current_slope_general1} and initial data $(v_0,w_0)$ given by \eqref{definition_P0}, and let $(V,W)$ be the solution of the partially averaged system \eqref{PAS_slope_freq} with the same initial data $(v_0,w_0)$. Let $(E_v, E_w) = (v -V - (\rho \lambda t) \sin(\omega t) , w - W)$ be the approximation error. There exists $M>0$ such that for all $\lambda \in (0,M]$
\[
\lim_{\omega\to\infty} \sup_{t\in[0,1/\lambda]} \|(E_v(t),E_w(t) )\|_2=0.
\]
\item \label{conditionC4} Fix $(\varepsilon,  \beta,\gamma, \rho, I_0)$, with $\varepsilon, \beta >0, \gamma>1$. For $\delta > 0$,  $\omega >0$ let $(v,w)$ be the solution of \eqref{FHN_hf_intro} with input current \eqref{input_current_slope_general2} and initial data $(v_1,w_1)$ given by \eqref{definition_P1}, and let $(V,W)$ be the solution of the Partially averaged system \eqref{FHN_pas} with the same initial data $(v_1,w_1)$. Let $(E_v, E_w) = (v -V - \rho  \sin(\omega t) , w - W)$ be the approximation error. There exists $M>0$ such that for all $\delta \in (0,M]$
\[
\lim_{\omega\to\infty} \sup_{t\in[0,1/\delta]} \|(E_v(t),E_w(t) )\|_2=0.
\]
\end{enumerate}
\end{definition}
\begin{definition}\label{defi_hypotheses}
The following hypotheses guarantee that the conditions above in Definition \ref{definition_condition_parameters} are satisfied.
\begin{enumerate}[(H1)] 
\item \label{condition_parameters} For fixed values of $(\varepsilon, \beta,\gamma,\rho , I_0, \alpha, \sigma)$, with $\varepsilon, \beta, \gamma >0$, $I_0 \leq  \beta/\gamma$ consider
\begin{equation*}
-\left(1-\alpha^2\rho^2/2-\frac{1}{\gamma}\right)^3  +\frac{1}{4} \left( 3  \frac{\beta}{\gamma}  -3 \sigma I_0\right) ^2> 0. 
\end{equation*}
\item \label{condition_parameters1}For fixed values of $(\varepsilon,  \beta,\gamma,\rho , I_0, \alpha, \sigma)$, with $\varepsilon, \beta, \gamma >0$, $I_0 \leq \beta/\gamma$ consider
\begin{equation*}
M = 1-\alpha^2 \rho^2/2 \leq 0,  \text{ or }\quad 
M^{3/2} + 3 \beta/\gamma -3 \sigma I_0 >3(1-\alpha^2 \rho^2/2 - 1/\gamma)M^{1/2}. 
\end{equation*}
\item \label{condition_approximation_HF} Fix values of $(\varepsilon, \beta,\gamma,\rho , I_0, \lambda, \omega)$, with $\varepsilon, \beta, \lambda, \omega >0$, and $\gamma >1$. Let $(V,W)$ be the solution of system \eqref{PAS_slope_freq}, consider the condition
\begin{multline*}
0<\frac{1}{\gamma}\exp\left(\frac{\rho^2}{2\omega} + \frac{4 |\rho|}{\omega} \max_{s\in[0,1/\lambda]}|V(s)|\right)
\left(\min_{s\in[0,1/\lambda]}\left(|V(s)|^2 +(\lambda s \rho)^2/2 -1\right) -\frac{1}{\omega} g(\rho,\lambda)\right)^{-1} <\frac{1}{2},
\end{multline*}
where $g(\rho,\lambda) = \frac{\rho^2 \lambda}{2} + 2|\rho| \lambda\max_{s\in[0,1/\lambda]}|V(s)| + 2 |\rho| \max_{s\in[0,1/\lambda]}|\dot{V}(s)|$.
\item \label{condition_approximation_DC} For fixed values of $(\varepsilon,  \beta,\gamma,\rho , I_0, \delta, \omega)$, $\varepsilon, \beta, \delta, \omega >0$, and $\gamma >1$. Let $(V,W)$ be the solution of system \eqref{FHN_pas}, consider the condition
\begin{multline*}
0< \frac{1}{\gamma}\exp\left(\frac{\rho^2}{2\omega} + \frac{4|\rho|}{\omega} \max_{s\in[0,1/\delta]}|V(s)|\right)\left(\min_{s\in[0,1/\delta]}\left(|V(s)|^2-1+\rho^2/2\right)  - \frac{ |\rho|}{\omega}\max_{s\in [0, 1/\delta}|\dot{V}(s)|  \right)^{-1}<\frac{1}{2}.
\end{multline*}
\end{enumerate}
\end{definition}

\begin{proposition}\label{proposition_parameters} The conditions introduced in Definition \ref{definition_condition_parameters} can be verified via the hypotheses in Definition \ref{defi_hypotheses} using the following:
\begin{enumerate}[(i)]
\item  Condition \ref{conditionC1} is equivalent to require hypotheses \ref{condition_parameters} and \ref{condition_parameters1} for all $\alpha\in [0,1]$ and $\sigma= 0$ (it is independent of $I_0$ when $\sigma = 0$).

\item Condition \ref{conditionC2} is equivalent to require hypotheses \ref{condition_parameters} and \ref{condition_parameters1} for   $\alpha=1$ and for all $\sigma\in[0,1]$.
\item Condition \ref{conditionC3} holds if Condition \ref{conditionC1} holds and
\begin{enumerate}[label={\bf C3*}.,ref=C3*]
\item \label{conditionC3b} 
There exists $M, \omega_0>0 \text{ such that for all } 0<\lambda \leq M, \omega\geq\omega_0$ the hypothesis \ref{condition_approximation_HF} is satisfied.
\end{enumerate}
\item Condition \ref{conditionC4} holds if Condition \ref{conditionC2} holds and 
\begin{enumerate}[label={\bf C4*}.,ref=C4*]
\item \label{conditionC4b}
There exists $M, \omega_0>0 \text{ such that for all }0<\delta \leq M, \omega\geq\omega_0$ the hypothesis \ref{condition_approximation_DC} is satisfied.
\end{enumerate}

\end{enumerate}
\end{proposition}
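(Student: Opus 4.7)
The proposition consists of two algebraic equivalences (parts (i), (ii)) and two one-way implications involving approximation errors (parts (iii), (iv)). My plan is to handle them in pairs, since (i)--(ii) reduce to the same cubic analysis while (iii)--(iv) rely on the same Gronwall-type estimate.

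For parts (i) and (ii), the first step is to reduce the equilibrium equations of \eqref{parameters1} and \eqref{parameters2} to a single cubic in $v$ by eliminating $w=(v+\beta)/\gamma$ from the linear equation. Writing $M=1-\alpha^2\rho^2/2$, both collapse to
\[
f(v) := v^3 - 3v\bigl(M-\tfrac{1}{\gamma}\bigr) + 3\bigl(\tfrac{\beta}{\gamma} - \sigma I_0\bigr) = 0.
\]
A direct discriminant computation (comparing $-4p^3-27q^2$ with $p=-3(M-1/\gamma)$ and $q=3(\beta/\gamma-\sigma I_0)$) shows that hypothesis \ref{condition_parameters} is exactly the condition that $f$ has a unique real root $v_0$. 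For the stability side, I would compute the Jacobian of the time-frozen system at $(v_0,w_0)$ and observe that $\det J>0$ and $\tr J<0$ reduce to a lower bound on $v_0^2$; geometrically, $v_0^2>M$ (when $M>0$) places the equilibrium on the outer branch of the $v$-nullcline $w=Mv-v^3/3$. Using $I_0\le\beta/\gamma$ to force $f(0)\ge 0$, hence $v_0\le 0$, the condition $v_0^2>M$ translates, after evaluating $f$ at $\pm\sqrt M$ and simplifying, to exactly hypothesis \ref{condition_parameters1} (the case $M\le 0$ is trivial since $v_0^2\ge 0\ge M$ automatically). Uniformity of $c>0$ over $\alpha$ (or $\sigma$) in the compact interval $[0,1]$ then follows from continuous dependence of $v_0$ on parameters.

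For parts (iii) and (iv), the plan is to substitute the ansatz $v=V+\rho\, S(\lambda t)\sin(\omega t)$ (and the analogous one for (iv)) into \eqref{FHN_hf_intro}, derive a first-order evolution equation for the approximation error $(E_v,E_w)$, and apply a Gronwall-type estimate on $[0,1/\lambda]$ (respectively $[0,1/\delta]$). The resulting error system is linear in $(E_v,E_w)$ with coefficients depending on the averaged trajectory $V$, plus a highly oscillatory residual of size $O(1/\omega)$ produced by the substitution. The exponential prefactor generated by Gronwall is precisely $\exp\!\bigl(\rho^2/(2\omega)+\cdots\bigr)$ in hypothesis \ref{condition_approximation_HF}, while the effective damping coefficient is $\min_s\!\bigl(|V(s)|^2+(\lambda s\rho)^2/2-1\bigr)$; the hypothesis bounds the ratio of these by $1/2$, which closes the estimate and forces the error to zero as $\omega\to\infty$.

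The main obstacle is the global-in-time character of the estimate: the interval $[0,1/\lambda]$ grows without bound as $\lambda\to 0$, yet both the averaged trajectory $V$ and its derivative $\dot V$ enter the controlling quantities in (H3)--(H4). Closing the argument requires a bootstrap: Condition C1 (or C2) provides uniform stability of the frozen averaged system, which through Lemma \ref{lemma_lyapunov_nonlinear} yields a priori bounds on $V$ and $\dot V$, making \ref{condition_approximation_HF} (respectively \ref{condition_approximation_DC}) both meaningful and quantitatively usable in the Gronwall inequality. One then verifies that the smallness $1/\omega$ can beat the exponential accumulation on the entire interval $[0,1/\lambda]$ uniformly in $\lambda\in(0,M]$. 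This is exactly the global-in-time refinement alluded to in the introduction, sharper than the local-in-time analysis of \cite{cerpa_partially_2022}.
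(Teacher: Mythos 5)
Your treatment of parts (i) and (ii) is essentially the paper's proof (Appendix~\ref{app_subsection_choice_parameters}): eliminate $w=(v+\beta)/\gamma$, reduce to the depressed cubic $f(v)=v^3-3(1-\alpha^2\rho^2/2-1/\gamma)v+3\beta/\gamma-3\sigma I_0$, identify hypothesis \ref{condition_parameters} with the discriminant condition for a unique real root, use $\sigma I_0\le \beta/\gamma$ to get $f(0)\ge 0$ and hence $v_0\le 0$, so that $v_0^2>M$ is equivalent to $f(-\sqrt{M})>0$, i.e.\ hypothesis \ref{condition_parameters1}, with the uniform constant $c$ coming from compactness of $[0,1]$ and continuity of $v_0$. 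No complaints there.

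For parts (iii) and (iv) the architecture you describe is the right one, but there is a genuine gap in the key step: the error system \eqref{error_equation_abs} is \emph{not} linear in $(E_v,E_w)$. It contains the self-interaction terms $\varphi_2E_v^2-\tfrac13E_v^3$, which depend on the unknown error and cannot be absorbed into the $O(1/\omega)$ residual, so a linear Gronwall inequality does not close. The paper instead passes to the variation-of-constants (integral) formulation and runs a Banach fixed point (Lemma~\ref{lemma_linear_error_abs}) in a ball of radius $M_1(\omega)=O(1/\omega)$, where the quadratic and cubic terms are genuinely higher order; condition (i) of that lemma --- which is exactly what hypothesis \ref{condition_approximation_HF} delivers once $\int_\tau^t\varphi_\textit{fast}\,ds$ is controlled by Lemma~\ref{integral_estimate_slow_varying} and $\varphi_\textit{slow}\le -K_\textit{slow}<0$ --- makes the map a contraction uniformly on $[0,1/\lambda]$. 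To salvage a Gronwall-flavoured argument you would need an explicit continuation/smallness step in $E_v$ itself (assume $\sup|E_v|\le M_1$ and close with $M_1=O(1/\omega)$); the bootstrap you invoke is aimed only at the a priori bounds on $V$ and $\dot V$ (which the paper obtains from invariant rectangles together with Proposition~\ref{prop_slope_freq}, i.e.\ from Condition \ref{conditionC1}), not at the nonlinearity of the error equation. Two smaller inaccuracies: the factor $\exp\bigl(\rho^2/(2\omega)+\cdots\bigr)$ in \ref{condition_approximation_HF} is not a Gronwall prefactor but the uniform bound $e^{B/\omega}$ on the non-decaying part of $\int_\tau^t\varphi_\textit{fast}\,ds$; and the forcing terms $H\!F_1$, $H\!F_2$ are only $O(1)$ pointwise --- their effective size $O(1/\omega)$ appears only after integration against the damped kernel via integration by parts, which is precisely where the bound on $\dot V$ is consumed.
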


The proofs of (i) and (ii) in Proposition \ref{proposition_parameters} are presented in Appendix \ref{app_subsection_choice_parameters}. The proofs of (iii) and (iv) are given in Propositions \ref{proposition_error_estimates_slope_HF} and \ref{proposition_error_estimates_slope_DC} using Lemma \ref{lemma_linear_error_abs}.

We also show that the hypotheses {\it \ref{condition_parameters}-\ref{condition_approximation_DC}} are satisfied by a non-empty set of parameters.
\begin{proposition}\label{proposition_easy_conditions}
Fix $\varepsilon>0$, $I_0, \rho\in \R$. There exists $\gamma_0, h_0>0$ (depending on $\varepsilon, I_0, \rho$) such that if 
\begin{equation}\label{easy_condition}
\gamma > \gamma_0, \quad \beta/\gamma > h_0,
\end{equation}
then for all $\alpha, \sigma \in [0,1] $ hypotheses \ref{condition_parameters}, \ref{condition_parameters1} in Definition \ref{defi_hypotheses} are satisfied.

If in addition we assume that $\rho^2 < 4/3$ and $I_0 \geq 0$, then there exists $\mu^*$, $\lambda^*$, $\omega_0$, $\gamma_0$, $h_0>0$ such that if Condition \eqref{easy_condition} is satisfied, $0<\lambda < \lambda^*$, $\omega \geq \omega_0$ and
\begin{equation}\label{easy_condition_initial}
\|(V(0) - v^*, W(0)-w^* )\|_2 \leq \mu^*,
\end{equation}
where $(V,W)$ is the solution of \eqref{PAS_slope_freq} and  $(v^*,w^*)=(v_0,w_0)$ given by \eqref{definition_P0}, then hypothesis \ref{condition_approximation_HF} in Definition \ref{defi_hypotheses} is satisfied. Alternatively, under the same assumptions, there exists $\mu^*$, $\delta^*$, $\omega_0$, $\gamma_0$, $h_0>0$ such that if Condition \eqref{easy_condition} is satisfied, $0<\delta < \delta^*$, $\omega \geq \omega_0$ and \eqref{easy_condition_initial} is satisfied with $(V,W)$ being the solution of \eqref{FHN_pas} and  $(v^*,w^*)=(v_1,w_1)$ given by \eqref{definition_P1}, then hypothesis \ref{condition_approximation_DC} in Definition \ref{defi_hypotheses} is satisfied.
\end{proposition}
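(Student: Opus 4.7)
The proposition splits naturally into an algebraic part (establishing \ref{condition_parameters} and \ref{condition_parameters1} uniformly in $(\alpha,\sigma) \in [0,1]^2$) and a dynamical part (establishing \ref{condition_approximation_HF} and \ref{condition_approximation_DC} by a quasi-static steering argument applied to the partially averaged systems \eqref{PAS_slope_freq} and \eqref{FHN_pas}). For the algebraic part I would exploit compactness of $(\alpha,\sigma) \in [0,1]^2$: in \ref{condition_parameters} the cubic term $-(1-\alpha^2\rho^2/2-1/\gamma)^3$ is uniformly bounded on $[0,1]\times[1,\infty)$, while $\tfrac{9}{4}(\beta/\gamma - \sigma I_0)^2 \geq \tfrac{9}{4}(\beta/\gamma - |I_0|)^2$ grows quadratically in $\beta/\gamma$, so the inequality holds uniformly once $\gamma_0$ and $h_0$ are chosen large. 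In \ref{condition_parameters1} the first alternative $M \leq 0$ handles $\alpha^2\rho^2/2 \geq 1$; otherwise $M^{3/2}$ and $(1-\alpha^2\rho^2/2-1/\gamma)M^{1/2}$ are bounded uniformly in $(\alpha,\gamma)$ and the required inequality reduces to $\beta/\gamma - \sigma I_0 > (\text{bounded})$, which again holds for $\beta/\gamma$ large. By Proposition \ref{proposition_parameters}(i)--(ii) this gives \ref{conditionC1} and \ref{conditionC2}.

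For the dynamical part the plan is to compare the PAS solution $(V,W)$ with the equilibrium curve $(v_0(\lambda t), w_0(\lambda t))$ of the frozen system \eqref{parameters1}. Under \ref{conditionC1} the linearization at every frozen equilibrium is uniformly Hurwitz and of the structure required by Lemma \ref{derivative_lyapunov}, so Lemma \ref{lemma_lyapunov_nonlinear} applies to the deviation $(V - v_0(\lambda t), W - w_0(\lambda t))$ and yields
\[
\sup_{t \in [0,1/\lambda]} \|(V(t) - v_0(\lambda t), W(t) - w_0(\lambda t))\|_2^2 \;\leq\; C\bigl(\mu^{*\,2} + \lambda\bigr)
\]
for $\lambda^*, \mu^*$ small. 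Hypothesis \ref{condition_approximation_HF} is a bound of the form $\gamma^{-1} \cdot (\text{exp factor} \to 1) \cdot (\text{margin})^{-1} < 1/2$; the strong stability inequality \eqref{stronger_stability_1}, combined with the bound above, gives $\min_{s\in[0,1/\lambda]}(|V(s)|^2 + (\lambda s \rho)^2/2 - 1) \geq c/2$, while $\max|V|$ is bounded by a constant depending on $\max_\alpha |v_0(\alpha)|$ and $\max |\dot V| = O(\lambda)$ is obtained by differentiating the implicit relation \eqref{PAS_slope_freq_instantaneous_alpha} in $\alpha$. Choosing $\omega_0$ large enough to absorb the $1/\omega$ corrections and $\gamma_0 > 4/c$ then makes the quotient in \ref{condition_approximation_HF} smaller than $1/2$. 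The argument for \ref{condition_approximation_DC} is identical with \eqref{FHN_pas} and \eqref{stronger_stability_2} replacing \eqref{PAS_slope_freq} and \eqref{stronger_stability_1}. The additional assumptions $\rho^2 < 4/3$ and $I_0 \geq 0$ are used to keep the stability margin $c$ and the equilibrium locations $v_0(\alpha), v_0(\sigma)$ uniformly under control as $(\alpha,\sigma)$ vary over $[0,1]^2$.

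The main obstacle I expect is producing the $O(\lambda)$ bound on $\max_{s\in[0,1/\lambda]}|\dot V(s)|$ uniformly in $\lambda$ and $\omega$; a naive differentiation of the PAS equation gives only $O(1)$, which is insufficient for the $g(\rho,\lambda)/\omega$ term in \ref{condition_approximation_HF}. The remedy is to write $\dot V(t) = \lambda \tilde X_1'(\lambda t) + \partial_t(V - v_0(\lambda t))$, bound $\tilde X_1'$ by the inverse function theorem applied to \eqref{PAS_slope_freq_instantaneous_alpha}, and bound the remainder via the Lyapunov estimate in Lemma \ref{lemma_lyapunov_nonlinear}. Combined with the estimate $\|P'(\alpha)\|_2 \leq C_0 |a_{11}'|\,\|A\|_2^4/(\tr(A)\det(A))^2$ from Lemma \ref{derivative_lyapunov}, this closes the argument and makes all the smallness parameters $\lambda^*, \delta^*, \mu^*, \omega_0$ explicit in $(c,\gamma,\rho)$.
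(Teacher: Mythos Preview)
Your overall plan is on the right track and mirrors the paper: a compactness argument for \ref{condition_parameters}--\ref{condition_parameters1}, then a quasi-static comparison of $(V,W)$ with the equilibrium curve (this is exactly what Proposition~\ref{prop_slope_freq} already packages), followed by taking $\omega$ large to kill the $1/\omega$ corrections in \ref{condition_approximation_HF}/\ref{condition_approximation_DC}. Two points deserve correction, one a genuine gap and one a misidentified obstacle.

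\textbf{The gap: circularity in ``$\gamma_0>4/c$''.} The constant $c$ you invoke is the one in Condition~\ref{conditionC1}, but that condition is stated for \emph{fixed} $(\varepsilon,\beta,\gamma,\rho)$, so $c=c(\gamma,\beta)$. Choosing $\gamma_0>4/c$ therefore presupposes a lower bound on $c$ that is uniform over all $(\gamma,\beta)$ with $\gamma>\gamma_0$, $\beta/\gamma>h_0$, which you have not established. The paper handles this differently: instead of a $\gamma$-independent lower bound on the margin, it proves (via an auxiliary lemma and explicit evaluation of the cubic \eqref{cubic_problem1} at the point $-\sqrt{1+1/(\eta\gamma)}$) that $X_1(\alpha)<-\sqrt{1+1/(\eta\gamma)}$, hence $\min_\alpha\bigl(X_1(\alpha)^2-1+\alpha^2\rho^2/2\bigr)\geq 1/(\eta\gamma)$, so that the product $\gamma\cdot(\text{margin})\geq 1/\eta$ is what is controlled, not the margin alone. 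The hypothesis $\rho^2<4/3$ enters precisely here (it makes $2-\tfrac{3}{2}\rho^2>0$ so the cubic is positive at $-\sqrt{1+1/(\eta\gamma)}$ for $\gamma$ large), and $I_0\geq 0$ is used to pin down the sign of $v_0$ and the lower bound $X_1(\alpha)\geq -\max\{\sqrt{3},\beta\}$. Your vague remark that these assumptions ``keep the margin uniformly under control'' does not substitute for this computation.

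\textbf{The non-obstacle.} You flag as ``the main obstacle'' producing an $O(\lambda)$ bound on $\max_{s\in[0,1/\lambda]}|\dot V(s)|$, claiming that an $O(1)$ bound is insufficient for the $g(\rho,\lambda)/\omega$ term. This is not correct: in \ref{condition_approximation_HF} the quantity $g(\rho,\lambda)$ appears only divided by $\omega$, and $g(\rho,\lambda)=\tfrac{\rho^2\lambda}{2}+2|\rho|\lambda\max|V|+2|\rho|\max|\dot V|$ is $O(1)$ once $\max|V|$ and $\max|\dot V|$ are $O(1)$ (which follows from the positively-invariant-rectangle argument of Proposition~\ref{prop_properties_pas_HF}). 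Hence $g/\omega\to 0$ as $\omega\to\infty$, and one simply chooses $\omega_0$ large enough that $g/\omega\leq\tfrac12\min_s(|V|^2+(\lambda s\rho)^2/2-1)$. The decomposition $\dot V=\lambda\tilde X_1'(\lambda t)+\partial_t(V-X_1)$ and the Lyapunov refinement you propose are unnecessary; the paper uses only the crude $O(1)$ bound.
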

The proof is presented in Appendix \ref{appendix_proposition_easy_conditions}.

\subsection{Steering problem 1: The slope in the high-frequency term}\label{section_thm_slope_high_freq}

Since our goal is to approximate the solution of the averaged system \eqref{PAS_slope_freq} by the solution of the algebraic system \eqref{PAS_slope_freq_instantaneous}, it is crucial to have a good understanding of how the solutions of \eqref{PAS_slope_freq_instantaneous} behave. The following Lemma gives this.

\begin{lemma}[Properties of the algebraic system]\label{properties_algebraic_system_HF}
Suppose that $(\varepsilon, \beta, \gamma, \rho)$ satisfy Condition \ref{conditionC1} in Definition \ref{definition_condition_parameters}. Then for each $t\geq 0$ the solution $(X_1(t), Y_1(t))$ of the algebraic system \eqref{PAS_slope_freq_instantaneous} is unique and satisfies the following:
\begin{itemize}
\item[(i)] For each $t \geq 0$
\begin{equation}\label{uniform_estimate_X1}
X_1(t)^2 > 1- S(\lambda t)^2 \rho^2/2. 
\end{equation}
\item[(ii)] $X_1(t)$ and $Y_1(t)$ are continuous for all $t\geq 0$ and differentiable for $t\in (0, \frac{1}{\lambda})$.
\item[(iii)] $X_1(t)$  and $Y_1(t)$ are right differentiable at $t=0$ and left differentiable at $t= \frac{1}{\lambda}$.
\end{itemize}
\end{lemma}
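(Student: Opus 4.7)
The plan is to recognize that for each $\alpha \geq 0$ the algebraic system \eqref{PAS_slope_freq_instantaneous_alpha} is precisely the equilibrium equation for system \eqref{parameters1} with the parameter $S(\alpha)$ in place of $\alpha$. For $\alpha \in [0,1]$ we have $S(\alpha) = \alpha$, and Condition \ref{conditionC1} directly supplies both the uniqueness of $(\tilde{X}_1(\alpha), \tilde{Y}_1(\alpha))$ and the strict lower bound $\tilde{X}_1(\alpha)^2 - 1 + S(\alpha)^2 \rho^2/2 > c$. For $\alpha > 1$ we have $S(\alpha) = 1$, which reduces to the case $\alpha = 1$ in Condition \ref{conditionC1} and yields the same conclusions. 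Rescaling via $X_1(t) = \tilde{X}_1(\lambda t)$ and $Y_1(t) = \tilde{Y}_1(\lambda t)$ then produces uniqueness and the inequality \eqref{uniform_estimate_X1}, proving part (i).

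For parts (ii) and (iii), I would apply the implicit function theorem to the map
\[
F(\alpha, X, Y) = \bigl((1 - S(\alpha)^2 \rho^2 /2) X - X^3/3 - Y,\; X - \gamma Y + \beta\bigr).
\]
The Jacobian with respect to $(X, Y)$ has determinant $\gamma(X^2 - 1 + S(\alpha)^2 \rho^2/2) + 1$, which by part (i) is bounded below by $\gamma c + 1 > 0$ uniformly in $\alpha \geq 0$. Thus the regularity of $(\tilde{X}_1, \tilde{Y}_1)$ as a function of $\alpha$ is inherited from the regularity of $\alpha \mapsto S(\alpha)^2$. The function $S^2$ is continuous on $[0, \infty)$ and of class $C^\infty$ on $[0, 1) \cup (1, \infty)$, with a kink at $\alpha = 1$ where the left derivative equals $2$ and the right derivative equals $0$. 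Consequently $\tilde{X}_1$ and $\tilde{Y}_1$ are continuous on $[0, \infty)$, differentiable on $[0, 1) \cup (1, \infty)$, and possess one-sided derivatives at $\alpha = 1$. After rescaling, $\alpha = 0$ corresponds to $t = 0$ and $\alpha = 1$ to $t = 1/\lambda$, so parts (ii) and (iii) follow immediately.

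I do not expect any serious obstacle: the argument is essentially an application of the implicit function theorem, powered by the uniform non-degeneracy of the Jacobian that Condition \ref{conditionC1} guarantees. The only slightly delicate point is the kink of $S$ at $\alpha = 1$, which is the reason differentiability is asserted only on the open interval $(0, 1/\lambda)$ in the $t$-variable, with one-sided differentiability at the endpoint $t = 1/\lambda$.
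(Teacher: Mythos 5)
Your proposal is correct and follows essentially the same route as the paper: identify the algebraic system with the equilibrium equations of the parametrized ODE so that Condition C1 yields uniqueness and the lower bound, then apply the implicit/inverse function theorem with the Jacobian $\gamma(\tilde{X}_1(\alpha)^2 - 1 + S(\alpha)^2\rho^2/2) + 1 > 0$ to obtain continuity, differentiability away from the kink of $S$, and one-sided derivatives at the endpoints after rescaling $\alpha = \lambda t$.
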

\begin{proof}
First we note that for each $\alpha \in [0,1]$ the equation \eqref{PAS_slope_freq_instantaneous_alpha}  for $(\tilde{X}_1(\alpha),\tilde{Y}_1(\alpha))$ characterizes the equilibrium points of system \eqref{parameters1} and therefore Condition \ref{conditionC1} guarantees its uniqueness and implies that for each $\alpha\in[0,1]$ 
\begin{equation*}
\tilde{X}_1(\alpha)^2 >1-\alpha^2 \rho^2/2,
\end{equation*}
thus the change of variable  $\alpha = \lambda t$ gives us \eqref{uniform_estimate_X1} for $t\in [0,1/\lambda]$. For $t > 1/\lambda$ we use that $X_1(t) = X_1(1/\lambda)$.

For (ii), because of the inverse function theorem, the solution of \eqref{PAS_slope_freq_instantaneous_alpha} depends continuously differentiable on $t$ in an open interval if the Jacobian never vanishes. We can verify this directly,
\begin{equation*}
J(\alpha)= \det\left(\begin{array}{cc}
1-\alpha^2\rho^2/2 - \tilde{X}_1(\alpha)^2 & -1\\
1 & -\gamma
\end{array}\right) = \gamma \left(\tilde{X}_1(\alpha)^2 -1 +\alpha^2 \rho^2/2\right) + 1,
\end{equation*}
which is strictly positive for $\alpha\in (0,1)$ because of (i). The continuity of the solution at $\alpha=0$ and $\alpha= 1$ is a consequence of the uniqueness assumption in Condition \ref{conditionC1}.

For (iii) we use that the derivatives $\frac{d}{d\alpha}\tilde{X}_1$ and $\frac{d}{d\alpha}\tilde{Y}_1$ can be computed for all $\alpha\in (0,1)$ using implicit differentiation in \eqref{PAS_slope_freq_instantaneous} 
\begin{equation}\label{difference_quasi_static_2}
\frac{d}{d\alpha} \tilde{X}_1(\alpha) = \frac{-\alpha^2 \rho^2 \tilde{X}_1(\alpha)}{1-\alpha^2 \rho^2/2 -\tilde{X}_1(\alpha)^2- 1/\gamma} = \frac{ \gamma \alpha^2 \rho^2 \tilde{X}_1(\alpha)}{J(\alpha)},\quad \frac{d}{d\alpha}\tilde{Y}_1 = \frac{1}{\gamma}\frac{d}{d\alpha}\tilde{X}_1,
\end{equation}
with the denominator $J(\alpha)$ never vanishing because of (i). A change of variable gives the differentiability of $X_1(t)$ and $Y_1(t)$ for all $t \in (0,1/\lambda)$, which together with the continuity for $t \geq 0$ implies that $X_1(t)$ is right differentiable at $t=0$ and left differentiable at $t= \frac{1}{\lambda}$. 
This concludes the proof of Lemma \ref{PAS_slope_freq_instantaneous}.
\end{proof}

The following result shows us how the general steering result given by Lemma \ref{lemma_lyapunov_nonlinear} can be used to study the steering properties of the averaged system.

\begin{proposition}[Steering result for system \eqref{PAS_slope_freq}]\label{prop_slope_freq}
Let $(\varepsilon,\beta, \gamma,  \rho)$ satisfy Condition \ref{conditionC1} in Definition \ref{definition_condition_parameters} and let $(v_0,w_0)$ be given by \eqref{definition_P0}. There exists $\hat{\mu}>0$ such that if the initial condition satisfy
\[
\mu^2 = |V(0) - v_0|^2 + |W(0)-w_0|^2\leq \hat{\mu}^2,
\]
then there exists $\hat{\lambda} > 0$ independent of $\mu$, and constants $C = C(\varepsilon, \beta,  \gamma, \rho) > 0$ and  $C_1 = C_1(\varepsilon, \beta, \gamma, \rho, \hat{\lambda}) > 0$ such that for any $0< \lambda < \hat{\lambda}$ the corresponding solution $(V,W)\in C([0,1/\lambda]; \R^2)\cap C^1((0,1/\lambda); \R^2)$ of system \eqref{PAS_slope_freq} and the corresponding solution $(X_1(t), Y_1(t))\in C([0,1/\lambda]; \R^2) \cap C^1((0,1/\lambda);\R^2)$ of system \eqref{PAS_slope_freq_instantaneous} satisfy
\begin{equation}\label{estimate_difference_HF}
|V(t) - X_1(t)|^2 + |W(t)-Y_1(t)|^2 \leq C\mu^2 + C_1 \lambda , \quad t \in [0,1/\lambda].
\end{equation}
Additionally, we have the following estimate for the maximum permissible slope
\begin{equation}\label{estimate_slope1_HF}
\hat{\lambda} \leq \inf_{\alpha\in(0,1]} \frac{2}{\alpha \rho^2}\frac{(\tilde{X}_1(\alpha)^2 - 1 + \varepsilon\gamma + \alpha^2\rho^2/2)^2 ( \tilde{X}_1(\alpha)^2 -1 + 1/\gamma + \alpha^2\rho^2/2 )}{3 \tilde{X}_1(\alpha)^2 -1 + 1/\gamma + \alpha^2\rho^2/2 },
\end{equation}
and 
\begin{equation}\label{estimate_slope2_HF}
\hat{\lambda} \geq \frac{1}{C_0} \inf_{\alpha\in(0,1]} \frac{2}{\alpha \rho^2}\frac{(\tilde{X}_1(\alpha)^2 - 1 + \varepsilon\gamma + \alpha^2\rho^2/2)^2 }{ \left((\tilde{X}_1(\alpha)^2 - 1 + \alpha \rho^2/2 )^2 
+ 1+ \varepsilon^2 +\varepsilon^2\gamma^2 \right)^2 } \frac{( \tilde{X}_1(\alpha)^2 -1 + 1/\gamma + \alpha^2\rho^2/2 )^3}{\left(3 \tilde{X}_1(\alpha)^2 -1 + 1/\gamma + \alpha^2\rho^2/2 \right)},
\end{equation}
where the functions $(\tilde{X}_1(\alpha), \tilde{Y}_1(\alpha))\in C([0,1]; \R^2)$ are defined by  \eqref{PAS_slope_freq_instantaneous_alpha} and $C_0 > 0$ is a universal constant that does not depend on the parameters of the system.
\end{proposition}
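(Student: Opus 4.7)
The plan is to reduce Proposition \ref{prop_slope_freq} to the abstract framework of Lemma \ref{lemma_lyapunov_nonlinear} by passing to error coordinates centered at the algebraic reference. Set $Z_v(t) = V(t) - X_1(t)$, $Z_w(t) = W(t) - Y_1(t)$. Since $(X_1(t), Y_1(t)) = (\tilde{X}_1(\lambda t), \tilde{Y}_1(\lambda t))$ solves the algebraic system \eqref{PAS_slope_freq_instantaneous_alpha} pointwise and $\dot{X}_1 = \lambda \tilde{X}_1'(\lambda t)$, $\dot{Y}_1 = \lambda \tilde{Y}_1'(\lambda t)$, substitution into \eqref{PAS_slope_freq} produces a system of the exact shape $\dot{Z} = A(\lambda t) Z + B(Z,\lambda t) + \lambda C(\lambda t)$ with
\[
A(\alpha) = \begin{pmatrix} 1 - S(\alpha)^2\rho^2/2 - \tilde{X}_1(\alpha)^2 & -1 \\ \varepsilon & -\varepsilon\gamma \end{pmatrix}, \quad B(Z,\alpha) = \begin{pmatrix} -\tilde{X}_1(\alpha) Z_v^2 - Z_v^3/3 \\ 0 \end{pmatrix},
\]
and $C(\alpha) = -(\tilde{X}_1'(\alpha), \tilde{Y}_1'(\alpha))^T$.

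The next step is to verify the hypotheses of Lemma \ref{lemma_lyapunov_nonlinear}. Uniform Hurwitzness of $A(\alpha)$ on $[0,1]$ follows directly from Condition \ref{conditionC1}: the bound $\tilde{X}_1(\alpha)^2 - 1 + \alpha^2\rho^2/2 > c$ implies $\tr A(\alpha) < -\varepsilon\gamma < 0$ and $\det A(\alpha) = \varepsilon[\gamma(\tilde{X}_1^2 - 1 + \alpha^2\rho^2/2) + 1] > 0$. The nonlinearity satisfies $\|B(Z,\alpha)\|_2 \leq f(\|Z\|_2)\|Z\|_2^2$ with $f(r) = \max_{\alpha\in[0,1]}|\tilde{X}_1(\alpha)| + r/3$, which is continuous and nondecreasing. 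The vector $C(\alpha)$ is uniformly bounded via the explicit formulas \eqref{difference_quasi_static_2}, which hold on $(0,1)$ and extend by the one-sided differentiability at the endpoints from Lemma \ref{properties_algebraic_system_HF}.

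Applying Lemma \ref{lemma_lyapunov_nonlinear} then yields constants $\hat{M}, \theta > 0$ and a Lyapunov function $L(\lambda t, Z) = Z^T P(\lambda t) Z$, and, provided the closing condition $(c_2/c_1)\mu^2 + (c_3^2/(\theta c_1)) \lambda \leq \hat{M}$ holds, the bound $\|Z(t)\|_2^2 \leq (c_2/c_1)\mu^2 + (c_3^2/(\theta c_1)) \lambda$ for all $t \in [0,1/\lambda]$. Choosing $\hat{\mu}$ and $\hat{\lambda}$ small enough to enforce this gives exactly \eqref{estimate_difference_HF} with $C = c_2/c_1$ and $C_1 = c_3^2/(\theta c_1)$.

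For the quantitative bounds \eqref{estimate_slope1_HF}–\eqref{estimate_slope2_HF} on $\hat{\lambda} = 1/\sup_{\alpha}\|P'(\alpha)\|_2$, I would invoke Lemma \ref{derivative_lyapunov}, whose structural hypotheses are met since $A(\alpha)$ has $\alpha$-dependence only in the $(1,1)$-entry and $a_{22} = -\varepsilon\gamma < 0$. Using implicit differentiation \eqref{difference_quasi_static_2} to compute $a_{11}'(\alpha) = -2\tilde{X}_1(\alpha)\tilde{X}_1'(\alpha) - \alpha\rho^2$ in closed form, and expressing $\tr A(\alpha)$, $\det A(\alpha)$, $\|A(\alpha)\|_2$ in terms of $\tilde{X}_1(\alpha)^2 - 1 + \alpha^2\rho^2/2$, $\varepsilon$, and $\gamma$, substitution into the two-sided estimate of Lemma \ref{derivative_lyapunov} and taking the infimum over $\alpha \in (0,1]$ yields the stated bounds. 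The main obstacle I anticipate is the joint choice of $\hat{\mu}$ and $\hat{\lambda}$ so that the trajectory stays inside the ball of radius $\hat{M}$ where the strict negativity \eqref{estimate_exponential_stability} of the Lyapunov derivative holds, uniformly in $\alpha \in [0,1]$; this is a delicate bootstrap because $\hat{M}$ itself is constrained by $f$ and $\sup_\alpha\|P(\alpha)\|_2$, and because the estimate must be closed globally in $t \in [0,1/\lambda]$ despite $1/\lambda \to \infty$ as $\lambda \to 0$.
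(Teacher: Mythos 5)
Your proposal is correct and follows essentially the same route as the paper: pass to the error coordinates $(V-X_1,\,W-Y_1)$, identify the resulting system as $\dot Z = A(\lambda t)Z + B(Z,\lambda t)+\lambda C(\lambda t)$ with exactly the $A$, $B$, $C$ the paper uses, verify the hypotheses of Lemma \ref{lemma_lyapunov_nonlinear} via Condition \ref{conditionC1} and Lemma \ref{properties_algebraic_system_HF}, and obtain the slope bounds from Lemma \ref{derivative_lyapunov}. Your closed-form $a_{11}'(\alpha)=-2\tilde X_1\tilde X_1'-\alpha\rho^2$ combined with the implicit-differentiation formula indeed reproduces the factors appearing in \eqref{estimate_slope1_HF}--\eqref{estimate_slope2_HF}, so no gap remains.
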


\begin{remark}
  Proposition \ref{prop_slope_freq} tells us that for $\mu >0$ and $\lambda>0$ small enough, System \eqref{PAS_slope_freq} does not generate action potentials for all $t \geq 0$.
\end{remark}

\begin{proof}
The idea of the proof is to show that for $\lambda>0$ small we can estimate the difference between the solution of \eqref{PAS_slope_freq} and the solution of the algebraic system \eqref{PAS_slope_freq_instantaneous} by applying Lemma \ref{lemma_lyapunov_nonlinear} to obtain estimate \eqref{estimate_difference_HF}.

We first look at what happens for time $0\leq t \leq \frac{1}{\lambda}$. Let $(X_1,Y_1)$ be the solution of \eqref{PAS_slope_freq_instantaneous}, and $(V,W)$ be the solution of \eqref{PAS_slope_freq}. We want to compare $(X_1, Y_1)$ with $(V,W)$, by looking at $(R_v, R_w) = (V-X_1,W-Y_1)$ for $0\leq t \leq \frac{1}{\lambda}$ which satisfy the system
\begin{equation}\label{difference_system_slope_freq}
\left\{
\begin{array}{rl}
\dot{R}_v &= (1-X_1^2-\lambda^2t^2\rho^2/2) R_v - X_1 R_v^2 -\frac{1}{3}R_v^3 - R_w-\dot{X_1},\\
\dot{R}_w &= \varepsilon\left(R_v -\gamma R_w\right) - \dot{Y_1},\\
R_v(0) &=V(0)-X_1(0), \quad R_w(0) = W(0)-Y_1(0).
\end{array}\right.
\end{equation}
We want to apply Lemma \ref{lemma_lyapunov_nonlinear} to argue that for $\lambda>0$ small, the Lyapunov function for the linear part with time-frozen coefficients can be used as a Lyapunov function for the full system \eqref{difference_system_slope_freq}. The linear part of \eqref{difference_system_slope_freq} is given by $\frac{d}{dt}Z = A(\lambda t) Z$, where
\begin{equation}\label{matrix_linear_slope_HF}
A(\alpha) = \left(\begin{array}{cc} 1-\tilde{X}_1(\alpha)^2-\alpha^2\rho^2/2 & -1\\ \varepsilon & - \varepsilon \gamma
\end{array}\right), 
\end{equation}
and $\tilde{X}_1(\alpha): [0,1]\to \R$ is the solution of \eqref{PAS_slope_freq_instantaneous_alpha}. Because of condition \ref{conditionC1}, Lemma \ref{properties_algebraic_system_HF} tells us that $\tr A(\alpha)<0 $ and $\det A(\alpha)>0$. Hence, $A(\alpha)$ is Hurwitz for $\alpha\in[0,1]$. Thus, for each fixed $\alpha$ we can find a Lyapunov function of the form $L(\alpha,x) = x^T P(\alpha) x$, where $P(\alpha)$ satisfies
\begin{equation}\label{riccati_slope_freq}
P(\alpha) A(\alpha) + A^T(\alpha) P(\alpha) = -I. 
\end{equation}
To apply Lemma \ref{lemma_lyapunov_nonlinear}  we choose
\begin{equation*}
B(W,\alpha) = \binom{-\tilde{X}_1(\alpha) W_1^2 - \frac{1}{3} W_1^3}{0}, \quad C(\alpha) = \binom{-\frac{d}{d\alpha} \tilde{X}_1(\alpha)}{-\frac{d}{d\alpha}\tilde{Y}_1(\alpha)},
\end{equation*}
which means that system \eqref{difference_system_slope_freq} can be written as \eqref{system_W_lemma}. Lemma \ref{properties_algebraic_system_HF} tell us that the solution $(\tilde{X}_1(\alpha), \tilde{Y}_1(\alpha))$ of \eqref{PAS_slope_freq_instantaneous_alpha} is continuous and therefore uniformly bounded for $\alpha \in [0,1]$. This implies that 
\begin{equation*}
\sup_{\alpha\in[0,1]}\|B(W,\alpha)\|_2 \leq \left(\sup_{\alpha\in[0,1]}|\tilde{X}_1(\alpha)|+ \frac{1}{3} \|W\|_2\right) \|W\|_2^2,
\end{equation*}
which gives us condition (i) in Lemma \ref{lemma_lyapunov_nonlinear}. For condition (ii), Lemma \ref{properties_algebraic_system_HF} tells us that the denominator in \eqref{difference_quasi_static_2} is uniformly bounded away from zero
\begin{equation*}
\tilde{X}_1(\alpha)^2 - 1  + \alpha^2 \rho^2/2 + 1/\gamma > 1/\gamma> 0 , \quad \forall \alpha\in [0,1],
\end{equation*}
and the numerator is bounded because $\tilde{X}_1(\alpha)$ is uniformly bounded, which gives us
\[
\left|\frac{d}{d\alpha}\tilde{X}_1(\alpha)\right|\leq a, \quad \left|\frac{d}{d\alpha}\tilde{Y}_1(\alpha) \right| \leq b ,
\]
for some constants $a$, $b \geq 0$, and therefore $\sup_{\alpha\in[0,1]}\|C(\alpha)\|_2 \leq  \sqrt{a^2 + b^2}.$  This gives condition (ii) in Lemma \ref{lemma_lyapunov_nonlinear}. Finally, we can apply Lemma \ref{lemma_lyapunov_nonlinear} to conclude there exists $\hat{\mu}>0$ and $\hat{\lambda} >0$ such that if $0<\lambda < \hat{\lambda}$  and
\[
\|W_0\|_2^2  = |V(0)-v_0|^2 + |W(0)-w_0|^2\leq \hat{\mu}^2,
\]
then
\[
\|W(t)\|_2^2  \leq C \|W_0\|_2^2 + C_1\lambda,\quad 0<t<\frac{1}{\lambda},
\]
for constants $C = C(\varepsilon, \beta,  \gamma, \rho) > 0$ and  $C_1 = C_1(\varepsilon, \beta, \gamma, \rho, \hat{\lambda}) > 0$, which gives us the first part of the proposition. Estimates \eqref{estimate_slope1_HF} and \eqref{estimate_slope2_HF} are obtained by a direct application of Lemma \ref{derivative_lyapunov}. This ends the proof of Proposition \ref{prop_slope_freq}.
\end{proof}

\subsection{Steering problem 2: The slope problem for the DC term}\label{section_thm_slow_steer}

In this subsection, we prove the analogous of Proposition \ref{prop_slope_freq} by comparing the partially averaged system \eqref{FHN_pas} with the algebraic system  \eqref{instantaneous_system}.

\begin{lemma}[Properties of the algebraic system]\label{properties_algebraic_system_DC}
Suppose that $(\varepsilon, \beta, \gamma, \rho, I_0)$ satisfy Condition \ref{conditionC2} in Definition \ref{definition_condition_parameters}. Then for each $t\geq 0$ the solution $(X_2(t), Y_2(t))$ of the algebraic system \eqref{instantaneous_system_alpha} is unique and satisfies the following:
\begin{itemize}
\item[(i)] For each $t \geq 0$
\[
X_2(t)^2 > 1 - \rho^2/2.
\]
\item[(ii)] $X_2(t)$ and $Y_2(t)$ are continuous for all $t\geq 0$ and differentiable for $t\in (0,1/\delta)$.
\item[(iii)] $X_2(t)$ and $Y_2(t)$ are right differentiable at $t=0$ and left differentiable at $t= 1/\delta$.
\end{itemize}
\end{lemma}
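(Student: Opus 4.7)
The plan is to follow closely the structure of the proof of Lemma~\ref{properties_algebraic_system_HF}, with Condition~\ref{conditionC2} replacing Condition~\ref{conditionC1}; the only real difference is that the slowly varying parameter here is the multiplier of $I_0$ rather than the coefficient of $\rho^2/2$, while the cubic term $\tilde{X}_2^3/3$ and the linear equation for $\tilde{Y}_2$ are unchanged in form.

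For (i), observe that on $\alpha\in[0,1]$ one has $S(\alpha)=\alpha$, so the algebraic system \eqref{instantaneous_system_alpha} is exactly the equilibrium condition for system \eqref{parameters2} with $\sigma=\alpha$. Condition~\ref{conditionC2} therefore yields a unique solution $(\tilde{X}_2(\alpha),\tilde{Y}_2(\alpha))$ on $[0,1]$ satisfying $\tilde{X}_2(\alpha)^2>1-\rho^2/2$. The change of variable $\alpha=\delta t$ gives the desired bound for $t\in[0,1/\delta]$, and for $t>1/\delta$ the identity $S(\delta t)=1$ forces $X_2(t)=X_2(1/\delta)$, so the inequality persists.

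For (ii), I would apply the inverse function theorem to \eqref{instantaneous_system_alpha}, whose Jacobian with respect to $(\tilde{X}_2,\tilde{Y}_2)$ is
\[
J(\alpha)=\det\begin{pmatrix} 1-\rho^2/2-\tilde{X}_2(\alpha)^2 & -1 \\ 1 & -\gamma \end{pmatrix}=\gamma\bigl(\tilde{X}_2(\alpha)^2-1+\rho^2/2\bigr)+1,
\]
which is strictly positive on $[0,1]$ by (i). Since $S$ is $C^1$ on $(0,1)$, the solution $(\tilde{X}_2,\tilde{Y}_2)$ is $C^1$ there, and continuity at $\alpha=0,1$ follows from the uniqueness asserted in Condition~\ref{conditionC2}. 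Composing with $\alpha=\delta t$ yields continuity of $(X_2,Y_2)$ on $[0,\infty)$ and differentiability on $(0,1/\delta)$ (with trivial constant extension for $t>1/\delta$).

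For (iii), implicit differentiation of \eqref{instantaneous_system_alpha} on $(0,1)$, together with $\tilde{Y}_2'=\tilde{X}_2'/\gamma$ from the second equation and $S'(\alpha)=1$, gives
\[
\frac{d\tilde{X}_2}{d\alpha}=\frac{\gamma I_0}{J(\alpha)},\qquad \frac{d\tilde{Y}_2}{d\alpha}=\frac{I_0}{J(\alpha)},
\]
both with denominator bounded away from zero on $[0,1]$. Hence the two expressions extend continuously to $\alpha\in\{0,1\}$, and the mean value theorem combined with continuity of the derivatives on the open interval supplies one-sided differentiability of $X_2(t)$ and $Y_2(t)$ at $t=0$ and $t=1/\delta$ after rescaling $\alpha=\delta t$. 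The only subtle point, but no real obstacle, is that $S$ has corners at $0$ and $1$, which is precisely why only one-sided, rather than two-sided, differentiability is asserted at these endpoints.
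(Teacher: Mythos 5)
Your proof is correct and takes the same route the paper intends: the paper simply declares this lemma ``analogous to the proof of Lemma~\ref{properties_algebraic_system_HF},'' and you have carried out that analogy faithfully, with the Jacobian $J(\alpha)=\gamma(\tilde{X}_2(\alpha)^2-1+\rho^2/2)+1$ and the implicit derivatives $\tilde{X}_2'=\gamma I_0/J(\alpha)$, $\tilde{Y}_2'=I_0/J(\alpha)$ agreeing with the paper's formula \eqref{estimate_instantaneous_system}. Nothing further is needed.
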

The proof of this fact is analogous to the proof of Lemma \ref{properties_algebraic_system_HF}.

\begin{proposition}\label{lemma_slow_steer}
Suppose that $(\varepsilon, \beta, \gamma, \rho, I_0)$ satisfy Condition \ref{conditionC2} in Definition \ref{definition_condition_parameters}. There exists $\hat{\mu}= \hat{\mu}(\rho,I_0)>0$ such that if initial condition satisfying
\[
\mu^2 = |V(0) - v_1 |^2+ |W(0) - w_1|^2 \leq  \hat{\mu}^2,
\]
where $(v_1, w_1)$ is given by \eqref{definition_P1}. Then there exists $\hat{\delta}>0$ independent of $\mu > 0$, and constants $C = C(\varepsilon, \beta,  \gamma, \rho,I_0) > 0$ and  $C_1 = C_1(\varepsilon, \beta, \gamma, \rho,I_0,\hat{\delta}) > 0$, and  such that for any $0<\delta < \hat{\delta}$ the solution $(V,W)$ of system \eqref{FHN_pas} and the solution $(X_2(t),Y_2(t))$ of system \eqref{instantaneous_system} can be compared by
\begin{equation*}
|V(t) - X_2(t)|^2 + |W(t) - Y_2(t)|^2 \leq C \mu^2 +C_1 \delta, \quad t \in [0,1/\delta].
\end{equation*}

Additionally, we have the following estimate of the size of the maximum slope
\[
\hat{\delta} \leq \frac{1}{I_0}\inf_{\alpha\in[0,1] }\frac{(\tilde{X}_2(\alpha)^2 - 1 + 1/\gamma + \rho^2/2) (\tilde{X}_2(\alpha)^2-1+ \rho^2/2 + \varepsilon \gamma)^2}{\tilde{X}_2(\alpha) }, 
\]
and 
\begin{equation*}
\hat{\delta} \geq \frac{1}{2C_0 I_0}\inf_{\alpha\in[0,1] }\frac{ (\tilde{X}_2(\alpha)^2 - 1 + 1/\gamma + \rho^2/2) (\tilde{X}_2(\alpha)^2-1+ \rho^2/2 + \varepsilon \gamma)^2  ( \varepsilon \gamma (\tilde{X}_2(\alpha)^2 -1 + \rho^2/2) + \varepsilon )^2}{ \tilde{X}_2(\alpha) \left((\tilde{X}_2(\alpha)^2 - 1 + \rho^2/2)^2 +1 +\varepsilon^2 + \varepsilon^2\gamma^2\right)^2}
 ,
\end{equation*}
where the functions $(\tilde{X}_2(\alpha),\tilde{Y}_2(\alpha))\in C([0,1];\R^2)$ are defined by \eqref{instantaneous_system} and $C_0$ is a universal constant that does not depend on the parameters of the system.

\end{proposition}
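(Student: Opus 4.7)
The plan is to mirror the proof of Proposition \ref{prop_slope_freq} but with the time-varying coefficient now appearing in the source term rather than in the linear coefficient. Concretely, I would introduce the difference variables $(R_v, R_w) = (V - X_2, W - Y_2)$, where $(V,W)$ solves \eqref{FHN_pas} and $(X_2, Y_2)$ is obtained from \eqref{instantaneous_system_alpha}, and derive the closed system they satisfy on $[0, 1/\delta]$. Expanding $V^3 - X_2^3 = R_v(3 X_2^2 + 3 X_2 R_v + R_v^2)$ and using that $(X_2,Y_2)$ algebraically kills the $\varepsilon$-independent terms in \eqref{FHN_pas}, one obtains
\begin{equation*}
\begin{cases}
\dot{R}_v = (1 - \rho^2/2 - X_2^2) R_v - X_2 R_v^2 - \tfrac{1}{3} R_v^3 - R_w - \dot{X}_2, \\
\dot{R}_w = \varepsilon R_v - \varepsilon \gamma R_w - \dot{Y}_2,
\end{cases}
\end{equation*}
with $\dot{X}_2 = \delta (\tilde{X}_2)'(\delta t)$ and $\dot{Y}_2 = \delta (\tilde{Y}_2)'(\delta t)$. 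This is already in the form \eqref{system_W_lemma} of Lemma \ref{lemma_lyapunov_nonlinear}, with slope parameter $\delta$, linear coefficient matrix
\begin{equation*}
A(\alpha) = \begin{pmatrix} 1 - \rho^2/2 - \tilde{X}_2(\alpha)^2 & -1 \\ \varepsilon & -\varepsilon \gamma \end{pmatrix},
\end{equation*}
quadratic nonlinearity $B(R,\alpha) = \bigl(-\tilde{X}_2(\alpha) R_1^2 - R_1^3/3, 0\bigr)^T$, and forcing $C(\alpha) = -((\tilde{X}_2)'(\alpha), (\tilde{Y}_2)'(\alpha))^T$.

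Next I would verify the hypotheses of Lemma \ref{lemma_lyapunov_nonlinear}. By Lemma \ref{properties_algebraic_system_DC}(i), $\tilde{X}_2(\alpha)^2 > 1 - \rho^2/2$ uniformly in $\alpha \in [0,1]$, which yields $\tr A(\alpha) \leq -\varepsilon\gamma < 0$ and $\det A(\alpha) = \varepsilon(\gamma(\tilde{X}_2(\alpha)^2 - 1 + \rho^2/2) + 1) > 0$; hence $A(\alpha)$ is uniformly Hurwitz on $[0,1]$ and the Riccati equation \eqref{riccati_slope_freq_lemma} has a unique solution $P(\alpha)$ with the required two-sided Lyapunov bounds. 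The bound in (i) of Lemma \ref{lemma_lyapunov_nonlinear} follows from $\|B(R,\alpha)\|_2 \leq (\sup_{\alpha}|\tilde{X}_2(\alpha)| + \|R\|_2/3)\|R\|_2^2$, with $f$ continuous and monotone, using the continuity of $\tilde{X}_2$ given by Lemma \ref{properties_algebraic_system_DC}(ii). For (ii), implicit differentiation of \eqref{instantaneous_system_alpha} on $(0,1)$ yields
\begin{equation*}
(\tilde{X}_2)'(\alpha) = \frac{I_0}{\tilde{X}_2(\alpha)^2 - 1 + \rho^2/2 + 1/\gamma}, \qquad (\tilde{Y}_2)'(\alpha) = \frac{1}{\gamma}(\tilde{X}_2)'(\alpha),
\end{equation*}
and the denominator is uniformly bounded below by $1/\gamma$ thanks to Lemma \ref{properties_algebraic_system_DC}(i); therefore $\|C(\alpha)\|_2 \leq c_3$ uniformly on $[0,1]$.

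With all hypotheses in place, Lemma \ref{lemma_lyapunov_nonlinear} produces $\hat{\mu}, \hat{\delta} > 0$ such that whenever $\|(V(0)-v_1, W(0)-w_1)\|_2^2 \leq \mu^2 \leq \hat{\mu}^2$ and $0 < \delta < \hat{\delta}$, the Lyapunov bound $\|(R_v, R_w)(t)\|_2^2 \leq C\mu^2 + C_1 \delta$ holds on $[0, 1/\delta]$, which is exactly the quantitative estimate claimed. Finally, for the bounds on $\hat{\delta}$, I would invoke Lemma \ref{derivative_lyapunov} with $a_{11}(\alpha) = 1 - \rho^2/2 - \tilde{X}_2(\alpha)^2$ and constant $a_{12}, a_{21}, a_{22}$, noting that
\begin{equation*}
a_{11}'(\alpha) = -2 \tilde{X}_2(\alpha)(\tilde{X}_2)'(\alpha) = \frac{-2 I_0\, \tilde{X}_2(\alpha)}{\tilde{X}_2(\alpha)^2 - 1 + \rho^2/2 + 1/\gamma},
\end{equation*}
together with $\tr A(\alpha) = -(\tilde{X}_2(\alpha)^2 - 1 + \rho^2/2 + \varepsilon\gamma)$ and $\det A(\alpha) = \varepsilon(\gamma(\tilde{X}_2(\alpha)^2 - 1 + \rho^2/2) + 1)$. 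Substituting these into the two-sided estimate of Lemma \ref{derivative_lyapunov} for $\|P'(\alpha)\|_2$ and then taking the infimum over $\alpha \in [0,1]$ yields the stated upper and lower bounds on $\hat{\delta}$, with the universal constant $C_0$ inherited from Lemma \ref{derivative_lyapunov}.

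The main obstacle is essentially bookkeeping: checking that Condition \ref{conditionC2} provides just enough strict positivity of $\tilde{X}_2(\alpha)^2 - 1 + \rho^2/2$ to simultaneously control the Hurwitz property of $A(\alpha)$, the boundedness of the forcing term $C(\alpha)$ through the implicit-function denominator, and the quantitative Lyapunov derivative estimate. All of these are guaranteed, uniformly in $\alpha \in [0,1]$, by the constant $c > 0$ in \eqref{stronger_stability_2}, so no new analytic difficulty arises beyond what was already handled in Proposition \ref{prop_slope_freq}; the algebraic system \eqref{instantaneous_system_alpha} is slightly simpler here because the time dependence enters only through the affine source $S(\alpha) I_0$ rather than through the quadratic coefficient.
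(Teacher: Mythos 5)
Your proposal is correct and follows essentially the same route as the paper: the same difference system for $(R_v,R_w)=(V-X_2,W-Y_2)$, the same identification of $A(\alpha)$, $B$, and $C$ so that Lemma \ref{lemma_lyapunov_nonlinear} applies (with the Hurwitz property and the bound on $C(\alpha)$ both coming from Lemma \ref{properties_algebraic_system_DC} and the implicit differentiation of \eqref{instantaneous_system_alpha}), and the same final application of Lemma \ref{derivative_lyapunov} to extract the two-sided bounds on $\hat{\delta}$. Your explicit computation of $a_{11}'(\alpha)$ and the substitution into the trace/determinant formulas correctly reproduces the stated estimates.
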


\begin{remark}
  This proposition is analogous to Proposition \ref{prop_slope_freq} and tells us that for $\mu >0$ and $\delta>0$ small enough, the system does not generate action potentials for all $t \geq 0$.
\end{remark}

\begin{proof}
Let $(V,W)$ be the unique solution of system \eqref{FHN_pas} and let $(X_2(t),Y_2(t)): [0,1/\delta] \to \R^2$ be the unique solution to \eqref{instantaneous_system}. Consider $(R_v, R_w) = (V-X_2,W-Y_2)$ which satisfy the system
\begin{equation}\label{difference_system}
\left\{
\begin{array}{rl}
\dot{R}_v &= (1-X_2^2-\rho^2/2) R_v - X_2 R_v^2 -\frac{1}{3}R_v^3 - R_w-\dot{X_2},\\
\dot{R}_w &= \varepsilon\left(R_v -\gamma R_w\right) - \dot{Y_2},\\
R_v(0) &= V(0)-X(0), \quad R_w(0) = W(0)-Y(0).
\end{array}\right.
\end{equation}
The derivatives $\dot{X}_2$ and $\dot{Y}_2$ are computed using implicit differentiation in \eqref{instantaneous_system_alpha}, which gives us
\begin{equation}\label{estimate_instantaneous_system}
\frac{d}{d\alpha}\tilde{X}_2 = \frac{I_0}{\tilde{X}_2(\alpha)^2-1+1/\gamma +\rho^2/2}, \quad \frac{d}{d\alpha}\tilde{Y}_2 = \frac{1}{\gamma}\frac{d}{d\alpha} \tilde{X}_2.
\end{equation}
As in the Proof of Proposition \ref{prop_slope_freq}, we want to apply Lemma \ref{lemma_lyapunov_nonlinear}. For this purpose, we look at the linear part of system \eqref{difference_system} given by $\dot{Z} = A(\delta t) Z$ where
\begin{equation}\label{coefficient_matrix_linear_DC_slope}
A(\alpha) = \left(\begin{array}{cc} 1-\tilde{X}_2(\alpha)^2-\rho^2/2 & -1\\ \varepsilon & -\varepsilon \gamma
\end{array} \right),\quad \alpha \in [0,1].
\end{equation}
Because of Condition \ref{conditionC2}, Lemma \ref{properties_algebraic_system_DC} tells us that $\tr(A(\alpha) )<0$, $\det A(\alpha)>0$ for $\alpha \in [0,1]$. This implies that for each fixed $\alpha\in [0,1]$ we can find a Lyapunov function of the form $L(\alpha,x) = x^T P(\alpha) x$, where $P(\alpha)$ satisfies
\begin{equation}\label{equation_P}
P(\alpha) A(\alpha) + A^T(\alpha) P(\tau) = -I.
\end{equation}
Next, by choosing
\[
B(W,\alpha) = \binom{-\tilde{X}_2(\alpha) W_1^2 - \frac{1}{3} W_1^3}{0}, \quad C(\alpha) = \binom{-\frac{d}{d\alpha}\tilde{X}_2(\alpha)  }{-\frac{d}{d\alpha}\tilde{Y}_2(\alpha)},
\]
we see that system \eqref{difference_system} can be written as \eqref{system_W_lemma}. By continuity, it is easy to see that solutions of \eqref{PAS_slope_freq_instantaneous} are uniformly bounded for $t \in [0,1/\delta]$. This implies that 
\[
\sup_{\alpha\in[0,1]}\|B(W,\alpha)\|_2 \leq \left(\sup_{\alpha\in[0,1]}|\tilde{X}_2(\alpha)|+ \frac{1}{3} \|W\|_2\right) \|W\|_2^2.
\]
Because of Condition \ref{conditionC2}, Lemma \ref{properties_algebraic_system_DC} tell us that
\[
\tilde{X}_2(\alpha)^2 -1 + \rho^2/2 + 1/\gamma >1/\gamma>0,\quad  \forall \alpha\in [0,1],
\]
we can apply \eqref{estimate_instantaneous_system} to obtain the bound $|\frac{d}{d\alpha}\tilde{X}_2(\alpha)|\leq a$,  $|\frac{d}{d\alpha}\tilde{Y}_2(\alpha)| \leq b$, for some constants $a$, $b \geq 0$. This gives condition (ii) in Lemma \ref{lemma_lyapunov_nonlinear},
\[
\sup_{\alpha\in[0,1]} \|C(\alpha)\|_2 \leq \sqrt{a^2 + b^2}.
\]
Finally, Lemma \ref{lemma_lyapunov_nonlinear} tell us there exists $\hat{\mu}>0$ and $\hat{\delta} >0$ such that if 
$\|W_0\|_2^2 \leq \hat{\mu}^2$ and $0<\delta < \hat{\delta}$ then
\[
\|W\|_2^2  \leq C \|W_0\|_2^2 + C_1\delta,
\]
for constants $C = C(\varepsilon, \beta,  \gamma, \rho) > 0$ and  $C_1 = C_1(\varepsilon, \beta, \gamma, \rho, \hat{\delta}) > 0$, which gives us the first part of the proposition. The estimates for the size of $\hat{\delta}$ are obtained by a direct application of Lemma \ref{derivative_lyapunov}. This concludes the proof of Proposition \ref{lemma_slow_steer}.
\end{proof}

\section{From PAS to FHN system: proofs of Theorems  \ref{thm_slow_steering1} and \ref{thm_slow_steering2}}\label{section_approximation_estimates}

In this section we show how estimates for the partial averaged system obtained in Subsection \ref{section_thm_slope_high_freq} and Subsection \ref{section_thm_slow_steer} can be used to obtain precise steering results for the FHN system \eqref{FHN_hf_intro} in the form of Theorems \ref{thm_slow_steering1} and \ref{thm_slow_steering2}. The strategy used is an extension of the approach used in \cite{cerpa_partially_2022} and \cite{cerpaApproximationStabilityResults2023a}, the main difference being that instead of using a uniform estimate in the linear term to get enough stability, we use an integral estimate that depends on the frequency $\omega$.

In Subsection \ref{subsec_approximation_abstract}, we study a somewhat general equation for the approximation error. In Subsection \ref{subsec_approximation_hf_term} and Subsection \ref{subsec_approximation_dc_term}, we verify that the previous approximation result applies to the problems under consideration. Finally, in Subsection \ref{subsec_proof_thm1} and Subsection \ref{subsec_proof_thm2},  we complete the proofs of Theorem \ref{thm_slow_steering1} and Theorem \ref{thm_slow_steering2}, respectively.

\subsection{The equation for the approximation error}\label{subsec_approximation_abstract}

\begin{definition}[General equation for approximation errors] Given $T>0$ and time depending functions $\varphi_\textit{slow}$, $\varphi_\textit{fast}$, $\varphi_2$, $H\!F_1$,  $H\!F_2$ $\in C([0,T]; \R)$ we consider the error functions $(E_v, E_w) \in C( [0,T]; \R) \cap C^1( (0,T); \R)$ given by the solution of the following nonlinear system
\begin{equation}\label{error_equation_abs}
\left\{
\begin{array}{ll}
\dot{E}_v  =  \left(\varphi_\textit{slow} + \varphi_\textit{fast}\right)E_v + \varphi_2 E_v^2  -\frac{1}{3}E_v^3 -E_w+ H\!F_1 &, 0<t<T,\\
\dot{E}_w = \varepsilon( E_v - \gamma E_w) + H\!F_2&, 0<t<T,\\
E_v(0)= 0,\quad E_w(0) = 0.
\end{array}\right.
\end{equation}
\end{definition}
Here we think of $\varphi_\textit{fast}$, $H\!F_1$, and $H\!F_2$ as rapidly oscillatory terms depending on a parameter $\omega$. 
We will obtain estimates for the error functions in the interval $[0,T]$ via a fixed point argument. The key argument is to formulate the fixed point so we can use a bound for the integral of the linear term and not necessarily a uniform bound.

\begin{lemma}[Estimate for the general error]\label{lemma_linear_error_abs}
Fix $T>0$, $\omega_0>0$ and the parameters $(\varepsilon, \beta, \gamma , \rho)$ with $\varepsilon >0$, $\beta >0$ and $\gamma > 1$. Suppose that the functions $\varphi_\textit{slow}, \varphi_\textit{fast}, \varphi_2, H\!F_1 , H\!F_2$   $\in C([0,T]; \R)$ of \eqref{error_equation_abs} satisfy the following conditions for all $t\in [0,T]$ and all $\omega \geq \omega_0$, 
\begin{enumerate}[label=(\roman*)]
\item\label{condition_lemma_1} $\frac{1}{\gamma} \sup\limits_{t\in [0, T]}\int_0^t e^{\int_\tau^t(\varphi_\textit{slow}(s) + \varphi_{\textit{fast}}(s))ds}d\tau <\frac{1}{2}$, 
\item\label{condition_lemma_2} $\sup\limits_{t\in [0, T]}\left|\int_0^t e^{\int_\tau^t\left( \varphi_\textit{slow}(s) + \varphi_\textit{fast}(s)\right)ds} H\!F_1(\tau) d\tau\right| \leq C_1 / \omega$,
\item\label{condition_lemma_3} $\sup\limits_{t\in [0, T]}\left|\int_0^t  e^{-\varepsilon\gamma (t-\tau)} H\!F_2(\tau) d\tau\right| \leq C_2 / \omega$,
\item\label{condition_lemma_4} $\sup\limits_{t\in [0, T]}|\varphi_2| \leq K_2$,
\end{enumerate}
where $C_1 = C_1(T,\omega_0)$, $C_2 = C_2(T,\omega_0)$ and $K_2=K_2(T)$ are positive constants. Then there exists $\omega^* \geq \omega_0$ such that the solution $(E_v, E_w)$ of \eqref{error_equation_abs}  satisfies the estimate
\begin{equation}\label{linear_estimate_general}
|E_v(t)|\leq C/\omega,\quad |E_w(t)| \leq C/\omega,\quad   \text{ for all }0<t<T,\quad  \omega \geq \omega^*, 
\end{equation}
 for some constant $C = C(C_1, C_2, \gamma)>0$.

\end{lemma}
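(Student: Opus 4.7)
The plan is to reformulate \eqref{error_equation_abs} via variation of constants applied separately to each equation, and close a bootstrap argument driven by the coupling between the two variables. The conceptual point is that the $E_v$-equation's linear coefficient $\varphi_\textit{slow}+\varphi_\textit{fast}$ need not produce pointwise decay, so the contraction must come from its interaction with the exponentially stable $E_w$-dynamics, and hypothesis (i) is precisely the integrated estimate that encodes this coupling.

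Since $\dot E_w+\varepsilon\gamma E_w=\varepsilon E_v+H\!F_2$ with $E_w(0)=0$, variation of constants together with $\int_0^t\varepsilon e^{-\varepsilon\gamma(t-\tau)}d\tau\le 1/\gamma$ and (iii) give
\[
\sup_{t\in[0,T]}|E_w(t)|\le \tfrac{1}{\gamma}\sup_{t\in[0,T]}|E_v(t)|+\tfrac{C_2}{\omega}.
\]
Writing $a(t):=\varphi_\textit{slow}(t)+\varphi_\textit{fast}(t)$, variation of constants for $E_v$ gives
\[
E_v(t)=\int_0^t e^{\int_\tau^t a(s)\,ds}\bigl[\varphi_2 E_v^2-\tfrac{1}{3}E_v^3-E_w+H\!F_1\bigr](\tau)\,d\tau,
\]
so with $M_v:=\sup_{[0,T]}|E_v|$, $M_w:=\sup_{[0,T]}|E_w|$, and $A:=\sup_{t}\int_0^t e^{\int_\tau^t a(s)ds}d\tau$, hypotheses (i), (ii) and (iv) yield $M_v\le A(K_2 M_v^2+\tfrac{1}{3}M_v^3+M_w)+C_1/\omega$. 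Substituting the $E_w$-bound and invoking $A/\gamma<1/2$ (which is exactly (i)) produces the self-consistent scalar inequality
\[
\tfrac{1}{2}M_v\le AK_2 M_v^2+\tfrac{A}{3}M_v^3+\tfrac{AC_2+C_1}{\omega}.
\]

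Now run a bootstrap. Choose $D$ sufficiently large (for instance $D:=8(AC_2+C_1)$), set $\Theta(t):=\sup_{s\in[0,t]}|E_v(s)|$ and $T^*:=\sup\{t\in[0,T]:\Theta(t)\le D/\omega\}$. Since $E_v(0)=0$ and $\Theta$ is continuous, $T^*>0$. On $[0,T^*]$ the quadratic and cubic contributions $AK_2\Theta^2+\tfrac{A}{3}\Theta^3$ are $O(1/\omega^2)$, so for $\omega$ sufficiently large the displayed inequality forces $\Theta(t)\le D/(2\omega)$ strictly; by continuity $T^*=T$, so $|E_v(t)|\le D/\omega$ on $[0,T]$, and the earlier inequality yields $|E_w(t)|\le D/(\gamma\omega)+C_2/\omega$, which is \eqref{linear_estimate_general} with a constant $C=C(C_1,C_2,\gamma)$.

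The main obstacle is conceptual rather than computational: a pointwise Grönwall argument on the $E_v$-equation fails because $a(t)$ may be positive, so one cannot simply exponentiate the coefficient. The resolution is twofold: first, replace pointwise decay with the integrated bound (i); second, substitute the variation-of-constants expression for $E_w$ back into the $E_v$-inequality, so that the effective linear operator becomes a genuine contraction with factor $A/\gamma<1/2$. Once this coupling is in place, only routine bookkeeping of the small parameter $1/\omega$ in the bootstrap step remains.
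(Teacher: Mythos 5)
Your argument is correct, and its core coincides with the paper's: both reformulate \eqref{error_equation_abs} by variation of constants, substitute the integral representation of $E_w$ back into the $E_v$ equation, and use hypothesis (i) to turn the resulting effective linear operator into a contraction with factor $A/\gamma<1/2$. The only genuine difference is how the nonlinearity is absorbed at the end: the paper runs a Banach fixed-point argument for the map $\Gamma$ on the ball of radius $M_1(\omega)=4\delta_1(\omega)+2\gamma\delta_2(\omega)=O(1/\omega)$ (self-mapping for $\omega\geq\omega_1$, contraction for $\omega\geq\omega^*$), whereas you work directly with the given solution and close a continuity bootstrap on $\Theta(t)=\sup_{s\leq t}|E_v(s)|$. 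The two closings are essentially interchangeable here; your bootstrap is slightly more economical because it does not re-derive existence, while the paper's fixed-point argument has the side benefit of producing the solution of the integral equation \eqref{integral_formulation} at the same time, so it does not need to presuppose that the solution extends to all of $[0,T]$. One small point you should make explicit: the scalar inequality $\tfrac12 M_v\leq AK_2M_v^2+\tfrac{A}{3}M_v^3+(AC_2+C_1)/\omega$ must be applied on each subinterval $[0,t]$, i.e.\ with $M_v$ replaced by $\Theta(t)$, for the bootstrap to run; this is immediate because hypotheses (i)--(iii) are suprema over $[0,T]$ and hence dominate the corresponding suprema over $[0,t]$, but it is the step that makes $T^*=T$ legitimate and deserves a sentence.
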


\begin{proof}
Using integrating factors, we can write equation \eqref{error_equation_abs} as
\begin{equation*}
\begin{cases}
E_v  =  \int_0^t e^{\int_\tau^t\left( \varphi_\textit{slow} + \varphi_\textit{fast}\right)ds}\left(-E_w +\varphi_2 E_v^2 - \frac{1}{3}E_v^3 + H\!F_1\right) d\tau,\\
E_w = \int_0^t e^{-\varepsilon\gamma (t-\tau)}(\varepsilon E_v +H\!F_2)d\tau.
\end{cases}
\end{equation*}
Substituting the second equation in the first one, we get the equivalent formulation
\begin{equation}\label{integral_formulation}
\left\{
\begin{array}{rl}
E_v  &=  \int_0^t e^{\int_\tau^t\left( \varphi_\textit{slow} + \varphi_\textit{fast}\right)ds}\Big(-\int_0^\tau e^{-\varepsilon\gamma (\tau-s)}(\varepsilon E_v(s) +H\!F_2(s))ds \\
&\hspace{1cm}+\varphi_2 E_v^2 - \frac{1}{3}E_v^3 +H\!F_1\Big) d\tau,\\
E_w &= \int_0^t e^{-\varepsilon\gamma (t-\tau)}(\varepsilon E_v +H\!F_2)d\tau.
\end{array}
\right.
\end{equation}
Consider the map $\Gamma : C([0,T];\R) \to 
C([0,T];\R)$ defined by 
\begin{equation*}
\Gamma (f)(t)  =  \int_0^t e^{\int_\tau^t\left(  \varphi_\textit{slow} + \varphi_\textit{fast}\right)ds} \left( - \int_0^\tau e^{-\varepsilon\gamma (\tau-s)}(\varepsilon f(s) +H\!F_2(s))ds  + \varphi_2 f^2 - \frac{1}{3}f^3 + H\!F_1\right) d\tau.
\end{equation*}
Given $M_1>0$, let us consider the closed set $\Omega \subset C([0,T],\R^2)$ defined by
\[
\Omega = \left\{f\in C([0,T];\R): \sup\limits_{t\in [0, T]}|f| \leq M_1\right\}.
\]
In what follows, we denote by $\delta_1(\omega) = C_1/\omega$ and $\delta_2(\omega) = C_2/\omega$ the upper bounds in assumptions (ii) and (iii). We also define $M_1(\omega)=4\delta_1(\omega)+2\gamma\delta_2(\omega)$.\smallskip

\noindent
{\bf Step 1: }We will show that when $\omega$ is large enough and $M_1$ is small enough, then $\Gamma$ maps $\Omega$ into itself. Using the conditions from Lemma \ref{lemma_linear_error_abs} and assuming $\omega\geq \omega_0$, we estimate,
\begin{align*}
A &= \sup\limits_{t\in [0, T]}|\Gamma (f)| \\
&\leq \sup\limits_{t\in [0, T]}\int_0^t e^{\int_\tau^t\left(  \varphi_\textit{slow} + \varphi_\textit{fast}\right)ds}\\
&\hspace{1cm}\times\left| -\int_0^\tau e^{-\varepsilon\gamma (\tau-s)}(\varepsilon f(s) +H\!F_2(s))ds  + \varphi_2 f^2 - \frac{1}{3}f^3 + H\!F_1\right| d\tau \\
&\leq  \sup\limits_{t\in [0, T]}\int_0^t e^{\int_\tau^t\left(  \varphi_\textit{slow} + \varphi_\textit{fast}\right)ds}d\tau \sup\limits_{t\in [0, T]}\left|\int_0^t e^{-\varepsilon\gamma (t-\tau)}(\varepsilon f +H\!F_2)d\tau \right| \\
& \hspace{1cm}+ \sup\limits_{t\in [0, T]}\int_0^t e^{\int_\tau^t\left(  \varphi_\textit{slow} + \varphi_\textit{fast}\right)ds}d\tau\left(\sup\limits_{t\in [0, T]} |\varphi_2| \sup\limits_{t\in [0, T]}|f|^2 + \frac{1}{3} \sup\limits_{t\in [0, T]} |f|^3  \right)\\
& \hspace{1cm}+ \sup\limits_{t\in [0, T]}\left|\int_0^t e^{\int_\tau^t\left(  \varphi_\textit{slow} + \varphi_\textit{fast}\right)ds} H\!F_1 d\tau\right|\\
&\leq \int_0^t e^{\int_\tau^t\left(  \varphi_\textit{slow} + \varphi_\textit{fast}\right)ds}d\tau \left(\frac{1}{\gamma}M_1+\delta_2(\omega) + K_2 M_1^2 + \frac{1}{3} M_1^3\right) + 
\delta_1(\omega)\\
&\leq \frac{1}{2} M_1 + \frac{\gamma}{2} K_2 M_1^2 + \frac{\gamma}{6} M_1^3 + \delta_1 +\frac{\gamma}{2}\delta_2
= M_1 \left(\frac{3}{4} + \frac{\gamma}{2} K_2 M_1 + \frac{\gamma}{6} M_1^2\right).
\end{align*}
Let $h_1(K_2, \gamma):= \frac{3}{2}\left(-K_2+ \sqrt{K_2^2 + \frac{2}{3\gamma} }\right)$, and choose $\omega_1\geq \omega_0$ such that $M_1\leq h_1(K_2,\gamma)$ for all $\omega\geq \omega_1$. Then $\frac{\gamma}{2} K_2 M_1 + \frac{\gamma}{6}M_1^2 \leq \frac{1}{4}$ and 
\[
\sup_{t\in[0,T]}|\Gamma(f)| \leq M_1 \left(\frac{3}{4}  + \frac{\gamma}{2}K_2 M_1 + \frac{\gamma}{6}M_1^2\right)\leq M_1, \quad \text{ for all } \omega\geq\omega_1.
\]

\noindent
{\bf Step 2:} $\Gamma$ as a contraction mapping in $\Omega$. Let $f_1, f_2 \in \Omega$, then we estimate
\begin{align*}
B &= \sup\limits_{t\in [0, T]}\left|\Gamma(f_1) - \Gamma(f_2)\right|\\
&\leq \sup\limits_{t\in [0, T]} \int_0^t e^{\int_\tau^t\left( \varphi_\textit{slow} + \varphi_\textit{fast}\right)ds}\\
&\hspace{1cm}\times \Big|\int_0^\tau \varepsilon e^{-\varepsilon\gamma (\tau-s)}(f_1-f_2) ds +\varphi_2 ( f_1^2 -  f_2^2)-\frac{1}{3} (f_1^3 - f_2^3) \Big|d\tau\\ 
&\leq  \sup\limits_{t\in [0, T]}\int_0^t e^{\int_\tau^t\left( \varphi_\textit{slow} + \varphi_\textit{fast}\right)ds} d\tau \frac{1}{\gamma} \sup\limits_{t\in [0, T]}\Big|f_1-f_2\Big| \\
& + \sup\limits_{t\in [0, T]}\int_0^t e^{\int_\tau^t\left( \varphi_\textit{slow} + \varphi_\textit{fast}\right)ds}d\tau \sup\limits_{t\in [0, T]}|\varphi_2| \sup\limits_{t\in [0, T]}(|f_1| + |f_2|) \sup\limits_{t\in [0, T]}|f_1 -  f_2|\\
& +\frac{1}{3} \sup\limits_{t\in [0, T]}\int_0^t e^{\int_\tau^t\left( \varphi_\textit{slow} + \varphi_\textit{fast}\right)ds} d\tau  \sup\limits_{t\in [0, T]}(|f_1|^2 + |f_1 f_2| + |f_2|^2) \sup\limits_{t\in [0, T]}|f_1 - f_2|\\
&\leq \alpha(T) \sup\limits_{t\in [0, T]} |f_1 - f_2|,
\end{align*}
where 
\begin{align*}
\alpha(T) &= \sup\limits_{t\in [0, T]}\int_0^t e^{\int_\tau^t\left( \varphi_\textit{slow} + \varphi_\textit{fast}\right)ds} d\tau \left( \frac{1}{\gamma} + 2 K_2 M_1 + M_1^2 \right)\\
&\leq\frac{\gamma}{2} \left( \frac{1}{\gamma} + 2 K_2 M_1 + M_1^2 \right).
\end{align*}
Define $ h_2(K_2,\gamma):= -K_2 +\sqrt{K_2^2 + \frac{1}{\gamma} }$ and choose $\omega^*\geq \omega_1$ such that 
$M_1\leq h_2(K_2,\gamma)$ for all $\omega\geq \omega^*$. Then, $(2 K_2 M_1 + M_1^2)\leq \frac{1}{\gamma}$ and $\alpha(T)<1$ for all $\omega\geq \omega^*$.
In summary, for $\omega \geq \omega^*$  the map $\Gamma$ is a contraction mapping in $\Omega$ and therefore has a unique fixed point inside. To obtain estimate  \eqref{linear_estimate_general} it is enough to observe that $|E_v|  \leq M_1 = 4 (\delta_1 + \frac{\gamma}{2}\delta_2) = (4 C_1 + 2 \gamma C_2) /\omega$. Because of \eqref{integral_formulation} This can be also used to estimate $E_w$, we get     $|E_w| \leq \frac{1}{\gamma}M_1  + \delta_2 = (4C_1/\gamma + 3C_2 ) /\omega,$
for all $\omega \geq \omega^*$. This concludes the proof of Lemma \ref{lemma_linear_error_abs}.
\end{proof}

\subsection{Estimates for the approximation error: steering problem 1}\label{subsec_approximation_hf_term}

This section is devoted to obtaining an estimate for the approximation error, implying condition \ref{conditionC3}. This is done by using Lemma \ref{lemma_linear_error_abs} in the specific setting of the steering problem 1.  The equation for the error is obtained directly by subtracting equations \eqref{FHN_hf_intro} and \eqref{PAS_slope_freq} and using that $S(\lambda t) = \lambda t$ for $t \in [0,1/\lambda]$.

\begin{proposition}[Equation for the approximation error: steering problem 1]\label{defi_approximation_problem_HF}
Let $(v,w)$ be the unique solution of \eqref{FHN_hf_intro} with input current \eqref{input_current_slope_general1} and initial data $(v_0,w_0)$. Let $(V,W)$ be the solution of the partially averaged system \eqref{PAS_slope_freq} (which has the same initial data $(v_0,w_0)$). For $t \in [0,1/\lambda]$ , the approximation error functions $E_v = v -V - (\rho \lambda t) \sin(\omega t) $, $E_w = w - W$ satisfy 
\begin{equation}\label{error_equation_HF}
\left\{
\begin{array}{rll}
\dot{E}_v  &=  \left(\varphi_\textit{slow} + \varphi_\textit{fast}\right)E_v + \varphi_2 E_v^2  -\frac{1}{3}E_v^3 -E_w+ H\!F_1 &, 0<t<\frac{1}{\lambda},\\
\dot{E}_w &= \varepsilon( E_v - \gamma E_w) + H\!F_2&, 0<t<\frac{1}{\lambda},\\
E_v(0)&= 0,\quad E_w(0) = 0,
\end{array}\right.
\end{equation}
where 
\begin{align}
\varphi_\textit{slow} &= 1-V^2 - \frac{(\rho\lambda t)^2}{2}, \label{defi_varphi11_HF}\\
\varphi_\textit{fast} &= \frac{(\rho \lambda t)^2}{2} \cos(2\omega t) -2 (\rho \lambda t)V \sin(\omega t),\label{defi_varphi12_HF}\\
\varphi_2 &= -(\rho \lambda t \sin(\omega t) + V), \label{defi_varphi2_HF}\\
H\!F_1 &=-  (\rho \lambda t) V^2 \sin(\omega t) + \frac{ (\rho \lambda t)^2V}{2} \cos(2\omega t)   -\frac{1}{3}(\rho \lambda t)^3 \sin^3(\omega t) \label{defi_varphi3_HF} \\
&\hspace{1cm} + (\rho \lambda t)\sin(\omega t) - \rho \lambda \sin(\omega t), \notag \\
H\!F_2 &= \varepsilon (\rho \lambda t) \sin(\omega t)\label{defi_HF2_HF}.
\end{align}
\end{proposition}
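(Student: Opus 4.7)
The proposition is essentially a direct verification: once the error variables are plugged in, everything is dictated by algebra and the trigonometric identity $\sin^2(\omega t) = (1-\cos(2\omega t))/2$. The plan is therefore to differentiate $E_v$ and $E_w$ term by term, substitute the two governing ODEs, and collect the resulting expressions into the five functions $\varphi_\textit{slow}$, $\varphi_\textit{fast}$, $\varphi_2$, $H\!F_1$, $H\!F_2$ defined in \eqref{defi_varphi11_HF}--\eqref{defi_HF2_HF}.

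First I would write $\phi(t) := \rho \lambda t \sin(\omega t)$ and differentiate $E_v = v - V - \phi$, using \eqref{FHN_hf_intro} with input \eqref{input_current_slope_general1} (noting that $S(\lambda t) = \lambda t$ on $[0,1/\lambda]$) together with the PAS \eqref{PAS_slope_freq}. The crucial first cancellation is that $\dot\phi$ contains the term $\rho \lambda t \omega \cos(\omega t)$, which exactly eliminates the forcing $I_1(t)$ in $\dot v$. What is left is
\[
\dot E_v = (v - V) - \tfrac13(v^3 - V^3) - (w - W) + \tfrac{(\lambda t)^2\rho^2}{2}\, V - \rho\lambda \sin(\omega t).
\]
Substituting $v = V + E_v + \phi$ and expanding $v^3 - V^3$ as a cubic polynomial in $(E_v + \phi)$ with coefficients in $V$ yields, after grouping by powers of $E_v$, a linear coefficient $1 - V^2 - 2V\phi - \phi^2$, a quadratic coefficient $-(V + \phi)$, and a cubic coefficient $-\tfrac13$, plus inhomogeneous terms that do not depend on $E_v$.

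The second key step is the trigonometric simplification. Using $\phi^2 = (\rho\lambda t)^2\sin^2(\omega t) = (\rho\lambda t)^2/2 - (\rho\lambda t)^2\cos(2\omega t)/2$, the linear coefficient splits cleanly as
\[
1 - V^2 - 2V\phi - \phi^2 = \underbrace{\bigl(1 - V^2 - \tfrac{(\rho\lambda t)^2}{2}\bigr)}_{\varphi_\textit{slow}} + \underbrace{\bigl(\tfrac{(\rho\lambda t)^2}{2}\cos(2\omega t) - 2(\rho\lambda t)V\sin(\omega t)\bigr)}_{\varphi_\textit{fast}},
\]
matching \eqref{defi_varphi11_HF}--\eqref{defi_varphi12_HF}, and the quadratic coefficient is exactly $\varphi_2$ of \eqref{defi_varphi2_HF}. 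The expected ``miracle'' in the inhomogeneous part is that the term $\tfrac{(\lambda t)^2\rho^2}{2}V$ inherited from the PAS coefficient cancels against the constant-in-$\omega$ piece of $-V\phi^2$; what remains collects into $H\!F_1$ as in \eqref{defi_varphi3_HF}.

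For the recovery component, $\dot E_w = \dot w - \dot W = \varepsilon((v - V) - \gamma(w - W)) = \varepsilon(E_v - \gamma E_w) + \varepsilon\phi$, immediately giving $H\!F_2 = \varepsilon(\rho\lambda t)\sin(\omega t)$ as in \eqref{defi_HF2_HF}. Finally, the initial conditions $E_v(0) = E_w(0) = 0$ follow because $(v,w)$ and $(V,W)$ share the initial datum $(v_0,w_0)$ and $\phi(0) = 0$. The only non-routine point is the bookkeeping in the cubic expansion and the cancellation of the mean of $\phi^2$ against the PAS coefficient; everything else is immediate once the substitutions are written down.
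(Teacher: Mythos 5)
Your computation is correct and matches the paper's (implicit) derivation exactly: the paper obtains \eqref{error_equation_HF} by the same direct substitution $v = V + E_v + \rho\lambda t\sin(\omega t)$, cancellation of the forcing $\rho\lambda t\,\omega\cos(\omega t)$ against $\dot\phi$, the cubic expansion, and the identity $\sin^2(\omega t) = \tfrac12(1-\cos(2\omega t))$ to split the linear coefficient into $\varphi_\textit{slow}+\varphi_\textit{fast}$ and to cancel the mean of $-V\phi^2$ against the PAS coefficient $\tfrac{(\lambda t)^2\rho^2}{2}V$. All five functions and the initial conditions check out, so there is nothing to add.
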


The following estimate will be necessary to analyze the elements in equation \eqref{error_equation_HF}.
\begin{lemma}[Integral estimate for slowly varying functions]\label{integral_estimate_slow_varying} Let $T>0$, and $f \in C^1([0,T])$. Then we have
\begin{equation*}
\left\vert \int_a^t f(\tau) e^{i \omega \tau} d\tau\right\vert\leq   \frac{1}{\omega} \min\{(t-a) \omega,2\} \max_{\tau \in [a,t]}|f(\tau)|+ \frac{(t-a)}{\omega} \max_{\tau\in[a,t]} \vert f'(\tau)\vert .
\end{equation*}
\end{lemma}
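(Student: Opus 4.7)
The plan is to obtain the stated bound by combining two elementary estimates: a trivial one that is sharp on short intervals, and an integration-by-parts (IBP) estimate that exploits cancellation of the oscillatory factor $e^{i\omega\tau}$ on long intervals. The appearance of $\min\{(t-a)\omega,2\}$ in the prefactor of $\max|f|$ is the signature that both estimates are being used and the better one selected.

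First I would record the trivial estimate
\[
\left|\int_a^t f(\tau) e^{i\omega\tau}\,d\tau\right| \;\leq\; \int_a^t |f(\tau)|\,d\tau \;\leq\; (t-a)\,\max_{\tau\in[a,t]}|f(\tau)|,
\]
which, since the second term in the desired bound is nonnegative, is a stronger statement than the case where the $\min$ equals $(t-a)\omega$ (that is, when $(t-a)\omega \le 2$).

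Next I would integrate by parts in the oscillatory integral, taking $u=f(\tau)$ and $dv=e^{i\omega\tau}d\tau$:
\[
\int_a^t f(\tau)e^{i\omega\tau}\,d\tau \;=\; \left[\frac{f(\tau)\,e^{i\omega\tau}}{i\omega}\right]_{\tau=a}^{\tau=t} - \int_a^t \frac{f'(\tau)\,e^{i\omega\tau}}{i\omega}\,d\tau.
\]
Bounding the boundary term by $\frac{2}{\omega}\max_{[a,t]}|f|$ (using $|e^{i\omega\tau}|=1$) and the remaining integral by $\frac{(t-a)}{\omega}\max_{[a,t]}|f'|$ yields
\[
\left|\int_a^t f(\tau)e^{i\omega\tau}\,d\tau\right| \;\leq\; \frac{2}{\omega}\max_{\tau\in[a,t]}|f(\tau)| + \frac{(t-a)}{\omega}\max_{\tau\in[a,t]}|f'(\tau)|,
\]
which is exactly the bound in the case where $\min\{(t-a)\omega,2\}=2$.

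Finally I would combine the two estimates: both have the same second term $\frac{(t-a)}{\omega}\max|f'|$ (the trivial bound trivially implies this since that term is nonnegative), and the coefficients in front of $\max|f|$ are $(t-a)=\frac{(t-a)\omega}{\omega}$ and $\frac{2}{\omega}$ respectively. Taking the pointwise minimum gives the coefficient $\frac{1}{\omega}\min\{(t-a)\omega,2\}$, which is the claimed estimate. I do not foresee a genuine obstacle here; the argument is routine, and the only subtlety worth a sentence in the write-up is pointing out that the trivial bound can be absorbed into the final form by freely adding the nonnegative $f'$ term.
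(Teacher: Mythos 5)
Your proposal is correct and follows essentially the same route as the paper: a triangle-inequality bound for short intervals, an integration-by-parts bound for long intervals, and taking the minimum of the two coefficients of $\max|f|$ (noting the nonnegative $f'$ term can be freely added to the trivial bound). No issues.
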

\begin{proof} 
First, triangle inequality tells us $\left|\int_a^t f(\tau) e^{i\omega \tau}d\tau\right| \leq  (t-a ) \max\limits_{\tau\in [a,t]} |f(\tau)|$. Second, using integration by parts
\begin{equation*}
I(t) = \int_a^t e^{i \omega \tau}f(\tau) d\tau= \left. \frac{1}{i \omega }e^{i\omega \tau} f(\tau) \right|_{\tau=a}^{\tau = t} - \frac{1}{i\omega}\int_a^t  e^{i\omega \tau} f'(\tau) d\tau,
\end{equation*}
applying triangle inequality gives us $|I(t)|\leq \frac{2}{\omega} \max\limits_{\tau\in [a,t]} |f(\tau)| + \frac{t-a}{\omega} \max\limits_{\tau\in [a,t]} |f'(\tau)|$. Combining the estimates, we conclude the proof.
\end{proof}

The following proposition provides properties of system \eqref{PAS_slope_freq} that will be useful to check the hypothesis of Lemma \ref{lemma_linear_error_abs}.

\begin{proposition}\label{prop_properties_pas_HF} Let $(\varepsilon,\beta, \gamma,  \rho)$ satisfy Condition \ref{conditionC1} in Definition \ref{definition_condition_parameters}. Let $\hat{\lambda}, \hat{\mu}>0$ given by Proposition \ref{prop_slope_freq}. There exists $\lambda^* \leq \hat{\lambda}$ and $\mu^* \leq \hat{\mu}$ such that for all $0<\lambda\leq \lambda^*$, the corresponding solution $(V,W)$ of system \eqref{PAS_slope_freq} with initial condition satisfying
\begin{equation}\label{smallness_initial_data_PAS}
\|(V(0) - v_0, W(0)-w_0 )\|_2 \leq \mu^*,
\end{equation}
where $(v_0,w_0)$ is given by \eqref{definition_P0}, satisfies for some constants $C_1 = C_1(\beta,\gamma, \rho, \mu^*)>0$, $C_2=C_2(\beta,\gamma, \rho, \mu^*)>0$:
\begin{enumerate}[label=(\roman*)]
\item $\max\limits_{t\in [0,1/\lambda]} |V(t)| \leq C_1$,
\item $\max\limits_{t\in [0,1/\lambda]} |\dot{V}(t)| \leq C_2$,
\item $\min\limits_{t\in [0,1/\lambda]} (V(t)^2 + (\rho \lambda t )^2/2)>1 $.
\end{enumerate}
\end{proposition}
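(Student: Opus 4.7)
The plan is to bootstrap from Proposition \ref{prop_slope_freq}, which provides a quantitative $C^0$ closeness between $(V,W)$ and the algebraic reference $(X_1,Y_1)$ on the whole interval $[0,1/\lambda]$, into the uniform pointwise bounds for $V$ and $\dot V$ asked by the statement. More precisely, choosing $\mu^*\leq\hat\mu$ and $\lambda^*\leq\hat\lambda$, under the smallness hypothesis \eqref{smallness_initial_data_PAS} we obtain
\[
\sup_{t\in[0,1/\lambda]}\|(V(t)-X_1(t),W(t)-Y_1(t))\|_2^2 \;\leq\; K_0(\mu^*)^2 + K_1\lambda^*,
\]
for constants $K_0,K_1$ depending only on $(\varepsilon,\beta,\gamma,\rho,\hat\lambda)$. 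By Lemma \ref{properties_algebraic_system_HF} the functions $\tilde X_1,\tilde Y_1$ are continuous on the compact set $[0,1]$, so $\|X_1\|_\infty$ and $\|Y_1\|_\infty$ are bounded by constants depending only on $(\beta,\gamma,\rho)$.

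For (i), the triangle inequality together with the bound above yields $|V(t)|\leq\|X_1\|_\infty+\sqrt{K_0(\mu^*)^2+K_1\lambda^*}$, which can be taken as the constant $C_1$. An identical estimate produces a uniform bound on $|W(t)|$, and (ii) then follows directly from the first equation of \eqref{PAS_slope_freq}: since $|S(\lambda t)|\leq 1$ and both $|V|$ and $|W|$ are controlled, the right-hand side $(1-S(\lambda t)^2\rho^2/2)V-V^3/3-W$ is uniformly bounded by some constant $C_2$.

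The delicate point is (iii). Here I would invoke the uniform strict stability provided by Condition \ref{conditionC1}: there exists $c>0$ with $\tilde X_1(\alpha)^2-1+\alpha^2\rho^2/2>c$ for all $\alpha\in[0,1]$. Rescaling $\alpha=\lambda t$ (valid because $\lambda t\in[0,1]$ on the interval of interest, so that $S(\lambda t)=\lambda t$) yields
\[
X_1(t)^2 + (\rho\lambda t)^2/2 \;>\; 1 + c,\qquad t\in[0,1/\lambda].
\]
Expanding $V^2 = X_1^2 + 2X_1(V-X_1) + (V-X_1)^2$ and using $|V-X_1|\leq\sqrt{K_0(\mu^*)^2+K_1\lambda^*}$ gives
\[
V(t)^2 + (\rho\lambda t)^2/2 \;\geq\; 1 + c - 2\|X_1\|_\infty\sqrt{K_0(\mu^*)^2+K_1\lambda^*}.
\]
Therefore shrinking $\mu^*$ and $\lambda^*$ so that $2\|X_1\|_\infty\sqrt{K_0(\mu^*)^2+K_1\lambda^*}<c$ closes (iii). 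The main obstacle is precisely preserving this strict inequality uniformly in $t\in[0,1/\lambda]$; the spectral gap $c$ from Condition \ref{conditionC1} is exactly what allows us to absorb the perturbation coming from Proposition \ref{prop_slope_freq}, and (i)--(ii) are then routine consequences of the comparison with $(X_1,Y_1)$.
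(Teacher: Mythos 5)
Your argument is correct, and for part (iii) it coincides with the paper's proof: both compare $(V,W)$ with $(X_1,Y_1)$ via Proposition \ref{prop_slope_freq}, use the uniform gap $\tilde X_1(\alpha)^2-1+\alpha^2\rho^2/2>c$ coming from Condition \ref{conditionC1} (the paper routes this through Lemma \ref{properties_algebraic_system_HF}), and shrink $\mu^*,\lambda^*$ so the perturbation cannot destroy the strict inequality; your explicit expansion $V^2=X_1^2+2X_1(V-X_1)+(V-X_1)^2$ is in fact a cleaner justification of the absorption step than the paper's terse ``$\bar C(\mu^*)^2+\bar C_1\lambda^*\leq c/2$ implies $V^2-1+(\lambda t\rho)^2/2\geq c/2$''. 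Where you diverge is in parts (i)--(ii): the paper proves boundedness of $(V,W)$ by constructing a family of nested positively invariant rectangles $[-M,M]\times[-\theta M,\theta M]$ for the system \eqref{PAS_slope_freq}, which yields $\sup|V|,\sup|W|<\infty$ for \emph{any} initial condition and any $\lambda$, with no smallness assumptions and no appeal to the quasi-static comparison; you instead read the bounds off the triangle inequality $|V|\leq\|X_1\|_\infty+\sqrt{K_0(\mu^*)^2+K_1\lambda^*}$ (and similarly for $W$, which you correctly note is needed before bounding $\dot V$ from the equation). Your route is more economical since Proposition \ref{prop_slope_freq} must be invoked for (iii) anyway, and it gives a sharper constant; the paper's invariant-rectangle argument buys robustness -- boundedness holds outside the basin where the quasi-static estimate applies -- which is immaterial for the statement as given but is the kind of a priori control one would want if the smallness hypotheses were ever relaxed. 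Both are valid proofs of the proposition as stated.
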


\begin{proof}
Consider the following general scheme. Let $f:[0,T]\to\R$ be a continuous function; let us construct a  family of nested positively invariant rectangles $R_k$, satisfying $\cup_{k\geq k_0} R_k = \R^2$, for the system 
\begin{equation}\label{positively_invariant_rectangles}
\begin{array}{rl}
\dot{V} &= f(t) V - V^3/3 -W,\\
\dot{W} &= \varepsilon(V-  \gamma W + \beta).
\end{array}
\end{equation}
Namely, we want to find $M,N>0$ such that the rectangle $R = [-M,M]\times [-N,N]$
satisfies that whenever $(V,W)$ touches the boundary of $R$, the vector $(\dot{V},\dot{W})$ points towards the interior of $R$, which will be the case if:
\begin{itemize}
    \item On $V =M$, $|W| \leq N$ we require $\dot{V} \leq \sup_{s\in[0,T]} |f(s)| M - M^3/3 + N <0$.
    \item On $V =-M$, $|W| \leq N$ we require $\dot{V} \geq -\sup_{s\in[0,T]} |f(s)| M + M^3/3 - N >0$.
    \item On $|V|\leq M$, $W= N$ we require $\dot{W} \leq \varepsilon (M - \gamma N + \beta ) < 0$.
    \item On $|V|\leq M$, $W= - N$ we require $\dot{W} \geq \varepsilon (-M + \gamma N + \beta ) > 0$.   
\end{itemize}
These requirements boil down to the following two conditions for the pair $(M, N)$
\[
M^3 - \sup_{s\in[0,T]} |f(s)| M  > N, \quad  \gamma N > M + |\beta|.
\]
If we denote $N = \theta M$, these conditions can be combined as 
\[
\frac{|\beta|}{\gamma M} + \frac{1}{\gamma} < \theta < M^2 - \sup_{s\in[0,T]} |f(s)|.
\]
It is easy to see that for any $\theta > 1/\gamma$, these inequalities are satisfied for all $M>0$ large enough. This means that given any $\theta > 1/\gamma$ the rectangles $R_k = [-M_k,M_k] \times [-N_k,N_k]$ with $M_k = k$, $N_k = \theta k$, $k \geq k_0$ are a family of nested positively invariant rectangles $R_k$ for system \eqref{positively_invariant_rectangles} with $\cup_{k\geq k_0} R_k = \R^2$.
In the particular case of $f(t)=1-S(\lambda t)^2\rho^2/2$, this implies (i) since the initial condition belongs to one of the positively invariant rectangles.

To establish (ii), since $f$, $V$, and $W$ are bounded, then  \eqref{positively_invariant_rectangles} implies that $\dot{V}$ is bounded as well.

Item (iii) is more delicate. Under the assumptions of Proposition \ref{prop_slope_freq}, we can compare the solutions $(V,W)$ of \eqref{PAS_slope_freq} with the solutions of the algebraic system \eqref{PAS_slope_freq_instantaneous}. Lemma \ref{properties_algebraic_system_HF} tells us that $X_1(t)^2 >1  - (\lambda t \rho)^2/2$, which together with the continuity and the fact that $X_1(t)$ is actually constant for $t \geq \frac{1}{\lambda}$ imply that
\[
X_1(t)^2 -1 + (\lambda t \rho)^2/2\geq c,
\]
for some constant $c>0$. Therefore, by choosing $0<\mu<\hat\mu$ and $0<\lambda<\hat \lambda$, Proposition \ref{prop_slope_freq} guarantees that 
\[ 
|V(t) - X_1(t)|^2 + |W(t) - Y_1(t)|^2 \leq \bar{C} \mu^2 + \bar{C}_1 \lambda,
\]
for some positive constants $\bar{C}$ and $\bar{C}_1$. And in this way, by choosing $\lambda^*\leq \hat{\lambda}$ and $\mu^* \leq \hat{\mu}$ so that
\[
\bar{C} (\mu^*)^2 + \bar{C}_1 \lambda^* \leq c/2,
\]
we obtain $V(t)^2 -1 + (\lambda t \rho)^2/2\geq  c/2 >0$ for all $0< \lambda < \lambda^*$ and $0<\mu < \mu^*$, which gives us (iii). This concludes the proof of Proposition \ref{prop_properties_pas_HF}.
\end{proof}

\begin{proposition}[Validity of approximation result for system \eqref{PAS_slope_freq}]\label{proposition_error_estimates_slope_HF}
Suppose that the parameters ($\varepsilon$, $\beta$,  $\gamma$, $\rho$) and $M, \omega_0$ satisfy Condition \ref{conditionC1}  in Definition \ref{definition_condition_parameters} and  Condition \ref{conditionC3b} in Proposition \ref{proposition_parameters}. Let $\lambda^*, \mu^*>0$ given by Proposition \ref{prop_properties_pas_HF}. Then for all  $0<\lambda \leq \min\{\lambda^*,M\}$ and $T= 1/\lambda$, if the corresponding solution $(V,W)$ of \eqref{PAS_slope_freq} satisfy \eqref{smallness_initial_data_PAS} then the functions $\varphi_\textit{slow}$, $\varphi_\textit{fast}$, $\varphi_2$, $H\!F_1$, $H\!F_2$ given in Proposition \ref{defi_approximation_problem_HF} satisfy the hypotheses of Lemma \ref{lemma_linear_error_abs}. Therefore, condition \ref{conditionC3} is satisfied.
\end{proposition}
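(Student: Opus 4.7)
The plan is to verify the four hypotheses of Lemma \ref{lemma_linear_error_abs} for the approximation error system \eqref{error_equation_HF} on the interval $T = 1/\lambda$. The main inputs are the boundedness properties of $V$ provided by Proposition \ref{prop_properties_pas_HF}, namely $|V|\leq C_1$, $|\dot V|\leq C_2$ and the non-degeneracy $V(t)^2 + (\rho\lambda t)^2/2 - 1 \geq c > 0$ uniformly on $[0,1/\lambda]$, combined with Lemma \ref{integral_estimate_slow_varying} to handle oscillatory integrals. The choice $0<\lambda \leq \min\{\lambda^*, M\}$ simultaneously ensures these bounds and that hypothesis \ref{condition_approximation_HF} is in force.

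Among the four hypotheses, (iii) and (iv) are essentially immediate. For (iv), since $\varphi_2 = -(\rho\lambda t \sin(\omega t) + V)$ and $\lambda t \leq 1$ on $[0,1/\lambda]$, Proposition \ref{prop_properties_pas_HF}(i) yields $|\varphi_2|\leq |\rho| + C_1$. For (iii), the forcing $H\!F_2 = \varepsilon(\rho\lambda t)\sin(\omega t)$ is a slowly varying amplitude times a single oscillation; integrating against the exponentially decaying kernel $e^{-\varepsilon\gamma(t-\tau)}$ and applying Lemma \ref{integral_estimate_slow_varying} produces a bound of order $1/\omega$ directly.

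The real work lies in conditions (i) and (ii), where $\varphi_\textit{fast}$ appears inside the exponent. The key observation is that $\varphi_\textit{slow}(s) = 1 - V(s)^2 - (\rho\lambda s)^2/2 \leq -c$ by Proposition \ref{prop_properties_pas_HF}(iii), while $\varphi_\textit{fast} = \frac{(\rho\lambda t)^2}{2}\cos(2\omega t) - 2(\rho\lambda t)V\sin(\omega t)$ is a sum of slowly varying amplitudes (with derivatives controlled by $|\dot V|$) times pure oscillations. Applying Lemma \ref{integral_estimate_slow_varying} term by term yields
\[
\left|\int_\tau^t \varphi_\textit{fast}(s)\, ds\right| \leq \frac{g(\rho,\lambda)}{\omega},
\]
uniformly in $0\leq\tau\leq t\leq 1/\lambda$, with $g(\rho,\lambda)$ exactly as in hypothesis \ref{condition_approximation_HF}. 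Exponentiating and combining with the slow part,
\[
e^{\int_\tau^t(\varphi_\textit{slow}+\varphi_\textit{fast})\, ds} \leq e^{g(\rho,\lambda)/\omega}\, e^{-c(t-\tau)},
\]
and integrating in $\tau$ gives a bound on $\sup_t \int_0^t e^{\int_\tau^t(\varphi_\textit{slow}+\varphi_\textit{fast})ds}\, d\tau$ of exactly the form appearing in hypothesis \ref{condition_approximation_HF}. After multiplying by $1/\gamma$, Condition \ref{conditionC3b} delivers (i). For (ii), decompose $H\!F_1$ into a finite sum of terms $a_k(\tau)e^{\pm i k\omega\tau}$ with $k\in\{1,2,3\}$, where the amplitudes $a_k$ are polynomial in $\rho\lambda\tau$ and $V(\tau)$, hence slowly varying with derivative controlled by $|V|$ and $|\dot V|$. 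Fold each amplitude into the exponential kernel to form $b_k(t,\tau)=a_k(\tau)\,e^{\int_\tau^t(\varphi_\textit{slow}+\varphi_\textit{fast})ds}$ and apply the integration-by-parts idea of Lemma \ref{integral_estimate_slow_varying} in $\tau$: the boundary contributions are $O(1/\omega)$, and the interior contribution picks up an explicit factor $1/\omega$ times $\partial_\tau b_k(t,\tau)$, which is uniformly bounded because $\partial_\tau e^{\int_\tau^t(\ldots)ds} = -(\varphi_\textit{slow}(\tau)+\varphi_\textit{fast}(\tau))e^{\int_\tau^t(\ldots)ds}$ and both $\varphi_\textit{slow}$, $\varphi_\textit{fast}$ are bounded on $[0,1/\lambda]$.

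The main obstacle I anticipate is the bookkeeping: $\varphi_\textit{fast}$ and $H\!F_1$ carry oscillations at different harmonics ($\omega$, $2\omega$, $3\omega$), and one has to identify precisely which slow amplitudes produce the $\rho^2/(2\omega)$ and $4|\rho|\max|V|/\omega$ terms in the exponential prefactor of hypothesis \ref{condition_approximation_HF} and which contribute to $g(\rho,\lambda)$. Once this accounting is carried out and (i)--(iv) are verified, Lemma \ref{lemma_linear_error_abs} applied on $[0,1/\lambda]$ gives $|E_v(t)|, |E_w(t)| \leq C/\omega$ for all $\omega\geq\omega^*$, so $\sup_{t\in[0,1/\lambda]}\|(E_v,E_w)\|_2 \to 0$ as $\omega\to\infty$, which is exactly the content of Condition \ref{conditionC3}.
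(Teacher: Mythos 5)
Your proposal is correct and follows essentially the same route as the paper: verify the four hypotheses of Lemma \ref{lemma_linear_error_abs} using the uniform bounds on $V$, $\dot V$ and the coercivity $V^2+(\rho\lambda t)^2/2-1\geq c$ from Proposition \ref{prop_properties_pas_HF}, control $\int_\tau^t\varphi_\textit{fast}$ via Lemma \ref{integral_estimate_slow_varying}, and handle $H\!F_1$ by splitting into harmonics (including $\sin^3$) with the amplitude folded into the exponential kernel. The only bookkeeping discrepancy is that the oscillatory integral of $\varphi_\textit{fast}$ is not uniformly $g(\rho,\lambda)/\omega$ but rather of the form $\tfrac{B}{\omega}+\tfrac{g(\rho,\lambda)}{\omega}(t-\tau)$, so in hypothesis \ref{condition_approximation_HF} the constant part $B/\omega$ produces the exponential prefactor while $g/\omega$ enters as a correction to the decay rate $K_\textit{slow}$ rather than to the prefactor; this does not affect the validity of the argument.
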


\begin{proof}

We have to verify each condition of Lemma \ref{lemma_linear_error_abs}. Let $0\leq t \leq T = 1/\lambda$ and let $\omega_0$ be given by Condition \ref{conditionC3b}.
\begin{itemize}
\item {\bf Condition \ref{condition_lemma_1} :} First, we estimate the integral of $\varphi_\textit{fast}$ using Lemma \ref{integral_estimate_slow_varying}
\begin{align*}
\int_\tau^t\varphi_\textit{fast}(s) ds &= \int_\tau^t \left(\frac{(\rho \lambda s)^2}{2} \cos(2\omega s) -2V(s) (\rho \lambda s) \sin(\omega s)\right) ds\\
&\leq \frac{1}{2\omega} \min\{(t-\tau) 
 2\omega,2\} \frac{\rho^2\lambda^2 t^2}{2 } + \frac{t-\tau}{2\omega} (\rho^2\lambda^2 t) \\
 &\hspace{1cm}+ \frac{1}{\omega} \min  \{(t-\tau) \omega ,2\} 2 (|\rho| \lambda t)\max_{s\in [\tau ,t]}|V(s)|\\
 & \hspace{1cm}+ \frac{t-\tau}{\omega} 2 |\rho| \lambda \max_{s\in [\tau ,t]}|V(s)| + \frac{t-\tau}{\omega} 2 (|\rho| \lambda t)\max_{s\in [\tau ,t]}|\dot{V}(s)|.
\end{align*}
Second, we bound $\varphi_\textit{slow}$ using the uniform bound for $|V(s)|$ provided by Proposition \ref{prop_properties_pas_HF}
\[
\max_{s\in[0,T]}\varphi_\textit{slow}(s) \leq -\min_{s\in[0,T]}\left(|V(s)|^2 -1 + \frac{(\rho \lambda s)^2}{2} \right)  =: -K_\textit{slow} <0.
\]
Combining those estimates, we obtain 
\begin{align}\notag
E &= \int_\tau^t(\varphi_\textit{slow}(s) + \varphi_\textit{fast}(s))ds \\
&\leq  -K_\textit{slow}(t-\tau) \notag\\
&\hspace{1cm}+ \left(\frac{\rho^2 \lambda^2 t}{2\omega} + \frac{2|\rho| \lambda}{\omega} \max_{s\in[\tau,t]}|V(s)| + \frac{2 |\rho| \lambda t}{\omega} \max_{s\in[\tau,t]}|\dot V(s)| \right) (t-\tau) \notag \\
& \hspace{1cm}+ \left(\frac{\rho^2 \lambda^2 t^2}{2} + 4 |\rho| \lambda t \max_{s\in[\tau,t]}|V(s)|\right)\frac{1}{\omega} \notag \\ \label{integral_varphis}
&\leq \left(-K_\textit{slow} + \frac{A}{\omega}\right)(t-\tau) + \frac{B}{\omega}
\end{align}
where 
\begin{equation*}
A = \frac{\rho^2 \lambda}{2} + 2|\rho| \lambda \max_{s\in[0,T]}|V(s)| + 2 |\rho| \max_{s\in[0,T]}|\dot{V}(s)|, \quad B = \frac{\rho^2}{2} + 4 |\rho| \max_{s\in[0,T]}|V(s)|.
\end{equation*}
Integrating, we get
\begin{align*}
\frac{1}{\gamma} \int_0^t e^{\int_\tau^t(\varphi_\textit{slow}(s) + \varphi_\textit{fast}(s))ds}d\tau &\leq \frac{1}{\gamma}\int_0^t e^{(-K_\textit{slow} + A/\omega)(t-\tau) + B/\omega} d\tau\\
&= \frac{e^{B/\omega}}{\gamma} \frac{1-e^{(-K_\textit{slow} + A/\omega)t}}{K_\textit{slow}-A/\omega}.
\end{align*}
Therefore, Condition \ref{conditionC3b} implies   \ref{condition_lemma_1}.

\item {\bf Condition \ref{condition_lemma_2}:} Let $t\in [0,1/\lambda]$ then
\begin{align*}
I(t) &= \int_0^t e^{\int_\tau^t\left( \varphi_\textit{slow} + \varphi_\textit{fast} \right) ds}H\!F_1(\tau) d\tau\\
&=\int_0^t e^{\int_\tau^t\left( \varphi_\textit{slow} + \varphi_\textit{fast} \right) ds}\Big( (- V(\tau)^2) (\rho \lambda \tau) \sin(\omega \tau) + \frac{V(\tau) (\rho \lambda \tau)^2}{2} \cos(2\omega \tau) \\
& \hspace{2cm}-\frac{1}{3}(\rho \lambda \tau)^3 \sin^3(\omega \tau)+ (\rho \lambda \tau)\sin(\omega \tau) - \rho \lambda \sin(\omega \tau)\Big) d\tau.
\end{align*}
We use the identity $\sin^3 (x)= \frac{3}{4}\sin (x) -\frac{1}{4}\sin(3x)$ and Lemma \ref{integral_estimate_slow_varying} to obtain
\begin{align*}
|I(t)| & \leq  \frac{1}{\omega}\left(2 + K \max_{\tau\in [0 ,t]} \left|e^{\int_\tau^t\left( \varphi_\textit{slow} + \varphi_\textit{fast} \right) ds}\right|\right) \\
&\hspace{1cm}\times\Big( \max_{\tau\in [0 ,t]} \left|e^{\int_\tau^t\left( \varphi_\textit{slow} + \varphi_\textit{fast} \right) ds}  V(\tau)^2 (\rho \lambda \tau) \right|\\
& \hspace{2cm}+ \frac{1}{4}\max_{\tau\in [0 ,t]} \left|e^{\int_\tau^t\left( \varphi_\textit{slow} + \varphi_\textit{fast} \right) ds} V(\tau) (\rho \lambda \tau)^2\right|\\
&\hspace{2cm} +  \frac{5}{18}\max_{\tau\in [0 ,t]} |e^{\int_\tau^t\left( \varphi_\textit{slow} + \varphi_\textit{fast} \right) ds} (\rho \lambda \tau)^3|\\
& \hspace{2cm} + \max_{\tau\in [0 ,t]} |e^{\int_\tau^t\left( \varphi_\textit{slow} + \varphi_\textit{fast} \right) ds} (\rho \lambda \tau)|\\
&\hspace{2cm}
 + \max_{\tau\in [0 ,t]} |e^{\int_\tau^t\left( \varphi_\textit{slow} + \varphi_\textit{fast} \right) ds}\rho \lambda| \Big)\\
& + \frac{t}{\omega} \Big( \max_{\tau\in [0 ,t]} \left|e^{\int_\tau^t\left( \varphi_\textit{slow} + \varphi_\textit{fast} \right) ds}  (2 |V(\tau)| |\dot{V}(\tau)| (\rho \lambda \tau)  + |V(\tau)|^2 \rho \lambda )\right|\\
& \hspace{1cm} + \frac{1}{4}\max_{\tau\in [0 ,t]} \left|e^{\int_\tau^t\left( \varphi_\textit{slow} + \varphi_\textit{fast} \right) ds} (|\dot{V}(\tau)| (\rho \lambda \tau)^2 + 2 |V(\tau)| \rho^2 \lambda^2 \tau ) \right|\\
& \hspace{1cm}+  \frac{5}{18} \max_{\tau\in [0 ,t]} |e^{\int_\tau^t\left( \varphi_\textit{slow} + \varphi_\textit{fast} \right) ds} 3 \rho^3 \lambda^3 \tau^2|\\
&\hspace{1cm}
 + \max_{\tau\in [0 ,t]} |e^{\int_\tau^t\left( \varphi_\textit{slow} + \varphi_\textit{fast} \right) ds} \rho \lambda | \Big)\\
&\leq \frac{(1+t)C}{\omega},
\end{align*}
where $K=\max_{\tau\in[0,t]}|\varphi_\textit{slow} + \varphi_\textit{fast}|\leq 1+(\max_{\tau\in[0,t]}|V(\tau)|+|\rho|)^2$. In the previous estimate, $C$ is 
uniformly bounded independent of $t$ and $\omega\geq \omega_0$, because of Proposition \ref{prop_properties_pas_HF}, the observation that $\lambda t \leq 1$ for all $t \in [0,1/\lambda]$, and because equation \eqref{integral_varphis} and Condition \ref{conditionC3b} imply $e^{\int_\tau^t\varphi_\textit{slow} + \varphi_\textit{fast} ds}\leq e^{B/\omega}$ is uniformly bounded independent of $t$ and $\omega\geq \omega_0$.

\item {\bf Condition \ref{condition_lemma_3}:} Directly from Lemma \ref{integral_estimate_slow_varying} we obtain
\begin{align*}
    \sup\limits_{t\in [0, T]}\left|\int_0^t  e^{-\varepsilon\gamma (t-\tau)} H\!F_2(\tau) d\tau\right| &\leq \sup\limits_{t\in [0, T]}\left|\int_0^t  e^{-\varepsilon\gamma (t-\tau)} \varepsilon (\rho \lambda \tau) \sin(\omega \tau) d\tau\right|\\ 
    &\leq \frac{(1+T)C}{\omega}.
\end{align*}

\item {\bf Condition \ref{condition_lemma_4}:} The bound is $\varphi_2(t)$ is a direct consequence of the bound on $V$ provided by Proposition \ref{prop_properties_pas_HF}
\[
|\varphi_2(s)| \leq |\rho| (\lambda t) + \max_{s\in [0\,T]}|V(s)| \leq |\rho| + \max_{s\in [0\,T]}|V(s)| =: K_2.
\]
\end{itemize}
Since we just verified that conditions of Lemma \ref{lemma_linear_error_abs} are satisfied for $0 < \lambda \leq M^* = \min \{\lambda^*,M\}$, $T = 1/\lambda$, and $\omega\geq \omega_0$, for each fixed value of $\lambda\in [0,M^*]$ we can take the limit $\omega\to \infty$, uniformly in $t\in[0,T]$, in the estimate provided by Lemma \ref{lemma_linear_error_abs}, hence obtaining Condition \ref{conditionC3}. This concludes the proof of Proposition \ref{proposition_error_estimates_slope_HF}. 
\end{proof}

\subsection{Estimates for the approximation error, steering problem 2}\label{subsec_approximation_dc_term}

\begin{proposition}[Equation for the approximation error: steering problem 2]\label{defi_approximation_problem_DC}
Let $(v,w)$ be the solution of \eqref{FHN_hf_intro} with input current \eqref{input_current_slope_general2} and initial data $(v_0,w_0)$. Let $(V,W)$ be the solution of the partially averaged system \eqref{FHN_pas} (which has the same initial data $(v_0,w_0)$). Consider the approximation error given by $E_v = v -V - \rho \sin(\omega t)$, $E_w = w - W$, then $(E_v, E_w)$ satisfy 
\begin{equation}\label{error_equation}
\left\{
\begin{array}{rll}
\dot{E}_v  &=  \left(\varphi_\textit{slow} + \varphi_\textit{fast}\right)E_v + \varphi_2 E_v^2  -\frac{1}{3}E_v^3 -E_w+ H\!F_1 &, 0<t<\frac{1}{\delta},\\
\dot{E}_w &= \varepsilon( E_v - \gamma E_w) + H\!F_2&, 0<t<\frac{1}{\delta},\\
E_v(0)&= 0,\quad E_w(0) = 0,
\end{array}\right.
\end{equation}
where 
\begin{align}
\varphi_\textit{slow} &= 1-V^2 - \rho^2/2, \label{defi_varphi11}\\
\varphi_\textit{fast} &= \frac{\rho^2}{2} \cos(2\omega t) -2V \rho \sin(\omega t),\label{defi_varphi12}\\
\varphi_2 &= -(\rho \sin(\omega t) + V), \label{defi_varphi2}\\
H\!F_1 &=- V^2 \rho \sin(\omega t) + \frac{V \rho^2}{2} \cos(2\omega t)   -\frac{1}{3}\rho^3 \sin^3(\omega t)+ \rho\sin(\omega t), \label{defi_varphi3}\\
H\!F_2 &= \varepsilon \rho \sin(\omega t)\label{defi_HF2}.
\end{align}

\end{proposition}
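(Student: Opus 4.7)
The claim is just a direct computation: once we substitute $v = V + \rho\sin(\omega t) + E_v$, $w = W + E_w$ into the FHN equations and subtract the PAS equations, the coefficient of each power of $E_v$ must be read off. There is no real conceptual obstacle — the only thing to verify carefully is that the bookkeeping produces exactly the functions $\varphi_\textit{slow}, \varphi_\textit{fast}, \varphi_2, H\!F_1, H\!F_2$ displayed in \eqref{defi_varphi11}--\eqref{defi_HF2}.

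The plan is to proceed in four short steps. First, I would differentiate the ansatz for $E_v$ and use that for $t\in[0,1/\delta]$ the ramp satisfies $S(\delta t)=\delta t$, so the current is $I_2(t)=S(\delta t)I_0+\rho\omega\cos(\omega t)$. Subtracting \eqref{FHN_pas} from \eqref{FHN_hf_intro} gives
\[
\dot E_v=\dot v-\dot V-\rho\omega\cos(\omega t)=v-(1-\rho^2/2)V-\tfrac{1}{3}(v^3-V^3)-E_w,
\]
because the source terms $S(\delta t)I_0$ cancel and the $\rho\omega\cos(\omega t)$ in the current exactly cancels the derivative of $\rho\sin(\omega t)$.

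Second, I would expand the cubic. Writing $U=V+\rho\sin(\omega t)$, so that $v=U+E_v$, I get $v^3-V^3=(U^3-V^3)+3U^2E_v+3UE_v^2+E_v^3$, and
\[
U^3-V^3=3V^2\rho\sin(\omega t)+3V\rho^2\sin^2(\omega t)+\rho^3\sin^3(\omega t).
\]
Third, I would apply $\sin^2(\omega t)=\tfrac12-\tfrac12\cos(2\omega t)$ to the term $3V\rho^2\sin^2(\omega t)$ and likewise in $U^2=V^2+2V\rho\sin(\omega t)+\rho^2\sin^2(\omega t)$, which is what splits the expression into a slow piece (that must combine with the $-(1-\rho^2/2)V$ contribution) and a genuinely oscillatory piece. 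The bookkeeping check to watch is that the slow term $\tfrac12V\rho^2$ coming from $(U^3-V^3)/3$ cancels precisely the $\tfrac12\rho^2 V$ arising from $v-(1-\rho^2/2)V$; once this cancellation is made, collecting the terms multiplying $E_v^0$, $E_v^1$, $E_v^2$, $E_v^3$ yields exactly $H\!F_1$, $\varphi_\textit{slow}+\varphi_\textit{fast}$, $\varphi_2$, and $-\tfrac13$ as required.

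Fourth, for $\dot E_w$ the calculation is immediate: subtracting \eqref{FHN_pas} from \eqref{FHN_hf_intro} gives $\dot E_w=\varepsilon((v-V)-\gamma(w-W))=\varepsilon(E_v-\gamma E_w)+\varepsilon\rho\sin(\omega t)$, matching $H\!F_2$. The initial conditions $E_v(0)=E_w(0)=0$ follow because $(V,W)$ is chosen to start at $(v(0),w(0))$ and $\sin(0)=0$. The regularity $(E_v,E_w)\in C([0,1/\delta])\cap C^1((0,1/\delta))$ follows from the corresponding regularity of $(v,w)$ and $(V,W)$ by the well-posedness remark in the introduction. Since nothing in the derivation is more than algebraic substitution, the only point worth presenting in detail is the cancellation of the $\tfrac12V\rho^2$ terms in step three.
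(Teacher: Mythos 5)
Your computation is correct and is exactly the paper's (implicit) argument: subtract \eqref{FHN_pas} from \eqref{FHN_hf_intro}, note that the ramp terms $S(\delta t)I_0=\delta t I_0$ cancel for $t\le 1/\delta$, expand the cubic in $v=V+\rho\sin(\omega t)+E_v$, and use $\sin^2(\omega t)=\tfrac12-\tfrac12\cos(2\omega t)$ to separate slow and oscillatory parts, with the $\tfrac12 V\rho^2$ cancellation producing precisely \eqref{defi_varphi11}--\eqref{defi_HF2}. No gaps.
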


The following proposition gives us the tool necessary to prove the approximation result.

\begin{proposition}[Properties of system \eqref{FHN_pas}]\label{prop_properties_pas_DC} 
Let $(\varepsilon,\beta, \gamma,  \rho, I_0)$ satisfy Condition \ref{conditionC2} in Definition \ref{definition_condition_parameters} and let $\hat{\mu}, \hat{\delta} >0$ given by Proposition \ref{lemma_slow_steer}. 
There exists $\delta^* \leq \hat{\delta}$ and $\mu^*\leq \hat{\mu}$ such that for all $0<\delta \leq  \delta^*$ the corresponding solution $(V,W)$ of system \eqref{FHN_pas} with initial conditions satisfying 
\begin{equation}\label{smallness_initial_data_PAS2}
\|(V(0) - v_1, W(0)-w_1 )\|_2 \leq \mu^*,
\end{equation}
where $(v_1,w_1)$ is given by \eqref{definition_P1},  satisfies the following for some constants $C_1, C_2 >0$:
\begin{enumerate}[label=(\roman*)]
\item $\max\limits_{s\in[0, 1/\delta]} |V(s)| \leq C_1$,
\item $\max\limits_{s\in[0, 1/\delta]} |\dot{V}(s)| \leq C_4$,
\item $V(s)^2 >1 - \rho^2/2$.
\end{enumerate}
\end{proposition}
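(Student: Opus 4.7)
The result is the steering problem 2 counterpart of Proposition \ref{prop_properties_pas_HF}, and I would follow exactly the same three step strategy: build nested positively invariant rectangles for \eqref{FHN_pas} to obtain the uniform bound (i); deduce (ii) by plugging the bound back into the equation; and finally obtain (iii) by comparing $(V,W)$ to $(X_2,Y_2)$ via Proposition \ref{lemma_slow_steer}, using the uniform strict inequality $X_2(t)^2-1+\rho^2/2\geq c>0$ coming from Lemma \ref{properties_algebraic_system_DC}.

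For (i), the plan is to look for a rectangle $R=[-M,M]\times[-N,N]$ such that the vector field of \eqref{FHN_pas} points inward on $\partial R$. Compared with the construction for steering problem 1 the only novelty is that the coefficient of $V$ is now the \emph{constant} $1-\rho^2/2$ and an extra bounded forcing $S(\delta t)I_0$ of size at most $|I_0|$ appears in the $V$ equation. This forces the inward pointing conditions to become $M^3/3-(1-\rho^2/2)M-|I_0|>N$ on the vertical sides and $\gamma N>M+|\beta|$ on the horizontal ones. Writing $N=\theta M$, both reduce to
\[
\frac{|\beta|}{\gamma M}+\frac{1}{\gamma}<\theta<\frac{M^2}{3}-(1-\rho^2/2)-\frac{|I_0|}{M},
\]
which for any $\theta>1/\gamma$ can be satisfied for all $M$ sufficiently large. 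Hence the family $R_k=[-k,k]\times[-\theta k,\theta k]$, $k\geq k_0$, is a nested family of positively invariant rectangles whose union is $\R^2$, and since $(V(0),W(0))$ belongs to some $R_k$, (i) follows. Statement (ii) is then immediate: with $|V|$, $|W|$ and the forcing $S(\delta t)I_0$ all bounded, the first equation of \eqref{FHN_pas} yields a uniform bound for $\dot V$.

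For (iii) I would use the comparison result. Lemma \ref{properties_algebraic_system_DC} (i) gives $X_2(t)^2>1-\rho^2/2$ pointwise; combined with the continuity of $X_2$ and the fact that $X_2(t)=X_2(1/\delta)$ for $t\geq 1/\delta$, a standard compactness argument furnishes a constant $c>0$, depending only on $(\varepsilon,\beta,\gamma,\rho,I_0)$, such that $X_2(t)^2-1+\rho^2/2\geq c$ for all $t\geq 0$. Proposition \ref{lemma_slow_steer} gives, for $\mu\leq\hat\mu$ and $\delta\leq\hat\delta$, constants $\bar C,\bar C_1>0$ such that
\[
|V(t)-X_2(t)|^2+|W(t)-Y_2(t)|^2\leq \bar C\mu^2+\bar C_1\delta,\qquad t\in[0,1/\delta].
\]
I would then choose $\mu^*\leq\hat\mu$ and $\delta^*\leq\hat\delta$ so small that $\bar C(\mu^*)^2+\bar C_1\delta^*\leq c/2$. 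The elementary inequality $|V|^2\geq |X_2|^2-2|X_2|\,|V-X_2|$ together with the uniform bound on $X_2$ (also from Lemma \ref{properties_algebraic_system_DC}) would then give $V(t)^2-1+\rho^2/2\geq c/2>0$ for every $t\in[0,1/\delta]$, which is (iii).

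The argument is essentially a transcription of the proof of Proposition \ref{prop_properties_pas_HF}, with the role of the slowly varying linear coefficient taken over by the slowly varying inhomogeneous term $S(\delta t)I_0$. The only place where one must be careful is in propagating the bound $|I_0|$ through the positively invariant rectangle computation; I do not anticipate genuine new difficulties since the stability hypothesis Condition \ref{conditionC2} has already been packaged into Proposition \ref{lemma_slow_steer} and Lemma \ref{properties_algebraic_system_DC}.
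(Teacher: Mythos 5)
Your proposal is correct and takes essentially the same route as the paper, which itself only remarks that the proof is analogous to that of Proposition \ref{prop_properties_pas_HF}; your adaptation correctly handles the one new ingredient, namely propagating the bounded forcing $S(\delta t)I_0$ through the invariant-rectangle computation, and your use of Proposition \ref{lemma_slow_steer} together with Lemma \ref{properties_algebraic_system_DC} for item (iii) matches the paper's argument (indeed, your explicit use of $|V|^2\geq |X_2|^2-2|X_2|\,|V-X_2|$ makes the final step slightly more careful than the paper's, modulo choosing the smallness threshold for $\bar C(\mu^*)^2+\bar C_1\delta^*$ accordingly).
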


The proof of Proposition \ref{prop_properties_pas_DC} is analogous to the proof of Proposition \ref{prop_properties_pas_HF}.

\begin{proposition}\label{proposition_error_estimates_slope_DC}
Suppose that the parameters ($\varepsilon$, $\beta$,  $\gamma$, $\rho$, $I_0$) and let $M, \omega_0>0$ satisfy Condition \ref{conditionC2} in Definition \ref{definition_condition_parameters} and Condition \ref{conditionC4b} in Proposition \ref{proposition_parameters}. Let $\mu^*,\delta^* >0$ be given by Proposition \ref{prop_properties_pas_DC}. Then for all $0<\delta \leq \min\{\delta^*, M\}$ and $T= 1/\delta$, if the corresponding solution $(V,W)$ of \eqref{FHN_pas} satisfy \eqref{smallness_initial_data_PAS2} then the functions $\varphi_\textit{slow}$, $\varphi_\textit{fast}$, $\varphi_2$, $H\!F_1$, $H\!F_2$ given by 
Proposition \ref{defi_approximation_problem_DC}
satisfy the hypothesis of Lemma \ref{lemma_linear_error_abs}. Therefore, Condition \ref{conditionC4} is satisfied.
\end{proposition}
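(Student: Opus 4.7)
The plan is to mirror the strategy used in the proof of Proposition \ref{proposition_error_estimates_slope_HF} and verify each of the four hypotheses of Lemma \ref{lemma_linear_error_abs} for the functions $\varphi_\textit{slow},\varphi_\textit{fast},\varphi_2,H\!F_1,H\!F_2$ given in Proposition \ref{defi_approximation_problem_DC}, with $T=1/\delta$ and $\omega\geq \omega_0$ chosen as in Condition \ref{conditionC4b}. Once the hypotheses are checked, Lemma \ref{lemma_linear_error_abs} yields $\|(E_v,E_w)\|_2\leq C/\omega$ uniformly on $[0,1/\delta]$, and taking $\omega\to\infty$ gives Condition \ref{conditionC4}.

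The crucial input in all four checks is Proposition \ref{prop_properties_pas_DC}, which supplies uniform bounds $\max_{s\in[0,1/\delta]}|V(s)|\leq C_1$, $\max_{s\in[0,1/\delta]}|\dot V(s)|\leq C_2$, and the strict stability bound $V(s)^2>1-\rho^2/2$, provided that $0<\delta\leq \delta^*$ and the initial data satisfy \eqref{smallness_initial_data_PAS2}. With these in hand, for condition \ref{condition_lemma_1} I would first bound
\[
\max_{s\in[0,1/\delta]}\varphi_\textit{slow}(s)\leq -\min_{s\in[0,1/\delta]}\!\left(|V(s)|^2-1+\rho^2/2\right)=:-K_\textit{slow}<0,
\]
and then apply Lemma \ref{integral_estimate_slow_varying} to each oscillatory term $\tfrac{\rho^2}{2}\cos(2\omega s)$ and $-2\rho V(s)\sin(\omega s)$ in $\varphi_\textit{fast}$, exactly as in equation \eqref{integral_varphis} in the HF case; this gives
\[
\int_\tau^t(\varphi_\textit{slow}+\varphi_\textit{fast})\,ds\leq \left(-K_\textit{slow}+\tfrac{A}{\omega}\right)(t-\tau)+\tfrac{B}{\omega},
\]
with $A,B$ depending only on $\rho$ and the uniform bounds of $V,\dot V$. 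Integrating the exponential over $[0,t]$ and dividing by $\gamma$ produces the same type of expression appearing in hypothesis \ref{condition_approximation_DC} of Definition \ref{defi_hypotheses}, and hence Condition \ref{conditionC4b} directly forces it to be below $1/2$ for $\omega\geq \omega_0$.

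For condition \ref{condition_lemma_2}, I would substitute the explicit expression \eqref{defi_varphi3} for $H\!F_1$, use the identity $\sin^3 x=\tfrac{3}{4}\sin x-\tfrac{1}{4}\sin(3x)$ to split every term into pure oscillations at frequencies $\omega$, $2\omega$, $3\omega$ multiplied by slowly varying factors (polynomials in $V$), and then apply Lemma \ref{integral_estimate_slow_varying} to each piece; the exponential weight $e^{\int_\tau^t(\varphi_\textit{slow}+\varphi_\textit{fast})ds}$ is uniformly bounded by the estimate above, and the slowly varying factors are controlled by Proposition \ref{prop_properties_pas_DC}, giving a bound of order $1/\omega$ (no extra $T$-growth is needed here since $\rho$ is not multiplied by a ramp, which actually makes this case cleaner than the HF one). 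Condition \ref{condition_lemma_3} is obtained identically by integrating $e^{-\varepsilon\gamma(t-\tau)}\varepsilon\rho\sin(\omega\tau)$ via Lemma \ref{integral_estimate_slow_varying}. Finally, condition \ref{condition_lemma_4} is immediate since $|\varphi_2|\leq |\rho|+\max_{s\in[0,1/\delta]}|V(s)|=:K_2$.

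The main obstacle, as in the HF case, is producing a constant $A/\omega$ perturbation of $-K_\textit{slow}$ that keeps the exponential decay intact uniformly on the growing interval $[0,1/\delta]$; this is precisely what hypothesis \ref{condition_approximation_DC} (and hence Condition \ref{conditionC4b}) was designed to guarantee, so the work reduces to bookkeeping once the proposition is set up this way. With all four conditions of Lemma \ref{lemma_linear_error_abs} verified, the lemma yields $\omega$-independent constants and a $C/\omega$ estimate, so for each fixed $\delta\in(0,\min\{\delta^*,M\}]$, letting $\omega\to\infty$ gives $\sup_{t\in[0,1/\delta]}\|(E_v(t),E_w(t))\|_2\to 0$, which is precisely Condition \ref{conditionC4}.
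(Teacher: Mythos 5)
Your proposal is correct and follows exactly the route the paper takes: the paper's own proof is simply the remark that the argument is the HF proof of Proposition \ref{proposition_error_estimates_slope_HF} with $\lambda t$ replaced by $1$ and Proposition \ref{prop_properties_pas_HF} replaced by Proposition \ref{prop_properties_pas_DC}, which is precisely what you carry out. One small quibble: your parenthetical that ``no extra $T$-growth is needed'' in checking condition \ref{condition_lemma_2} is not quite right, since the $\frac{t-a}{\omega}\max|f'|$ term of Lemma \ref{integral_estimate_slow_varying} still produces a factor $(1+T)$ exactly as in the HF case, but this is harmless because Lemma \ref{lemma_linear_error_abs} permits constants $C_1(T,\omega_0)$ and the limit $\omega\to\infty$ is taken at fixed $\delta$, hence fixed $T=1/\delta$.
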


The proof of this proposition is analogous to the proof of Proposition \ref{proposition_error_estimates_slope_HF} (with $\lambda t =1$ and replacing Proposition \ref{prop_properties_pas_HF} with Proposition \ref{prop_properties_pas_DC}.

\subsection{Proof of Theorem \ref{thm_slow_steering1}}\label{subsec_proof_thm1}

Let ($\varepsilon$, $\gamma$, $\beta$, $\rho$) satisfy Condition \ref{conditionC1} and Condition \ref{conditionC3} in Definition \ref{definition_condition_parameters}. Let $(v_0,w_0)\in \R^2$ be given by the solution of \eqref{definition_P0}, and let $\eta>0$. For each $\lambda >0$ let $(V,W)$ be the corresponding solution of the partially averaged system \eqref{PAS_slope_freq} (with initial data $(v(0),w(0))$, and let $(X_1,Y_1)$ be the corresponding solution of \eqref{PAS_slope_freq_instantaneous}. 

Because of Condition \ref{conditionC1} system \eqref{PAS_slope_freq} is autonomous and stable around $(v_1, w_1)$ given by \eqref{definition_P1} for $t > 1/\lambda$. The averaging theorem \cite[Theorem 10.4]{khalil2013} tells us that given $\eta>0$ there exist $\hat{\omega}>0$ and $\eta_1 > 0 $ such that for all $\omega\geq \hat{\omega}$ if 
\begin{equation}\label{condition_approximation_averaging}
\left\|\left(v(1/\lambda ) - \rho\sin(\omega /\lambda ) - V(1/\lambda ), w(1/\lambda ) - W(1/\lambda ) \right)\right\|_2 \leq \eta_1,
\end{equation}
then we can guarantee that for all time $t > 1/\lambda$
\[
\left\|\left(v(t ) - \rho\sin(\omega t) - V(1/\lambda ), w(t) - W(1/\lambda) \right)\right\|_2 \leq \eta.
\]

Second, for time $0\leq t \leq 1/\lambda$, by Proposition \ref{prop_slope_freq} there exists $ \lambda_1, \hat{\mu}, C, C_1>0$, so that if $\lambda \leq \lambda_1$ and the initial condition satisfy
\[
\mu =  \|(V(0) - v_0,W(0) - w_0)\|_2 \leq \hat{\mu},
\]
then $\sup_{t\in[0,1/\lambda]}\|(V(t) - X_1(t), W(t) - Y_1(t)\|_2^2  \leq C \mu^2 + C_1 \lambda$. We obtain that by choosing $\mu^* \leq \hat{\mu}$ and $\lambda_2 \leq \lambda_1$ so that $C (\mu^*)^2 + C_1 \lambda_2\leq (\eta_1/2)^{2}$, we can guarantee that if $\lambda \leq \lambda_2$, and $\mu \leq \mu^*$, then we have 
\[
\|(V(t) - X_1(t), W(t) - Y_1(t))\|_2  \leq \eta_1/2, \quad  \forall t\in [0,1/\lambda]. 
\]
Third, because of Condition \ref{conditionC3} we know there exits  $M > 0$ such that for each $0<\lambda \leq M$ there exists $\omega_0 = \omega_0(\lambda) \geq\hat{\omega}$ so that for all $\omega \geq \omega_0$ 
\[
\sup_{t\in[0,1/\lambda]}\|(E_v(t), E_w(t))\|_2\leq \eta_1/2,
\]
where $E_v = v- V- \rho \lambda t \sin(\omega t)$, $E_w = w- W$.

Combining our estimates, we obtain that for $ \lambda \leq \lambda^*=\min\{\lambda_2,M\}$, $\mu \leq \mu^*$, $\omega \geq \omega_0(\lambda)$, and all $t \in [0,1/\lambda]$
\begin{align*}
I &= \|\left(v(t)-\rho\lambda t \sin(\omega t) - X(t), w - Y(t)\right)\|_2\\
&\leq \|(v(t)-\rho \lambda t \sin(\omega t) - V(t), w(t) - W(t))\|_2\\
& \hspace{1cm}+ \|\left(V(t) - X_1(t), W(t) - Y_1(t)\right)\|_2\\
&= \|(E_v(t), E_w(t))\|_2 + \|(V(t) - X_1(t), W(t) - Y_1(t))\|_2 \\
&\leq \eta_1/2 +\eta_1/2= \eta_1.
\end{align*}
This means that \eqref{condition_approximation_averaging} is satisfied, and therefore the estimate is valid for all $t>0$. This concludes the proof of Theorem \ref{thm_slow_steering1}.\qed

\subsection{Proof of Theorem \ref{thm_slow_steering2}}\label{subsec_proof_thm2}
Let ($\varepsilon$, $\gamma$, $\beta$, $\rho$,  $I_0$) satisfying Condition \ref{conditionC2} and Condition \ref{conditionC4} in Definition \ref{definition_condition_parameters}. Let $(v_1,w_1)\in \R^2$ be given by the solution of \eqref{definition_P1}, and let $\eta >0$. For each $\delta>0$ Let $(V,W)$ be the solution of the corresponding partially averaged system \eqref{FHN_pas} (with initial $(v(0),w(0))$ and $(X_2,Y_2)$ be the corresponding solution of \eqref{instantaneous_system}. We follow a similar approach to the proof of Theorem \ref{thm_slow_steering1}.

First, for $t > 1/\delta$ thanks to the averaging theorem, and because of Condition \ref{conditionC2} system \eqref{FHN_pas} is autonomous and stable around $(v_2, w_2)$ given by \eqref{definition_P2} for $t > 1/\delta$, we know that given $\eta>0$ there exists $\hat{\omega}>0$ and $\eta_1 > 0 $ such that for all $\omega\geq \hat{\omega}$ if 
\[
\|(v(1/\delta) - \rho \sin(\omega /\delta) - X_2(1/\delta), w(1/\delta)  - Y_2(1/\delta))\|_2 \leq \eta_1,
\]
then for all time $t > 1/\delta$ we have
\begin{equation}\label{condition_approximation_averaging2}
\|(v(t) - \rho \sin(\omega t) - X_2(1/\delta), w(t)  - Y_2(1/\delta))\|_2 \leq \eta.
\end{equation}
Second, for $0<t<1/\delta$ by Proposition \ref{lemma_slow_steer} there exits $\delta_1, \hat{\mu}, C, C_1>0$ such that if $0<\delta< \delta_1$, and the initial condition satisfies
\[
\mu = \left\|(V(0) - v_1,W(0) - w_1)\right\|_2  \leq \hat{\mu}, 
\]
then $\sup_{t\in[0,1/\delta]} \left\|(V(t) - X_1(t),W(t) - Y_1(t))\right\|_2^2 \leq C \mu^2 + C_1 \delta$. We obtain that by choosing $\mu^*\leq \hat{\mu}$ and $\delta_2\leq\delta_1$ such that $C (\mu^*)^2 + C_1 \delta_2\leq (\eta_1/2)^{2}$ then we can guarantee that if $\delta \leq \delta_2$, and $\mu \leq \mu^*$, then we have 
\[
\|(V(t) - X_2(t), W(t) - Y_2(t)\|_2  \leq \eta_1/2, \quad  \forall t\in [0,1/\delta]. 
\]
Third, because of Condition \ref{conditionC4} we know there exits  $M>0$ such that for each $0<\delta \leq M$, there exists $\omega_0 = \omega_0(\delta) \geq\hat{\omega}$ so that for all $\omega \geq \omega_0$ 
\[
\sup_{t\in[0,1/\delta]}\|(E_v(t), E_w(t))\|_2\leq \eta_1/2,
\]
where $E_v = v- V- \rho \sin(\omega t)$, $E_w = w- W$. 
Combining our estimates, we obtain that for $ \delta \leq \delta^*=\min\{\delta_2,M\}$, $\mu \leq \mu^*$, $\omega \geq \omega_0(\delta)$, and all $t \in [0,1/\delta]$,
\begin{align*}
I &= \|\left(v(t)-\rho\sin(\omega t) - X_2(t), w - Y_2(t)\right)\|_2 \\&\leq \|(v(t)-\rho\sin(\omega t) - V(t), w(t) - W(t))\|_2\\
& \hspace{1cm} + \|\left(V(t) - X_2(t), W(t) - Y_2(t)\right)\|_2\\
&\leq \|(E_v, E_w)\|_2  + \|(V(t) - X_2(t), W(t) - Y_2(t))\|_2 \\
&\leq \eta_1/2 + \eta_1/2 = \eta_1.
\end{align*}
This implies that \eqref{condition_approximation_averaging2} is satisfied, and therefore, the estimate is valid for all $t>0$. This concludes the proof of Theorem \ref{thm_slow_steering2}.\qed

\section{Numerical Experiments}\label{section_numerical_simulations}

We performed two numerical experiments to illustrate the dependence on the parameters for the slope given by  Theorem \ref{thm_slow_steering1} and Theorem \ref{thm_slow_steering2}. 
All simulations were carried out using the Myokit package (v1.35.4) \cite{Clerx2016Myokit} for Python (v3.11.5), and all source code is available at
\url{https://github.com/estebanpaduro/qs-simulations}. For both experiments, unless stated otherwise, we used $\varepsilon=0.08$, $\gamma=0.5$, $\beta = 0.8$. In addition, note that what we termed the ``amplitude'' parameter $\rho$ in the averaged system corresponds to an oscillatory source of frequency $\omega$ and amplitude $\rho \omega$ for $\omega\gg 1$ as in Equation \eqref{current_no_slope}. Action potentials were detected using two conditions: 1) $V(t)$ exceeded a threshold value, $v^* = 1$, and 2) the prominence, \textit{i.e.}, the difference between baseline and maximum peak value, was greater than 1.

\subsection{Experiment 1: Slope of the envelope of the HFBS} 
Consider the averaged FHN system \eqref{PAS_slope_freq}, which corresponds to the partial averaging of system \eqref{FHN_hf_intro} with a HFBS \eqref{input_current_slope_general1}. On each simulation, we set a value of $\beta$ and determined if action potentials were generated for different values of the amplitude of the HFBS, $\rho$, and the slope of the envelope of the HFBS, $\lambda$. In all cases, the initial conditions were those of the equilibrium \eqref{definition_P0}, and there was no DC component, \textit{i.e.}, $I_0=0$. Figure \ref{fig:FHN_HF} summarizes the results of Experiment 1. For $\beta = 0.75$, there was a well-defined region in the ($\lambda$, $\rho$) plane for which an onset action potential was observed for $0<t<200$ (Figure \ref{figure2a}). In fact, for sufficiently low amplitude ($\rho <\sim 0.44$), no action potentials were elicited regardless of the slope of the HFBS. However, for higher amplitudes, onset activation could be avoided using a sufficiently gradual slope. For instance, for $\rho = 0.6$, onset response was observed for $\lambda = 0.9$, but not for $\lambda = 0.04$ (Figure \ref{figure2b}). Moreover, this effect was dependent on the adaptation parameter $\beta$. Figure \ref{figure2c} shows the interface between the region with onset action potentials and the region without action potentials, as observed in Figure \ref{figure2a}, for different values of $\beta$. Each boundary curve was obtained by searching the smallest value of $\rho$ such that an onset action potential was observed for each value of $\lambda$. We chose a range of values for $\beta$ where condition \ref{conditionC1} is satisfied, \textit{i.e.}, $\beta\in [0.65, 0.9]$ with a step size of $\Delta_\beta = 0.025$. This result indicates that, given the amplitude of the HFBS, it is possible to modulate its envelope to avoid onset action potentials. Our findings are consistent with prior theoretical and experimental work and suggest that a careful tuning of the HFBS may minimize onset responses in a manner strongly dependent on the properties of the neuron model. Consequently, conduction block applications should consider the electrophysiological characteristics of target neurons for optimal design of waveforms that avoid onset responses.

\begin{figure}[!ht]
\centering
\subfloat[]{\label{figure2a}\includegraphics[width=0.37\textwidth]{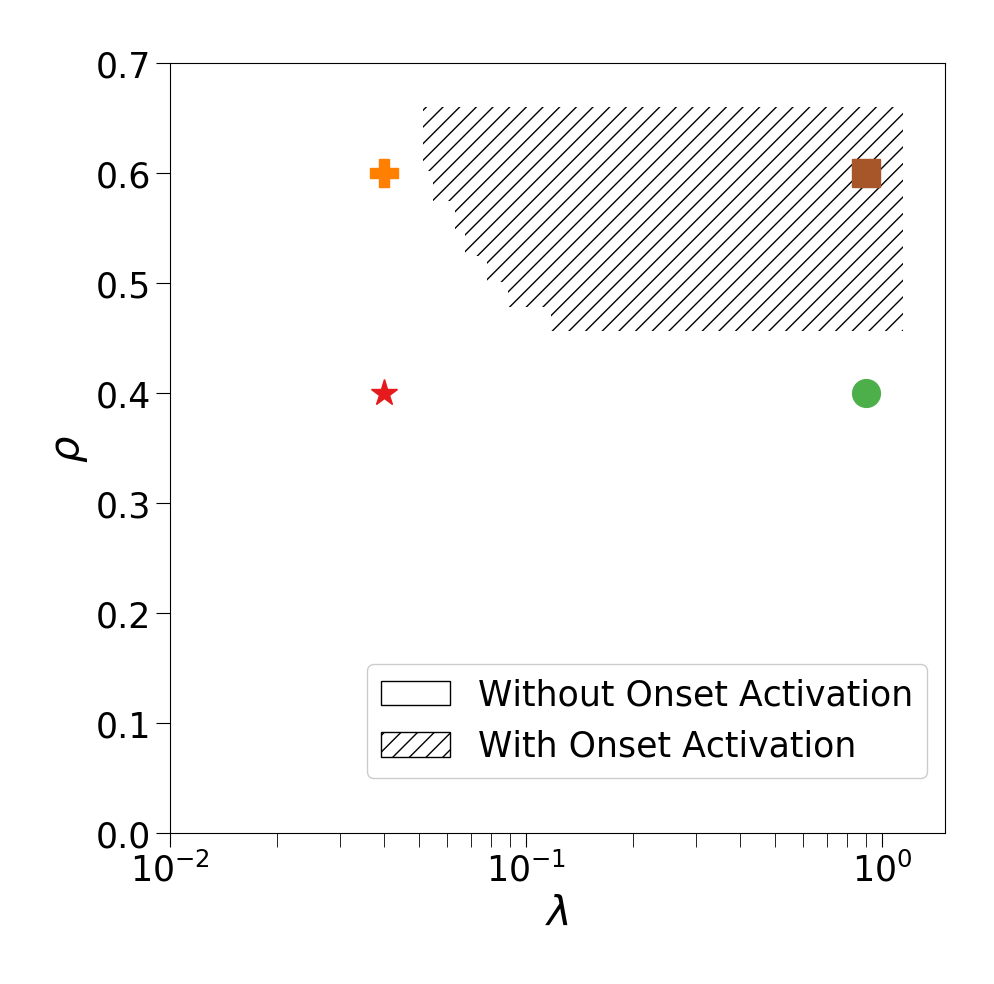}}\hfill
\subfloat[]{\label{figure2b}\includegraphics[width=0.62\textwidth]{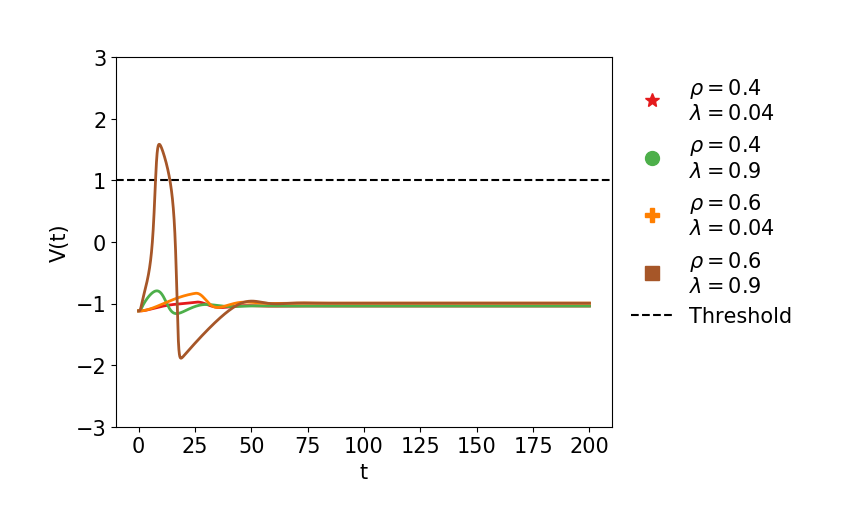}}\\
\subfloat[]{\label{figure2c}\includegraphics[width=0.5\textwidth]{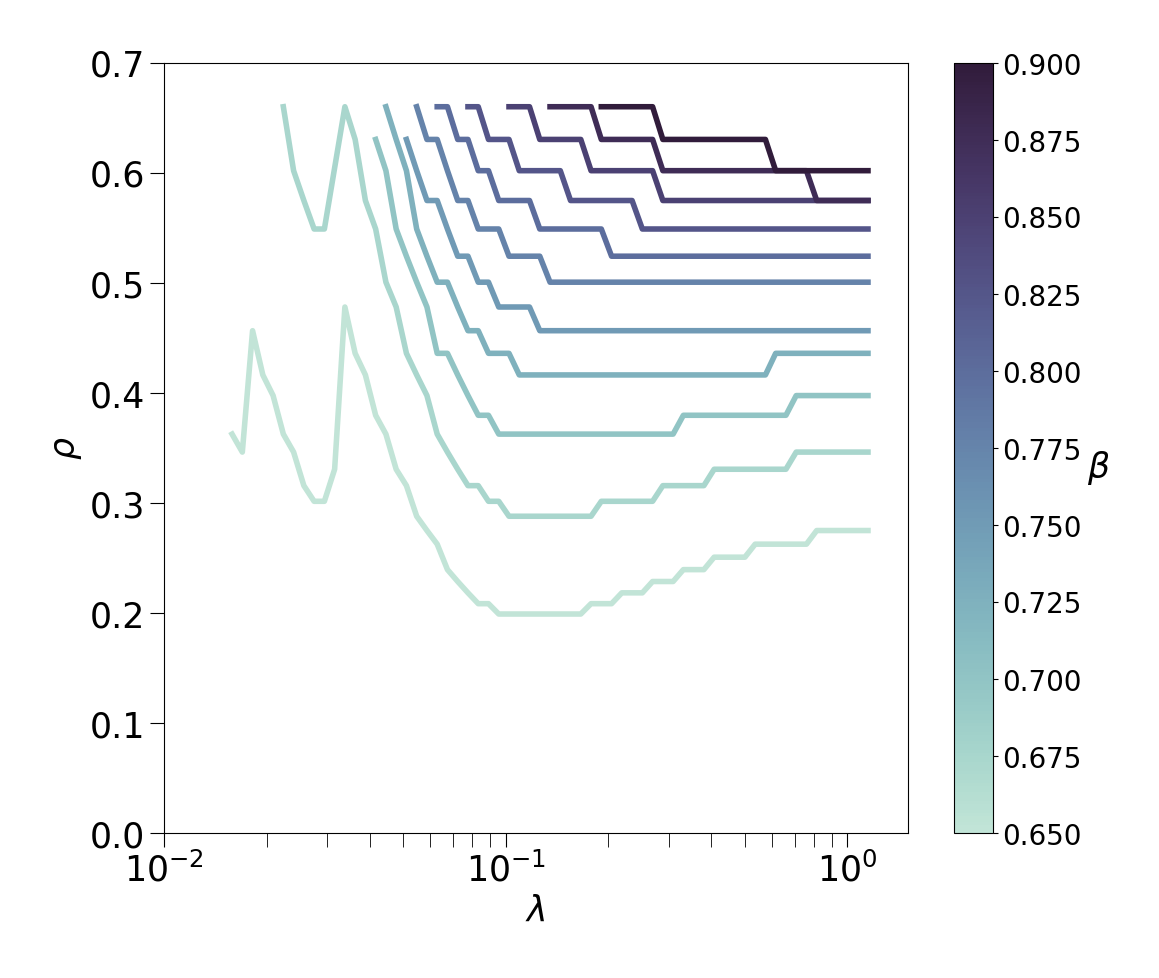}}
\caption{Experiment 1, Slope of the envelope of the HFBS. (a) Region of action potential generation for a given combination of $\rho$ and $\lambda$.  In this case, $\beta = 0.75$. (b) Examples of the recorded potential, $V(t)$, for different pairs of $\rho$ and $\lambda$ for the averaged system \eqref{PAS_slope_freq} as indicated in (a). (c) Boundary curve that separates the region with and without onset response, respectively, for different values of $\beta$. Note that in (a) and (c), the $\lambda$ axis is logarithmically spaced.
}\label{fig:FHN_HF}
\end{figure}

\subsection{Experiment 2: Incorporating the DC term with a ramp} Consider the FHN system \eqref{FHN_pas} corresponding to the partial averaging of system \eqref{FHN_hf_intro} with an input current of the form \eqref{input_current_slope_general2}. In this case, we ignored the transient effects of the HFBS input by letting $T_{w} = 100$, and then we incorporated a ramp in the DC term with slope $\delta$. On each simulation, we set the DC component to a value of $I_0$, which was reached after a ramp of slope $\delta$, and after initiating the HFBS of amplitude $\rho$. In all cases, the initial conditions were those of the equilibrium \eqref{definition_P1}, and we selected ranges of the parameters to observe regions with and without action potentials in $t\in[0,500]$. Figure \ref{fig:FHN_DC} summarizes the results of Experiment 2. Similar to Experiment 1, for a small DC component ($I_0=0.2$), we observed well-defined regions with and without onset response, and the onset response exhibited a single action potential in this case (Figure \ref{figure3a}). Further, depending on the value of $\rho$, the stationary value attained in $V(t)$ may change (Equation \eqref{definition_P2}), but only in specific cases noticeable (for example, $(\rho,\delta) = (0.5,0.3)$) we observed an onset action potential because the slope parameter $\delta$ was too large relative to $\rho$ (Figure \ref{figure3b}). In addition, for a larger DC component ($I_0 = 0.4$), we observed three different regions with distinct qualitative behavior: one without onset action potentials, a second one for which a single onset action potential was elicited, and third a region for which $\rho$ was too small with respect to the current $I_0$ and therefore persistent excitation was observed (Figure \ref{figure3c} and \ref{figure3d}). In the latter case, we note that no value of $\delta$ can avoid the generation of action potentials. The shape and location of these regions were strongly dependent on the amplitude of the DC component (Figure \ref{figure3e}). Similar to Experiment 1, for each value of $I_0$, we defined a boundary curve for which, given a value of $\delta$, we seek the largest value of $\rho$ such that at least one action potential was elicited. These findings are consistent with Theorem \ref{thm_slow_steering2} since for pairs $(I_0, \rho)$ to which no oscillatory behavior is observed, we can always find a suitable small value for the slope parameter $\delta$ so that all onset action potentials can be avoided. 

\begin{figure}[!htp]
\centering
\subfloat[]{\label{figure3a}\includegraphics[width=0.35\textwidth]{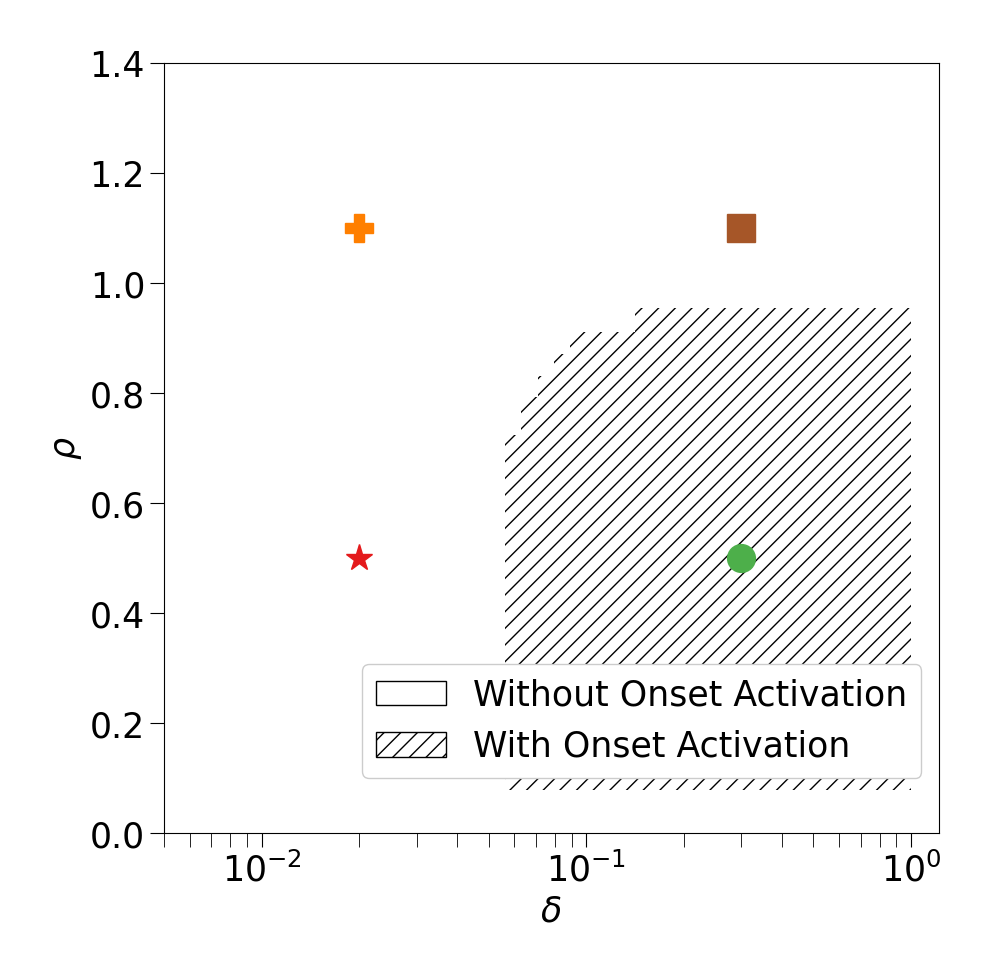}}
\subfloat[]{\label{figure3b}\includegraphics[width=0.55\textwidth]{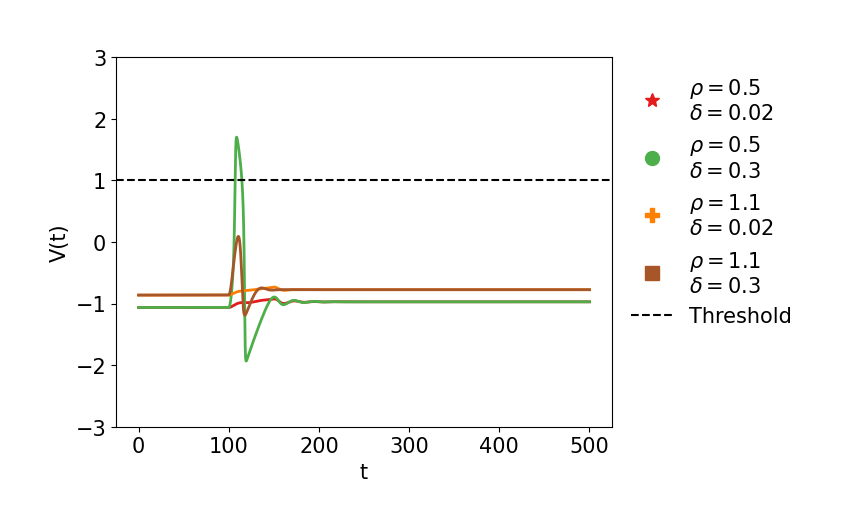}}\\
\subfloat[]{\label{figure3c}\includegraphics[width=0.35\textwidth]{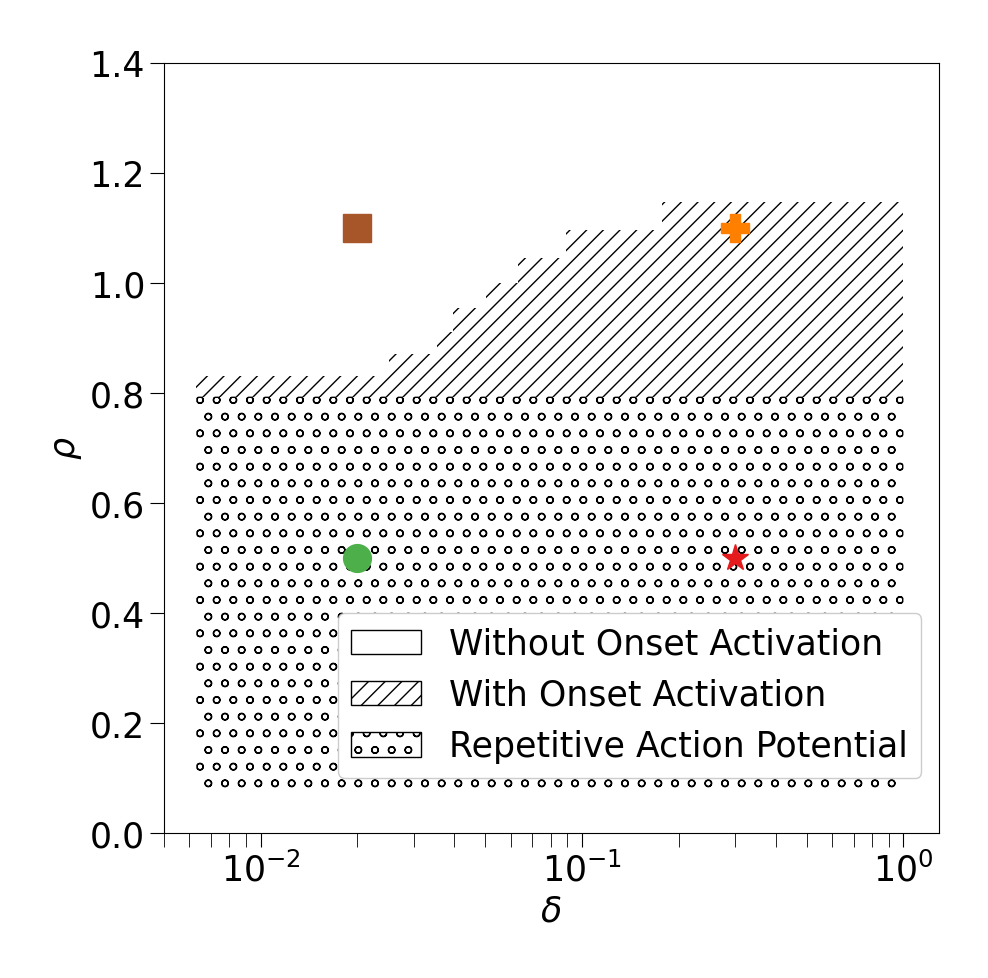}}
\subfloat[]{\label{figure3d}\includegraphics[width=0.55\textwidth]{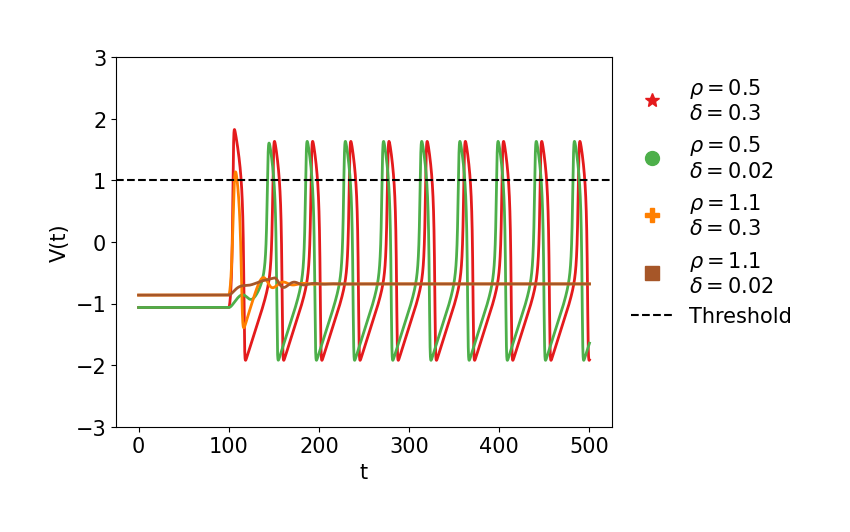}}\\
\centering
\vspace{-2mm}
\subfloat[]{\label{figure3e}\includegraphics[width=0.39\textwidth]{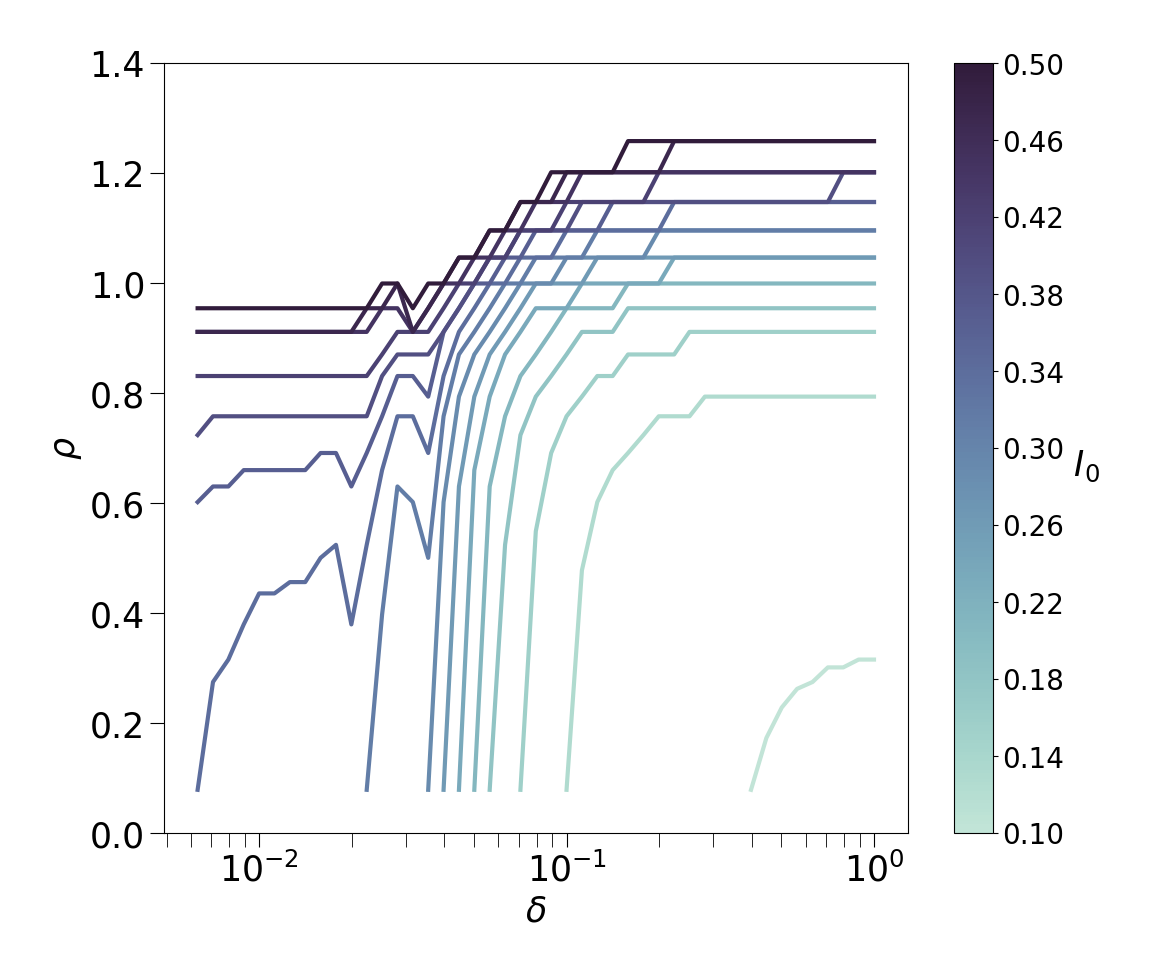}}\\
\vspace{-2mm}
\caption{Experiment 2, Incorporating the DC term with a ramp. (a) and (c) Regions of action potential generation and absence for a given combination of $\rho$ and $\delta$, and for (a) $I_0 = 0.2$ and (c) $I_0=0.4$. In (c), we identified an additional region where persistent excitation was observed. (b) and (d) Particular instances of the solutions for (b) $I_0 = 0.2$ and (d) $I_0 = 0.4$, as indicated in (a) and (c), respectively. (e) Boundary curve separating the regions where no activation was observed and the region where either a single action potential or persistent activation was observed for different values of DC input.
Note that measurements started at $t=100$ in all cases to avoid transient effects due to the HFBS.  Additionally, in Subfigures (a), (c), and (e), the $\delta$ axis is logarithmically spaced.
}\label{fig:FHN_DC}
    
\end{figure}

\appendix
\section{Appendix: Technical proofs.}\label{section_appendix}

\subsection{Derivation of the partially averaged system for steering problem 1}\label{appendix_derivation_pas_slope_HF}
Consider system \eqref{FHN_hf_intro} with input current of the form  \eqref{input_current_slope_general1} with  $\omega \gg 1$. For $0 < t < \frac{1}{\lambda}$, we consider the system 
\begin{equation*}
\left\{
\begin{array}{rl}
\dot{v}&= v -v^3/3 - w +\lambda t \rho \omega \cos(\omega t),\\
\dot{w}&= \varepsilon(v- \gamma w + \beta),\\
v(0) &= v_0,\quad w(0) = w_0.
\end{array}\right.
\end{equation*}
To obtain the averaged system, we use the change of variables $\tilde{v}= v - \lambda t \rho \sin(\omega t), ~\tilde{w} = w$. Next, substituting in the equations for $v$ and $w$, we can write $\dot{\tilde{v}} = f_1(\tilde{v},\tilde{w},t)$ with
\begin{align*}
f_1(\tilde{v},\tilde{w},t) &= (\tilde{v} +  \lambda t \rho \sin(\omega t)) - (\tilde{v} +  \lambda t \rho \sin(\omega t) )^3/3 -\tilde{w},\\
&= (1 - \lambda^2 t^2 \rho^2 /2 )\tilde{v}-\tilde{v}^3/3  -\tilde{w} + \tilde{v} X_1(t) + \tilde{v}^2 X_2(t) + X_3(t),
\end{align*}
where
\begin{equation*}
    X_1(t) = \frac{\lambda^2 t^2 \rho^2}{2} \cos(2\omega t ) ,\quad X_2(t)= -\lambda t \rho \sin(\omega t) ,
\end{equation*}    
\begin{equation*}
    X_3(t) =  \lambda t \rho \sin(\omega t)- (\lambda t \rho \sin(\omega t))^3/3.
\end{equation*}
Analogously, for the second equation, we can write $\dot{\tilde{w}} = f_2(\tilde{v},\tilde{w},t)$ with
\begin{align*}
    f_2(\tilde{v},\tilde{w},t)&=\varepsilon(\tilde{v}-\gamma \tilde{w} + \beta)  + X_4(t),
\end{align*}
where 
\begin{equation*}
    X_4(t) = \varepsilon\lambda t \rho \sin(\omega t).
\end{equation*}
Now, consider the averaging of the right-hand side with respect to the time variable over the interval $[t-\pi/\omega,t+\pi/\omega]$. It is easy to see that $\frac{\omega}{2\pi}\int_{t-\pi/\omega}^{t+\pi/\omega} X_i(s)ds = O(1/\omega)$, which implies 
\begin{equation*}
\hat{f}_1(\tilde{v},\tilde{w},t) =\frac{\omega}{2\pi}\int_{t-\pi/\omega}^{t+\pi/\omega}f_1(\tilde{v},\tilde{w},\tau)d\tau =(1-\lambda^2 t^2 \rho^2/2)\tilde{v} - \tilde{v}^3/3 -\tilde{w} +O(1/\omega),
\end{equation*}
and
\begin{equation*}
\hat{f}_2(\tilde{v},\tilde{w},t) = \frac{\omega}{2\pi}\int_{t-\pi/\omega}^{t+\pi/\omega} f_2(\tilde{v},\tilde{w},\tau)d\tau = \varepsilon (\tilde{v}-\gamma \tilde{w}+\beta)  +O(1/\omega).
\end{equation*}
Using those averaged right-hand side $\hat{f}_1$ and $\hat{f}_2$, and dropping the terms $O(1/\omega)$, we obtain system \eqref{PAS_slope_freq}. For times $t > 1/\lambda$, the derivation is similar, substituting the source $\lambda t \rho \omega \cos(\omega t)$ by $\rho \omega \cos(\omega t)$. In this case, the terms $O(1/\omega)$ are identically zero.

\subsection{Derivation of the partially averaged system for steering problem 2}\label{appendix_derivation_pas_slope_DC}
Consider system \eqref{FHN_hf_intro} with input current of the form  \eqref{input_current_slope_general2} with  $\omega \gg 1$. For $0 < t < \frac{1}{\delta}$, we consider the system 
\begin{equation*}
\left\{
\begin{array}{rl}
\dot{v}&= v -v^3/3 - w +\rho \omega \cos(\omega t)+ \delta t I_0,\\
\dot{w}&= \varepsilon(v- \gamma w + \beta),\\
v(0) &= v_1,\quad w(0) = w_1.
\end{array}\right.
\end{equation*}
Using the change of variables  $\tilde{v} = v - \rho \sin(\omega t)$, $\tilde{w} = w$ we obtain 
\begin{equation*}
\left\{
\begin{array}{rll}
\dot{\tilde{v}}&= \tilde{v} +\rho \sin(\omega t) -(\tilde{v}+\rho \sin(\omega t))^3/3 - \tilde{w} + \delta t I_0 &=: f_1(\tilde{v},\tilde{w},t),\\
\dot{\tilde{w}}&= \varepsilon(\tilde{v}- \gamma \tilde{w} + \beta) + \varepsilon \rho \sin(\omega t) &=: f_2(\tilde{v},\tilde{w},t) .
\end{array}\right.
\end{equation*}
Let us consider the averaging of the right-hand side on $[t-\pi/\omega,t+\pi/\omega]$
\begin{equation*}
\hat{f}_1(\tilde{v},\tilde{w},t) =\frac{\omega}{2\pi}\int_{t-\pi/\omega}^{t+\pi/\omega}f_1(\tilde{v},\tilde{w},\tau)d\tau =(1-\rho^2/2)\tilde{v} - \tilde{v}^3/3 -\tilde{w} + \delta t I_0  +O(1/\omega),
\end{equation*}
and
\begin{equation*}
\hat{f}_2(\tilde{v},\tilde{w},t) = \frac{\omega}{2\pi}\int_{t-\pi/\omega}^{t+\pi/\omega} f_2(\tilde{v},\tilde{w},\tau)d\tau = \varepsilon (\tilde{v}-\gamma \tilde{w}+\beta)  +O(1/\omega).
\end{equation*}
Using those averaged right-hand side $\hat{f}_1$ and $\hat{f}_2$, we obtain the partially averaged system \eqref{FHN_pas}. For time $t > 1/\delta$, the derivation is similar by substituting the source $\rho \omega \cos(\omega t) + \delta t I_0$ by $\rho \omega \cos(\omega t) + I_0$. In this case, the terms $O(1/\omega)$ are identically zero.

\subsection{Proof of Proposition \ref{proposition_parameters}}\label{app_subsection_choice_parameters}
\noindent

Given $\alpha$, $\sigma \in [0,1]$ we look for conditions such that the system 
\begin{equation}\label{parameter_condition_joint}
\begin{cases}
\dot{v} &= (1-\alpha^2 \rho^2/2) v - v^3/3 - w + \sigma I_0,  \\
\dot{w} &= \varepsilon\left( v-\gamma w + \beta \right),
\end{cases}
\end{equation}
has a unique stable equilibrium $(v_0,w_0)$. The coordinate $v$ of the equilibrium can be obtained as the solution to the equation 
\begin{equation}\label{cubic_problem1}
f(v) = v^3 - 3(1-\alpha^2\rho^2/2-1/\gamma)v + 3\beta/\gamma - 3\sigma I_0 = 0.
\end{equation}
Since \eqref{cubic_problem1} is a depressed cubic, the condition for the uniqueness of the real root can be written as
\begin{equation*}
-\left(1-\alpha^2 \rho^2/2-\frac{1}{\gamma}\right)^3 +\frac{1}{4} \left( 3  \frac{\beta}{\gamma}  -3\sigma I_0\right) ^2> 0.
\end{equation*}
This gives us hypothesis \ref{condition_parameters}. For the stability condition,  we linearize \eqref{parameter_condition_joint} around an equilibrium $(v_0, w_0)$ to get the linear system defined by the matrix
\begin{equation*}
L(v_0,w_0) = \left(\begin{array}{cc}
1- \alpha^2 \rho^2/2 - v_0^2 & -1\\
\varepsilon &  - \varepsilon \gamma
\end{array}\right).
\end{equation*}
Therefore, the conditions for stability become
\begin{equation*}
\tr(L(v_0,w_0)) = 1- v_0^2 - \alpha^2 \rho^2/2 - \varepsilon \gamma <0, \,\, \det(L(v_0,w_0)) = \varepsilon \gamma ( v_0^2 - 1 + \alpha^2 \rho^2/2 + \frac{1}{\gamma} ) >0,
\end{equation*}
which can be combined in a more compact expression $v_0^2 > 1-\alpha^2 \rho^2/2- \min\{\varepsilon\gamma, 1/\gamma\}$. At this point, we can see that the requirements \eqref{stronger_stability_1} and \eqref{stronger_stability_2} are indeed stronger stability conditions since they imply the stability of the system. For the stronger condition to be satisfied, we will use that, imposing the strict inequality
\[
v_0^2 >  M = 1- \alpha^2 \rho^2/2,
\]
is enough, since the parameters $\alpha$, $\sigma \in [0,1]$ take values in a compact set. This condition is automatically satisfied when $M\leq 0$. Otherwise, we have to make this condition more explicit. Using \eqref{cubic_problem1}, since $f(v) \to -\infty$ as $v \to -\infty$, and $f(v)\to \infty$ as  $v\to \infty$, and  since $I_0 < \beta/\gamma$ the intermediate value theorem tell us that $v_0 <0$, and therefore the condition
\[
v_0 < -\sqrt{M} = -\sqrt{1-\alpha^2 \rho^2/2},
\]
is equivalent to $f(-\sqrt{M}) > 0$, which gives us condition \ref{condition_parameters1}
\begin{equation*}
M = 1-\alpha^2 \rho^2/2 \leq 0,  \text{ or }\quad 
M^{3/2} + 3 \beta/\gamma -3 \sigma I_0 >3(1-\alpha^2 \rho^2/2 - 1/\gamma)M^{1/2}.
\end{equation*}

Lastly, we note that hypotheses \ref{condition_parameters} and \ref{condition_parameters1} depend on $\alpha$ and $\sigma$. They can be specialized to our cases of interest by noticing that Condition \ref{conditionC1} in Definition \ref{definition_condition_parameters} is equivalent to \ref{condition_parameters} and \ref{condition_parameters1} for $\alpha\in [0,1], \sigma = 0$. Condition \ref{conditionC2} in Definition \ref{definition_condition_parameters} is equivalent to \ref{condition_parameters} and \ref{condition_parameters1} for $\alpha=1, \sigma \in[0,1]$. \qed

\subsection{Proof of Proposition \ref{proposition_easy_conditions}}\label{appendix_proposition_easy_conditions}

Fix values of $\varepsilon>0$, $I_0$ and  $\rho$. We will show that there exist values of $\gamma_0$, $h_0$ such that if condition \eqref{easy_condition} is satisfied, then hypotheses \ref{condition_parameters}, \ref{condition_parameters1} in Definition \ref{defi_hypotheses} are satisfied. This is done by considering stronger but easier-to-verify conditions.

For hypothesis \ref{condition_parameters}, we observe that we can verify instead the stronger condition
\[
3 \frac{\beta}{\gamma }-3 \sigma I_0 \geq 0,\text{ and }
\frac{1}{4} \left( 3  \frac{\beta}{\gamma}  -3 \sigma I_0\right) ^2> \left|1-\alpha^2 \rho^2/2-\frac{1}{\gamma}\right|^3,
\]
which leads us to a condition in $\beta / \gamma$
\[
\frac{\beta}{\gamma} > \sigma  I_0  + \frac{2}{3} \left|1-\alpha^2\rho^2/2-\frac{1}{\gamma}\right|^{3/2}. 
\]
To simplify the computations, we can further assume that $\frac{1}{\gamma} \leq 1$, which allows us to write an even stronger condition, but it is independent of $\alpha$, $\sigma\in [0,1]$, with a right-hand side which is independent on $\gamma$
\[
\frac{\beta}{\gamma} >  \max\{I_0,0\}  + \frac{2}{3} \left|1+\rho^2/2+1\right|^{3/2} =: h_1. 
\]
This tells us that given any values of $\varepsilon>0$, $I_0\in \R$ and  $\rho\in \R$, whenever $\beta/\gamma > h_1$ and $\gamma \geq 1$ then the hypothesis \ref{condition_parameters} is satisfied. 

For hypothesis \ref{condition_parameters1}, we first notice that if $1-\rho^2/2 \leq0$, we are done. Otherwise, we can assume that  $M(\alpha) = 1- \alpha^2 \rho^2/2 >0$ for some values of $\alpha$. If that is the case, let $\alpha\in[0,1]$ such that $M(\alpha)>0$, we verify the stronger condition 
\[
3 \frac{\beta}{\gamma } - 3 \sigma I_0 > 3 |1- \alpha^2 \rho^2/2 - 1/\gamma | M ^{1/2}.
\]
Since we want a condition that is independent of $\alpha$, $\sigma \in [0,1]$, we use that $0<M(\alpha) \leq 1$ and $|M(\alpha) - 1/\gamma| \leq M(\alpha) + \frac{1}{\gamma} \leq 2$, because $\gamma \geq 1$, which lead us to the following stronger condition 
\[
\frac{\beta}{\gamma } > \max\{I_0,0\}   + 2 =:h_2,
\]
which tells us that, whenever $\beta/\gamma > h_2$ and $\gamma \geq 1$ then hypothesis \ref{condition_parameters1} is satisfied. So far, we have proven that whenever
\[
\beta/\gamma > \max\{h_1,h_2\} =: h_0, \text{ and }\gamma \geq 1,
\]
both \ref{condition_parameters} and \ref{condition_parameters1} are satisfied. For the second part of the proof, we need the following Lemma
\begin{lemma}\label{lemma_estimate_parameters_appendix}
Suppose that $\rho^2 < \frac{4}{3}$, $\frac{\beta}{\gamma}\geq \max\{I_0,0\}$. Then given $\eta>0$, there exists $\tilde{\gamma}>0$ such that for any $\gamma\geq \tilde{\gamma}$ we have 
\begin{equation}\label{additional_condition_parameters}
\left(1+\frac{1}{\eta\gamma}\right)^{1/2}\left( 2-\frac{3}{2}\rho^2 - \frac{3+1/\eta}{\gamma}\right) + 3\frac{\beta}{\gamma} - 3 \sigma I_0>0.
\end{equation}
\end{lemma}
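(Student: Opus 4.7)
The plan is to split the left-hand side of \eqref{additional_condition_parameters} into two pieces, handle each separately, and show that for $\gamma$ sufficiently large the dominant contribution is strictly positive.

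First I would observe that the quantity $3\beta/\gamma - 3\sigma I_0$ is nonnegative under the stated hypothesis. Indeed, for $\sigma \in [0,1]$ (inherited from the context of Condition \ref{conditionC2}), if $I_0 \leq 0$ then $\sigma I_0 \leq 0 \leq \beta/\gamma$ since $\beta,\gamma > 0$; and if $I_0 > 0$ then $\sigma I_0 \leq I_0 \leq \beta/\gamma$ by the assumption $\beta/\gamma \geq \max\{I_0,0\}$. So this piece contributes a nonnegative quantity, and it is enough to show the first product is strictly positive for $\gamma$ large.

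For the first product, the factor $(1+1/(\eta\gamma))^{1/2}$ is strictly positive for every $\gamma > 0$, so the sign is determined by the bracket
\[
2 - \tfrac{3}{2}\rho^2 - \tfrac{3+1/\eta}{\gamma}.
\]
The hypothesis $\rho^2 < 4/3$ gives $2 - \tfrac{3}{2}\rho^2 > 0$, so this bracket is a continuous function of $1/\gamma$ that tends to the positive limit $2 - \tfrac{3}{2}\rho^2$ as $\gamma \to \infty$. Solving for when it becomes positive yields the explicit threshold
\[
\gamma > \frac{3 + 1/\eta}{2 - \tfrac{3}{2}\rho^2}.
\]

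I would therefore choose $\tilde{\gamma} := \max\bigl\{1,\, (3+1/\eta)/(2-\tfrac{3}{2}\rho^2) + 1\bigr\}$ (the $+1$ just to ensure strict inequality), so that for every $\gamma \geq \tilde{\gamma}$ both factors in the first product are strictly positive while $3\beta/\gamma - 3\sigma I_0 \geq 0$, and the sum is strictly positive. There is no real obstacle here: the lemma is a direct asymptotic estimate, and the only thing to be a bit careful about is tracking where the hypothesis $\rho^2 < 4/3$ is used (namely, to ensure the limit of the bracket is positive) and to note that the hypothesis on $\beta/\gamma$ is exactly what makes the DC contribution nonnegative uniformly in $\sigma \in [0,1]$.
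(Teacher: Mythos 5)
Your proposal is correct and follows essentially the same argument as the paper: both isolate the bracket $2-\tfrac{3}{2}\rho^2-\tfrac{3+1/\eta}{\gamma}$, use $\rho^2<4/3$ to make it positive for $\gamma$ large (the paper's choice $2-\tfrac{3}{2}\rho^2>\tfrac{1}{\tilde\gamma}(3+1/\eta)$ is exactly your explicit threshold), and use $\beta/\gamma\geq\max\{I_0,0\}$ to make $3\beta/\gamma-3\sigma I_0$ nonnegative uniformly in $\sigma\in[0,1]$. Your case analysis on the sign of $I_0$ is a slightly more explicit justification of that last step than the paper gives, but the route is identical.
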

\begin{proof}
First notice that $A = 2 - \frac{3}{2}\rho^2 =\frac{3}{2} (\frac{4}{3} - \rho^2) > 0$. Next given $\eta > 0$, we choose $\tilde{\gamma}>0$ large enough so that
\[
A = 2 - \frac{3}{2}\rho^2 > \frac{1}{\tilde{\gamma}}\left(3+\frac{1}{\eta}\right),
\]
with that choice, we can guarantee that for all  $\gamma \geq \tilde{\gamma}$ and $\alpha\in[0,1]$ we have
\[
\left( 2-\frac{3}{2}\alpha^2\rho^2 - \frac{1}{\gamma}\left(3+\frac{1}{\eta}\right)\right) >0,
\]
and because $\frac{\beta}{\gamma} \geq \max\{I_0,0\}$ this implies \eqref{additional_condition_parameters} 
for all $\sigma\in[0,1]$.
\end{proof}
Next, we will borrow an argument from \cite{cerpaApproximationStabilityResults2023a}, to show that under the same conditions as Lemma \ref{lemma_estimate_parameters_appendix}, given $\eta>0$ there exists $\tilde{\gamma}>0$ such that for all $\gamma \geq \tilde{\gamma}$ the solution $(X_1, Y_1)$ of \eqref{PAS_slope_freq_instantaneous_alpha} satisfy
\begin{equation}\label{estimate_dependence_gamma}
0< \frac{1}{\min_{\alpha\in[0,1]}\left(X_1(\alpha)^2-1+ \rho^2 \alpha^2/2\right)}\leq \eta \gamma.
\end{equation}
As an intermediate step, we first show the following
\begin{equation}\label{bounds_equilibrium}
- \max\{\sqrt{3},\beta\} \leq X_1(\alpha) < -\sqrt{1+ \frac{1}{\eta \gamma}}.
\end{equation}
To obtain this bound, we use that the solution of \eqref{PAS_slope_freq_instantaneous_alpha} can be obtained by solving the following cubic equation
\begin{equation}\label{defi_polinomial_h}
f(x) = x^3 - 3(1-\alpha^2\rho^2/2-1/\gamma)x + 3\beta/\gamma - 3\sigma I_0  = 0.
\end{equation}
Because of \ref{condition_parameters} and \ref{condition_parameters1}, Equation \eqref{defi_polinomial_h} has a unique solution. To bound such a solution, call it $x = x_0$; we use that because the intermediate value theorem tells us that $a<x_0<b$ if and only if $f(a)<0<f(b)$. Next, given $\eta >0$, let us compute
\begin{align*}
f\left(-\left(1+\frac{1}{\eta\gamma}\right)^{1/2}\right) &= -\left(1+\frac{1}{\eta\gamma}\right)^{3/2} + 3\left(1-\alpha^2\rho^2/2- \frac{1}{\gamma}\right)\left(1+\frac{1}{\eta\gamma}\right)^{1/2}+ 3\frac{\beta}{\gamma} - 3 \sigma I_0 \notag,\\
&= \left(1+\frac{1}{\eta\gamma}\right)^{1/2}\left( 2-\frac{3}{2}\alpha^2 \rho^2 - \frac{3+1/\eta}{\gamma}\right) + 3\frac{\beta}{\gamma}  - 3 \sigma I_0,
\end{align*}
Because of Lemma \ref{lemma_estimate_parameters_appendix} we know that this quantity is positive for $\rho^2 < \frac{4}{3}$, $\frac{\beta}{\gamma} \geq \max\{I_0,0\}$, $\gamma \geq \tilde{\gamma}$, $\alpha\in[0,1]$, $\sigma \in[0,1]$. Hence $\left(- \sqrt{1+\frac{1}{\eta\gamma}}\right)>X_1(\alpha)$, giving the upper bound in \eqref{bounds_equilibrium}. To establish the lower bound, we evaluate $f$ for some special values, 
\begin{align*}
f(-\sqrt{3})&= -\frac{3}{\gamma} (\sqrt{3} -\beta) - \frac{3\sqrt{3}}{2}\alpha^2\rho^2 - 3 \sigma I_0,\\
f(-\beta)&= -\beta(\beta-\sqrt{3})(\beta+\sqrt{3}) - \frac{3}{2}\beta \alpha^2\rho^2 - 3 \sigma I_0.
\end{align*}
Because $I_0 \geq 0$, we can guarantee that if
$\beta < \sqrt{3}$ then $f(-\sqrt{3})<0$ and if $\beta \geq \sqrt{3}$ we get $f(-\beta)\leq 0$, for all values of $\alpha,\sigma \in[0,1]$. This implies $v_0\geq\min\{-\beta,-\sqrt{3}\}$, which gives us the second part of \eqref{bounds_equilibrium}. The next step is to show that \eqref{bounds_equilibrium} implies \eqref{estimate_dependence_gamma}. Because \eqref{bounds_equilibrium} tells us that $X_1(\alpha) <0$, we can take squares to obtain
\[
\frac{1}{\eta \gamma} + \frac{\alpha^2 \rho^2}{2} \leq X_1(\alpha)^2 -1 + \frac{\alpha^2 \rho^2}{2} \leq
\max\{\sqrt{3},\beta\}^2 - 1 + \frac{\alpha^2 \rho^2}{2},
\]
taking the minimum in $\alpha\in[0,1]$ and dividing we obtain
\[
0 < \frac{1}{\max\{\sqrt{3},\beta\}^2 - 1  } \leq  \frac{1}{\min_{\alpha\in[0,1]} \left(X_1(\alpha)^2 -1 + \frac{\alpha^2 \rho^2}{2}\right)}  \leq \gamma\eta ,
\]
which gives us \eqref{estimate_dependence_gamma}. We are finally in good standing to verify hypotheses  \ref{condition_approximation_HF} and \ref{condition_approximation_DC} are nonempty. This is done by finding appropriate values for $\omega_0>0$ and $\gamma_0\geq 1$ for which their hypotheses are valid. To do this, let $\eta = \frac{1}{8e}$, then applying previous result we can guarantee that for some $\gamma_0\geq 1$, the solution $(X_1, Y_1)$ of \eqref{PAS_slope_freq_instantaneous_alpha} satisfy that if $\rho^2 < \frac{4}{3}$, $\frac{\beta}{\gamma} > \max\{I_0,0,h_0\}$, $\gamma \geq \gamma_0$, $\alpha\in[0,1]$, $\sigma \in[0,1]$ then 
\[
0 < \frac{1}{\min_{\alpha\in[0,1]} \left(X_1(\alpha)^2 -1 + \frac{\alpha^2 \rho^2}{2}\right)} \leq \eta = \frac{1}{8 e}.
\]
Because of the continuity, for some $M>0$ we have that whenever $\sup_{\alpha\in[0,1]}|X_1(\alpha) - h(\alpha)| < M$ then 
\[
0<\frac{1}{\min_{\alpha\in[0,1]} \left(
h(\alpha)^2 -1 + \frac{\alpha^2 \rho^2}{2}\right)} < \frac{1}{4e}.
\]
For the remainder of the proof, we fix the values of the parameters $\beta$, $\gamma$ (the parameters $\varepsilon$, $\rho$, $I_0$ are fixed in the statement of the proposition). Because of hypothesis \ref{condition_parameters} we know from Proposition \ref{prop_slope_freq} that there exits $\hat{\lambda}$, $\hat{\mu}>0$ such that for all $0<\lambda < \hat{\lambda}$ if the initial data satisfy 
\[
\mu = \|(V(0) - v_0, W(0)-w_0 )\|_2 \leq \mu^*,
\]
then for all $t\in [0,1/\lambda]$ get that 
\[
|V(t) - X_1(t)|^2 + |W(t) - Y_1(t)|^2 \leq C \mu^2 + C_1 \lambda,
\]
and therefore by choosing $\tilde{\mu}\leq \hat{\mu}$, $\tilde{\lambda}\leq \hat{\lambda}$ so that  $C \tilde{\mu}^2 + C_1 \tilde{\lambda} \leq M^2$ we can guarantee that for all $0<\lambda <\tilde{\lambda}$ and $\mu \leq \tilde{\mu}$ we have
\[
\sup_{\alpha\in[0,1]}|X_1(\alpha) - V(\alpha)| \leq M,
\]
which implies that 
\begin{equation}\label{estimate_quotient_V}
0<\frac{1}{\min_{s\in[0,1/\lambda]} \left(
V(s)^2 -1 + \frac{s^2 \rho^2}{2}\right)} < \frac{1}{4e}.
\end{equation}
Next, look at the condition in hypothesis \ref{condition_approximation_HF}
\begin{equation*}
0< I = \frac{1}{\gamma}\exp\left(\frac{\rho^2}{2\omega} + \frac{4 |\rho|}{\omega} \max_{s\in[0,\frac{1}{\lambda}]}|V(s)|\right) \left(\min_{s\in[0,\frac{1}{\lambda}]}\left(|V(s)|^2 + (\rho \lambda s )^2/2  -1\right)- \frac{1}{\omega}g(\lambda,\rho)\right)^{-1} <\frac{1}{2},
\end{equation*}
where $g(\lambda,\rho)= \frac{\rho^2 \lambda}{2} + 2|\rho| \lambda\max_{ s\in[0, \frac{1}{\lambda}]}|V(s)| + 2 |\rho| \max_{s\in[0,\frac{1}{\lambda}]}|\dot{V}(s)|$. Because of Proposition \ref{prop_properties_pas_HF} we know that there exists $\lambda^* \leq \tilde{\lambda}$, and $\mu^*\leq \tilde{\mu}$ such that for all $0<\lambda <\lambda^*$ and $\mu\leq\mu^*$ the quantities $\max_{s\in[0, \frac{1}{\lambda}]}|V(s)|$ and $\max_{s\in[0, \frac{1}{\lambda}]}|\dot{V}(s)|$ are bounded independently of $\omega>0$ and
\[
\min_{s\in [0, \frac{1}{\lambda}]}\left(|V(s)|^2 + (\rho \lambda s )^2/2-1\right) >0.
\]
Then we can take $\omega \geq \omega_0(\varepsilon, \beta,\gamma, \rho, I_0)$ large enough to guarantee that 
\begin{itemize}
\item[(i)] $\frac{\rho^2}{2\omega} + \frac{4 |\rho|}{\omega} \max_{s\in[0, 1/\lambda]}|V(s)| < 1$, 
\item[(ii)] $\frac{1}{\omega}g(\lambda,\rho)\leq \frac{1}{2}\left(\min_{s\in [0, \frac{1}{\lambda}]}|V(s)|^2 + (\rho \lambda s )^2/2-1\right)$.
\end{itemize}
With that choice of $\omega$ and estimate \eqref{estimate_quotient_V} we obtain
\[
0< I \leq \frac{2e }{\gamma} \frac{1}{\min_{s\in[0,1/\lambda]}(|V(s)|^2-1 + (\rho\lambda s)^2/2)}  < \frac{1}{2},
\]
which gives us hypothesis \ref{condition_approximation_HF}.
The corresponding conditions to verify Hypothesis \ref{condition_approximation_DC} are obtained analogously. This concludes the proof of Proposition \ref{proposition_easy_conditions}. \qed

\bibliographystyle{habbrv}
\bibliography{bibliography}
\end{document}